\newcommand{\dr}{\mathbf{r}}
\newcommand{\dl}{\mathbf{l}}
\newcommand{\mpg}[1]{\mathcal{#1}}
\newcommand{\bfn}[1]{\overline{#1}}
\newcommand{\nfr}[2][]{#2^{#1}_{\dr}}
\newcommand{\nfl}[2][]{#2^{#1}_{\dl}}
\newcommand{\nset}[1]{[#1]}
\newcommand{\eve}{\exists}
\newcommand{\Real}{\mathbb{R}}
\newcommand{\adam}{\forall}
\newcommand{\Nat}{\mathbb{N}}
\newcommand{\Natm}[1]{\nset{0, #1}}
\newcommand{\play}{\pi}
\newcommand{\stra}[2]{\mathrm{Str}_{#1}(#2)}
\newcommand{\indplay}[3]{\play^{#1, #2}_{#3}}
\newcommand{\indplaya}[2]{\play^{#1}_{#2}}
\newcommand{\denoplay}[2]{\llbracket #1 \rrbracket_{#2}}
\newcommand{\denoentry}[2]{\llbracket #1 \rrbracket_{#2}}
\newcommand{\denosdigest}[1]{\llparenthesis#1 \rrparenthesis}
\newcommand{\defeq}{\coloneqq}
\newcommand{\const}[1]{\star_{#1}}
\newcommand{\seqcomp}{\mathbin{;}}
\newcommand{\trsyb}{\mathrm{tr}}
\newcommand{\trace}[4]{\trsyb_{#1;#2, #3}(#4)}
\newcommand{\tdip}[5]{\mathrm{tdp}_{#1;#2,#3}(\mpg{#4},#5)}
\newcommand{\lplus}[2]{#2^{\downarrow #1}}
\newcommand{\compMPG}{\mathrm{CompMPG}}
\newcommand{\sybpg}[3]{\mathrm{PG}(#1, #2, #3)}
\newcommand{\oMPG}{\mathbf{oMPG}}
\newcommand{\roMPG}{\mathbf{roMPG}}
\newcommand{\roPlay}{\mathbf{roPG}}
\newcommand{\Id}[1]{\mathcal{I}_{#1}}
\newcommand{\Swap}[2]{\mathcal{S}_{#1, #2}}
\newcommand{\Int}{\mathrm{Int}}
\newcommand{\fpmonad}{T}
\newcommand{\Sets}{\mathbf{Set}}
\newcommand{\Ord}{\mathbf{Ord}}
\newcommand{\fpsemCat}{\mathbb{S}^{\mathbb{P}}_{\dr}}
\newcommand{\frsemCat}{\mathbb{S}_{\dr}}
\newcommand{\fsemCat}{\mathbb{S}}
\newcommand{\meagersymb}{\mathrm{M}}
\newcommand{\mpsemCat}{\mathbb{S}^{\meagersymb,\mathbb{P}}_{\dr}}
\newcommand{\mrsemCat}{\mathbb{S}^{\meagersymb}_{\dr}}
\newcommand{\fpwpFunctor}{\mathcal{W}^{\mathrm{P}}_{\dr}}
\newcommand{\frwpFunctor}{\mathcal{W}_{\dr}}
\newcommand{\fwpFunctor}{\mathcal{W}}
\newcommand{\mpwpFunctor}{\mathcal{W}^{\meagersymb,\mathrm{P}}_{\dr}}
\newcommand{\mrwpFunctor}{\mathcal{W}^{\meagersymb}_{\dr}}
\newcommand{\minfunc}{\mathcal{P}^{\mathrm{min}}} % powerset-like functor from Posets to Sets
\newcommand{\maxfunc}{\mathcal{P}^{\mathrm{max}}} % powerset-like functor on Posets
\newcommand{\uncompfunc}{\mathcal{M}}
\newcommand{\kleisli}[2]{\mathit{K}\!\ell(#2)}
\newcommand{\swapsyb}{\sigma}
\newcommand{\swap}[2]{\swapsyb_{#1, #2}}
\newcommand{\id}{\mathrm{id}}
\newcommand{\powerfuncf}{\mathcal{P}_{\mathrm{f}}} % finite powerset functor on Sets
\renewcommand{\cref}[1]{\Cref{#1}}
\crefname{theorem}{Thm.}{Theorems}
\crefname{definition}{Def.}{Defs}
\crefname{proposition}{Prop.}{Props}
\crefname{lemma}{Lem.}{Lemmas}
\crefname{proof}{Proof.}{Proofs}
\crefname{appendix}{Appendix}{Appendixes}
\crefname{figure}{Fig.}{Figs}
\Crefname{equation}{}{}
\newcommand{\myparagraph}[1]{\vspace{.3em}\noindent\textbf{#1}\;}
\newcommand{\myvspin}[1][-0.3em]{\vspace{#1}} % this is for "before" but used in the environment
\newcommand{\myvspaf}[1][-0.4em]{\vspace{#1}}
\newcommand{\myvsplistaf}[1][-0.4em]{\vspace{#1}}
\newcommand{\myvspmathbef}[1][-1.em]{\\[#1]} % if "the value of this" + 0.5em = "the value of \myvspmathaf", then the two vertical spaces are the same.
\newcommand{\myvspmathaf}[1][.5em]{\\[#1]} % with 0.5em, the vertical space is the same as usual line space
\newenvironment{myminipage}{\begin{minipage}{\textwidth}}{\end{minipage}}
\newcommand{\myvspfigcapbef}[1][-1em]{\vspace{#1}}
\newcommand{\omitatsubm}[1]{} % omit at submittion; and maybe revert at camera-ready
\title{Compositional Solution of Mean Payoff Games by String Diagrams
% An Efficient Compositional Algorithm for Mean Payoff Games Derived Using Category Theory
} %TODO Please add
\titlerunning{Compositional Solution of Mean Payoff Games by String Diagrams
%An Efficient Compositional Algorithm for Mean Payoff Games Using Category Theory
} %TODO optional, please use if title is longer than one line
\author{Kazuki Watanabe}{National Institute of Informatics and SOKENDAI, Tokyo, Japan \and \url{https://group-mmm.org/~kazuki/} }{kazukiwatanabe@nii.ac.jp}{https://orcid.org/0000-0002-4167-3370}{}%TODO mandatory, please use full name; only 1 author per \author macro; first two parameters are mandatory, other parameters can be empty. Please provide at least the name of the affiliation and the country. The full address is optional
\author{Clovis Eberhart}{
%Japanese-French Laboratory of Informatics, IRL 3527, CNRS, Japan \and
National Institute of Informatics and JFLI (IRL 3527, CNRS), Tokyo, Japan
%\and \url{https://group-mmm.org/~eberhart/}
}{eberhart@nii.ac.jp}{https://orcid.org/0000-0003-3009-6747}{}
\author{Kazuyuki Asada}{Tohoku University, Japan \and \url{https://www.riec.tohoku.ac.jp/~asada/} }{asada@riec.tohoku.ac.jp}{https://orcid.org/0000-0001-8782-2119}{}
\author{Ichiro Hasuo}{
National Institute of Informatics and SOKENDAI, Tokyo, Japan
%\omitatsubm{The Graduate University for Advanced Studies (SOKENDAI), Japan \and}  National Institute of Informatics, Japan \and \url{https://group-mmm.org/~ichiro/} 
}{i.hasuo@acm.org}{https://orcid.org/0000-0002-8300-4650}{}
\authorrunning{K. Watanabe, C. Eberhart, K. Asada, and I. Hasuo} %TODO mandatory. First: Use abbreviated first/middle names. Second (only in severe cases): Use first author plus 'et al.'
\keywords{compositionality, verification, mean payoff game, category theory, monoidal category, string diagram, traced monoidal category, compact closed category} %TODO mandatory; please add comma-separated list of keywords
\begin{document}

\maketitle

%TODO mandatory: add short abstract of the document
\begin{abstract}
% The state space explosion problem is a challenging
% one in solving model checking problem, including mean payoff games (MPGs). Compositionality is a promising approach to the problem by providing a divide-and-conquer solution.   
Following our recent development of a compositional model checking algorithm for Markov decision processes, we present a compositional framework for solving mean payoff games (MPGs). The framework is derived from category theory, specifically that of monoidal categories: MPGs (extended with open ends) get composed in so-called string diagrams and thus organized in a monoidal category; their solution is then expressed as a functor, whose preservation properties embody compositionality. As usual, the key question to compositionality is how to enrich the semantic domain; the categorical framework gives an informed guidance in solving the question by singling out the algebraic structure required in the extended semantic domain. We implemented our compositional solution in Haskell; depending on benchmarks, it can outperform an existing algorithm by an order of magnitude.

% In this paper, we introduce \emph{open mean payoff games (oMPGs)}, an extension of MPGs with open ends that makes them compositional.
% We extend the semantics of MPGs to oMPGs, which is more subtle because the existence of open ends means that there may not be a winner.
% We give a compositional computation of the semantics of oMPGs by establishing a compact closed functor from a category of oMPGs to their semantic category, which is called the \emph{winning-position functor}. The compact closed categories and the winning-position functor are constructed using well-established constructions, namely the Int construction and the change of base. 
%  Based on the categorical framework, we implement a compositional algorithm compMPG for solving oMPGs in Haskell. We show that for a certain class of compositionally described MPGs, compMPG outperforms existing algorithms by an order of magnitude. 
\end{abstract}

\section{Introduction}
The current paper is the latest result in our pursuit~\cite{Watanabe21,Watanabe23} of compositional algorithms for model checking and game solving. We have studied solution of parity games~\cite{Watanabe21} and model checking of Markov decision processes (MDPs)~\cite{Watanabe23}; the latter successfully yielded an algorithm and an efficient implementation. Our approach features a structural and algebraic theory in the language of \emph{monoidal categories}~\cite{MacLane2}, where target systems/games are composed in the graphical language of \emph{string diagrams}.
 % (used e.g.\ in~\cite{PiedeleuKCS15,BonchiHPSZ19} in computer science).
 % and a compositional solution is identified as a structure-preserving \emph{functor} from 
In this paper, we demonstrate the power of the categorical approach by exhibiting another target problem, namely solution of \emph{mean payoff games (MPGs)}.

MPGs have been extensively studied for its application to formal verification of \emph{quantitative} systems.
% in that a payoff of an MPG can be
% %rational or even
% real valued. 
%In practice, 
MPGs play important roles in the modelling of embedded systems~\cite{ChakrabartiAHS03}, quantitative LTL synthesis~\cite{TomitaUSHY17}, and temporal networks~\cite{CominR15,CominPR17}. 
Therefore, an efficient algorithm for solving MPGs is much desired.

Recent studies~\cite{Brim11,Benerecetti20} present pseudo-polynomial algorithms for solving MPGs that use \emph{progress measures}~\cite{jurdzinski2000small} as a key ingredient.
Benerecetti {\it et al.}~\cite{Benerecetti20} exploit the notion of \emph{quasi dominion}~\cite{benerecetti2018solving} and experimentally show that their algorithm is remarkably faster than the  algorithm in~\cite{Brim11} that is conceptually simpler.

Although recent work has made great progress in the search for efficient algorithms for MPGs, there have been no algorithms with compositionality, a property with  both mathematical blessings and a proven record of success. A compositional algorithm
% the state space explosion problem remains a significant challenge.
% \emph{Compositionality} is a successful approach to this problem,
% %which provides a \emph{divide-and-conquer} method for  tackling with a large system by dividing it into small components and combining the results to solve the original whole system. 
% which provides
is  a \emph{divide-and-conquer} method, where a large system is divided  into smaller components and 
the results are combined to analyze the original whole system. 
Compositionality in formal verification has been pursued in~\cite{ClarkeLM89,KwiatkowskaNPQ13,Stephens15,Watanabe21,Watanabe23}.
% in particular, \emph{string diagrams}~\cite{joyal1991geometry,selinger2010survey}, 
% a graphical calculus for composing systems, 
% have been used for Petri nets~\cite{Stephens15}, parity games~\cite{Watanabe21}, and Markov decision processes~\cite{ Watanabe23}.

\begin{figure}[tb]
\scalebox{0.5}{
  \begin{minipage}[]{40em}
  \vspace{-8em}
    \begin{tikzpicture}[
              innode/.style={draw, rectangle, minimum size=1cm},
              innodemini/.style={draw, rectangle, minimum size=0.5cm},
              interface/.style={inner sep=0},
              innodeeve/.style={draw, circle, minimum size=0.2cm},
              innodeadam/.style={draw, rectangle, minimum size=0.2cm},
              ]
              \node[interface] (rdo1) at (0cm, 0.25cm) {};
              \node[interface] (rdo2) at (0cm, 0cm) {};
              \node[interface] (rdo3) at (0cm, -0.25cm) {};
               \node[interface] (rcdo1) at (2cm, 0.25cm) {};
              \node[interface] (rcdo2) at (2cm, 0cm) {};
              \node[interface] (rcdo3) at (2cm, -0.25cm) {};
              \node[innode] (game1) at (1cm, 0cm) {\scalebox{2}{$\mpg{A}$}};
              \draw[-Stealth] (rdo1) to ($(game1.north west)!0.5!(game1.west)$);
              \draw[-Stealth] (rdo2) to (game1);
              \draw[Stealth-] (rdo3) to ($(game1.south west)!0.5!(game1.west)$);
             \draw[-Stealth] ($(game1.north east)!0.5!(game1.east)$) to (rcdo1);
             \draw[Stealth-] (game1) to (rcdo2);
             \draw[Stealth-] ($(game1.south east)!0.5!(game1.east)$) to (rcdo3);
             \node[interface] (seqcomp) at (2.5cm, 0cm) {\scalebox{2}{$\seqcomp$}};
             \node[interface] (rdo1b) at (3cm, 0.25cm) {};
             \node[interface] (rdo2b) at (3cm, 0cm) {};
             \node[interface] (rdo3b) at (3cm, -0.25cm) {};
             \node[interface] (rcdo1b) at (5cm, 0.25cm) {};
             \node[interface] (rcdo2b) at (5cm, -0.25cm) {};
             \node[innode] (game2) at (4cm, 0cm) {\scalebox{2}{$\mpg{B}$}};
              \draw[-Stealth] (rdo1b) to ($(game2.north west)!0.5!(game2.west)$);
              \draw[Stealth-] (rdo2b) to (game2);
              \draw[Stealth-] (rdo3b) to ($(game2.south west)!0.5!(game2.west)$);
             \draw[-Stealth] ($(game2.north east)!0.5!(game2.east)$) to (rcdo1b);
             \draw[Stealth-] ($(game2.south east)!0.5!(game2.east)$) to (rcdo2b);

             \node[interface] (equal) at (6.5cm, 0cm) {\scalebox{3}{$=$}};

             \node[interface] (rdo1ab) at (8cm, 0.25cm) {};
             \node[interface] (rdo2ab) at (8cm, 0cm) {};
             \node[interface] (rdo3ab) at (8cm, -0.25cm) {};
             \node[innode] (game1ab) at (9cm, 0cm) {\scalebox{2}{$\mpg{A}$}};
             \node[innode] (game2ab) at (11cm, 0cm) {\scalebox{2}{$\mpg{B}$}};
             \node[interface] (rcdo1ab) at (12cm, 0.25cm) {};
             \node[interface] (rcdo2ab) at (12cm, -0.25cm) {};
             \draw[-Stealth] (rdo1ab) to ($(game1ab.north west)!0.5!(game1ab.west)$);
             \draw[-Stealth] (rdo2ab) to (game1ab);
             \draw[Stealth-] (rdo3ab) to ($(game1ab.south west)!0.5!
             (game1ab.west)$);
             \draw[-Stealth] ($(game1ab.north east)!0.5!
             (game1ab.east)$) to ($(game2ab.north west)!0.5!
             (game2ab.west)$);
             \draw[Stealth-] (game1ab) to (game2ab);
             \draw[Stealth-] ($(game1ab.south east)!0.5!
             (game1ab.east)$) to ($(game2ab.south west)!0.5!
             (game2ab.west)$);
             \draw[-Stealth] ($(game2ab.north east)!0.5!(game2ab.east)$) to (rcdo1ab);
             \draw[Stealth-] ($(game2ab.south east)!0.5!(game2ab.east)$) to (rcdo2ab);
             \node[interface] (comma) at (12.3cm, -0.4cm) {\scalebox{2}{$,$}};
          \end{tikzpicture}
    \end{minipage}

    \begin{minipage}[t]{50em}
    \begin{tikzpicture}[
              innode/.style={draw, rectangle, minimum size=1cm},
              innodemini/.style={draw, rectangle, minimum size=0.5cm},
              interface/.style={inner sep=0},
              innodeeve/.style={draw, circle, minimum size=0.2cm},
              innodeadam/.style={draw, rectangle, minimum size=0.2cm},
              ]
              \node[interface] (rdo1) at (0cm, 0.25cm) {};
              \node[interface] (rdo2) at (0cm, 0cm) {};
              \node[interface] (rdo3) at (0cm, -0.25cm) {};
               \node[interface] (rcdo1) at (2cm, 0.25cm) {};
              \node[interface] (rcdo2) at (2cm, 0cm) {};
              \node[interface] (rcdo3) at (2cm, -0.25cm) {};
              \node[innode] (game1) at (1cm, 0cm) {\scalebox{2}{$\mpg{A}$}};
              \draw[-Stealth] (rdo1) to ($(game1.north west)!0.5!(game1.west)$);
              \draw[-Stealth] (rdo2) to (game1);
              \draw[Stealth-] (rdo3) to ($(game1.south west)!0.5!(game1.west)$);
             \draw[-Stealth] ($(game1.north east)!0.5!(game1.east)$) to (rcdo1);
             \draw[Stealth-] (game1) to (rcdo2);
             \draw[Stealth-] ($(game1.south east)!0.5!(game1.east)$) to (rcdo3);
             \node[interface] (sum) at (2.5cm, 0cm) {\scalebox{2}{$\oplus$}};
             \node[interface] (rdo1b) at (3cm, 0.25cm) {};
             \node[interface] (rdo2b) at (3cm, 0cm) {};
             \node[interface] (rdo3b) at (3cm, -0.25cm) {};
             \node[interface] (rcdo1b) at (5cm, 0.25cm) {};
             \node[interface] (rcdo2b) at (5cm, -0.25cm) {};
             \node[innode] (game2) at (4cm, 0cm) {\scalebox{2}{$\mpg{B}$}};
              \draw[-Stealth] (rdo1b) to ($(game2.north west)!0.5!(game2.west)$);
              \draw[Stealth-] (rdo2b) to (game2);
              \draw[Stealth-] (rdo3b) to ($(game2.south west)!0.5!(game2.west)$);
             \draw[-Stealth] ($(game2.north east)!0.5!(game2.east)$) to (rcdo1b);
             \draw[Stealth-] ($(game2.south east)!0.5!(game2.east)$) to (rcdo2b);

             \node[interface] (equal) at (6.5cm, 0cm) {\scalebox{3}{$=$}};

             \node[interface] (rdo1aba) at (8cm, 1.25cm) {};
             \node[interface] (rdo2aba) at (8cm, 1cm) {};
             \node[interface] (rdo3aba) at (8cm, 0.75cm) {};
             \node[interface] (rdo1abb) at (8cm, -0.75cm) {};
             \node[interface] (rdo2abb) at (8cm, -1cm) {};
             \node[interface] (rdo3abb) at (8cm, -1.25cm) {};
             \node[interface] (rcdo1aba) at (12cm, 1.25cm) {};
             \node[interface] (rcdo2aba) at (12cm, 1cm) {};
             \node[interface] (rcdo3aba) at (12cm, 0.75cm) {};
             \node[interface] (rcdo1abb) at (12cm, -0.75cm) {};
             \node[interface] (rcdo2abb) at (12cm, -1cm) {};
             \node[interface] (rcdo3abb) at (12cm, -1.25cm) {};
             \node[innode] (game1ab) at (10cm, 1cm) {\scalebox{2}{$\mpg{A}$}};
             \node[innode] (game2ab) at (10cm, -1cm) {\scalebox{2}{$\mpg{B}$}};
             \draw[-Stealth] (rdo1aba) to ($(game1ab.north west)!0.5!(game1ab.west)$);
             \draw[-Stealth] (rdo2aba) to (game1ab);
             \draw[-Stealth] (rdo3aba) -- (8.5cm, 0.75cm) -- (9cm, -0.75cm) -- ($(game2ab.north west)!0.5!(game2ab.west)$);
             \draw[Stealth-] (rdo1abb) -- (8.5cm, -0.75cm) -- (9cm, -1cm) -- (game2ab);
             \draw[Stealth-] (rdo2abb) -- (8.5cm, -1cm) -- (9cm, -1.25cm) -- ($(game2ab.south west)!0.5!(game2ab.west)$);
             \draw[Stealth-] (rdo3abb) -- (8.5cm, -1.25cm) -- (9cm, 0.75cm) -- ($(game1ab.south west)!0.5!(game1ab.west)$);
             \draw[-Stealth] ($(game1ab.north east)!0.5!(game1ab.east)$) to (rcdo1aba);
             \draw[-Stealth] ($(game2ab.north east)!0.5!(game2ab.east)$) -- (11cm, -0.75cm) -- (11.5cm, 0.75cm) -- (rcdo3aba);
             \draw[Stealth-] ($(game2ab.south east)!0.5!(game2ab.east)$) -- (11cm, -1.25cm) -- (11.5cm, -0.75cm) -- (rcdo1abb);
             \draw[Stealth-] (game1ab) -- (11.1cm, 1cm) -- (11.6cm, -1cm) -- (rcdo2abb);
             \draw[Stealth-] ($(game1ab.south east)!0.5!(game1ab.east)$) -- (11cm, 0.75cm) -- (11.5cm, -1.25cm) -- (rcdo3abb);
             % \node[interface] (rcdo1ab) at (13cm, 0.25cm) {};
             % \node[interface] (rcdo2ab) at (13cm, -0.25cm) {};
             % \draw[->] (rdo1ab) to ($(game1ab.north west)!0.5!(game1ab.west)$);
             % \draw[->] (rdo2ab) to (game1ab);
             % \draw[<-] (rdo3ab) to ($(game1ab.south west)!0.5!
             % (game1ab.west)$);
             % \draw[->] ($(game1ab.north east)!0.5!
             % (game1ab.east)$) to ($(game2ab.north west)!0.5!
             % (game2ab.west)$);
             % \draw[<-] (game1ab) to (game2ab);
             % \draw[<-] ($(game1ab.south east)!0.5!
             % (game1ab.east)$) to ($(game2ab.south west)!0.5!
             % (game2ab.west)$);
             % \draw[->] ($(game2ab.north east)!0.5!(game2ab.east)$) to (rcdo1ab);
             % \draw[<-] ($(game2ab.south east)!0.5!(game2ab.east)$) to (rcdo2ab);
          \end{tikzpicture}
    \end{minipage}

    }
    \myvspfigcapbef[-2em]
\caption{Sequential composition $\seqcomp$, and sum $\oplus$ of MDPs, illustrated.}
\label{fig:seqCompOplusIllustrated}
\end{figure}

% \begin{figure}[tb]
%     \centering
% \begin{math}
%    \begin{aligned}
%     \includegraphics[width=0.65\textwidth]{}
%   \end{aligned}
% \end{math}  
% \caption{Sequential composition $\seqcomp$, sum $\oplus$, and loops of MDPs, illustrated.}
% \label{fig:seqCompOplusIllustrated}
% \end{figure}

In this paper, we present a novel compositional algorithm for solving MPGs; it is  a mapping $\fwpFunctor$ from ``compositional MPGs'' to ``solutions'' (such as win/lose).
This mapping must preserve a certain algebraic structure that offers operations for composing MPGs.
Following \cite{Watanabe21,Watanabe23},
we identify the relevant algebraic structure as that of \emph{compact closed categories (compCC)}, where one can use the graphical calculus of \emph{string diagrams} to compose MPGs. In string diagrams, MPGs---extended with \emph{open ends} and called \emph{open MPGs}---can be composed using two binary operations (\emph{sequential composition} $\seqcomp$ and \emph{sum} $\oplus$). See \cref{fig:seqCompOplusIllustrated}; one can see that string diagrams for compCC are a natural calculus for not only MPGs but also graph-based systems in general such as MDPs and parity games. 

% a categorical concept of \emph{compact closed category} (CpCC),
% namely $\fwpFunctor$ is actually a \emph{compact closed functor} between CpCCs.
% CpCCs have, and $\fwpFunctor$ preserves, two binary operations, namely, sequential composition $\seqcomp$ and sum $\oplus$.
% The domain of $\fwpFunctor$ is a CpCC $\oMPG$ of \emph{open MPGs}.
% Open MPGs are a compositional extension of MPGs,
% where MPGs are extended with so-called open ends.
% Then $\oMPG$ has open ends as objects and open MPGs as arrows,
% and we can compose open MPGs by $\seqcomp$ and $\oplus$ as in \cref{fig:seqCompOplusIllustrated}.

We organize open MPGs as arrows of a category $\oMPG$ (its objects are given by a suitable notion of arity). Then it is natural to seek 1) a solution domain $\fsemCat$ that has a compCC structure (thus called a \emph{semantic category}), and 2) a solution map $\fwpFunctor\colon \oMPG\to \fsemCat$ that preserves compCC structures. Such a structure-preserving map $\fwpFunctor$ between compCC is called a \emph{compact closed functor}~\cite{KELLY1980193}. Its preservation properties can be spelled out as
% Thus, the compact closed functor $\fwpFunctor$ 
% from $\oMPG$
% to a semantic compact closed category of `solutions'
% preserves $\seqcomp$ and the sum $\oplus$:
\myvspmathbef[-.9em] % with -.8em, the vertical space here from base to capital top becomes the same as the vertical space below from base to x-top.
\begin{myminipage}
\begin{equation*}
    \fwpFunctor(\mpg{A}\seqcomp \mpg{B}) = \fwpFunctor(\mpg{A})\seqcomp \fwpFunctor(\mpg{B}), \ \  \fwpFunctor(\mpg{A}\oplus \mpg{B}) = \fwpFunctor(\mpg{A})\oplus \fwpFunctor(\mpg{B})
    %\label{eq:compCCfunc}
\end{equation*}
\end{myminipage}
\myvspmathaf
%This way, $\fwpFunctor$ with \cref{eq:compCCfunc} establishes a compositional solution of MPGs. 
which embodies desired compositionality. 
% This way, with the semantic category defined suitably (explained below),
This functor $\fwpFunctor$---we call it a \emph{winning-position functor}---compositionally computes all the winning (initial) positions of a given open MPG.
%(as well as some ``pending results'' due to compositionality). 
% We thus call $\fwpFunctor$ the \emph{winning-position functor}.

% \begin{figure}[tb]
%     \centering
% \begin{math}
%    \begin{aligned}
%     \includegraphics[width=0.65\textwidth]{}
%   \end{aligned}
% \end{math}  
% \caption{Trace operator $\trsyb$}
% \label{fig:trace_operator}
% \vspace{-2em}
% \end{figure}

\begin{figure}[tbp]\centering
\begin{math}\mspace{-50mu}
\vcenter{\xymatrix@C+1.6em@R=0.1em{
  **[r]{\footnotesize\begin{array}[c]{l}
	    \text{bidirectional, MPGs}\qquad\qquad\qquad\qquad\qquad\\
	\text{(compact closed)}\end{array}
   }
  &
  % {\oMDP}
  %   \ar@{=}[r]^-{\text{def.}}
  % &
  {\mspace{20mu}\mathllap{\oMPG\, \defeq\;} 
  \Int(\roMPG)}
     \ar[r]^-{\fwpFunctor\defeq\Int(\frwpFunctor)}
   &
  {\Int(\frsemCat)\mathrlap{\;=:\,\fsemCat }}
  %   \ar@{=}[r]^-{\text{def.}}
  % &
  % {\semCat}
  \\
  **[r]{\footnotesize\begin{array}[c]{l}\footnotesize
   \text{unidirectional, MPGs}\\
   \footnotesize
   \text{(traced monoidal)}
\end{array}
   }
   \ar@(lu,ld)[u]^{
    \begin{array}{r}
     \text{\small the Int}\\[-4pt]
     \text{\small constr.}
    \end{array}}
  % &
  % {\roMC}
  %     \ar@{|->}[r]^-{\text{change of base}}
  &
  {\roMPG}
      % \ar[u]^{\eta}_{\text{(embedding)}}
      \ar[r]_-{\frwpFunctor }
      % \ar@{}[ru]|-{\rotatebox{90}{{\large $\mapsto$}} \; \text{the $\Int$ construction}} 
  &
  {\frsemCat}
      % \ar[u]^{\eta}_{\text{(embedding)}}
  \\
  **[r]{\footnotesize\begin{array}[c]{l}\footnotesize
   \text{unidirectional, plays}\\
   \footnotesize
   \text{(traced monoidal)}
\end{array}
   }
   \ar@(lu,ld)[u]^{
    \begin{array}{r}
     \text{\small change}\\[-4pt]
     \text{\small of base}
    \end{array}}
  % &
  % {\roMC}
  %     \ar@{|->}[r]^-{\text{change of base}}
  &
  {\roPlay}
      % \ar[u]^{\eta}_{\text{(embedding)}}
      \ar[r]_-{\fpwpFunctor}
      % \ar@{}[ru]|-{\raisebox{1em}{\rotatebox{90}{{\large $\mapsto$}}} \; \text{\normalsize change of base}} 
  &
  {\fpsemCat}
      % \ar[u]^{\eta}_{\text{(embedding)}}
 }}
\end{math}
\myvspfigcapbef
\caption{Categories of MPGs/Plays, semantic categories, and winning-position functors.}
\label{fig:catsFunctors}
\end{figure}
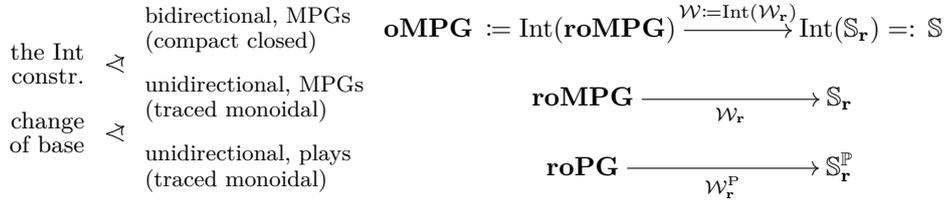

\begin{wrapfigure}[4]{R}{0.35\textwidth}
\centering
\vspace{-1.5em}
  \begin{minipage}[t]{0.35\textwidth}
  \centering
  \scalebox{0.7}{
  \begin{tikzpicture}[
              innode/.style={draw, rectangle, minimum size=1cm},
              interface/.style={inner sep=0},
              innodemini/.style={draw, circle, minimum size=0.2cm}
              ]
              \node[interface] (rdo1) at (0cm, 0.3cm) {};
              \node[interface] (rdo2) at (0cm, 0.1cm) {};
              \node[interface] (rdo3) at (0cm, -0.1cm) {};
              \node[interface] (rdo4) at (0cm, -0.3cm) {};
              \node[innode] (game1) at (1cm, 0cm) {$\mpg{A}$};
              \draw[-Stealth] (rdo1) to ($(game1.north west)!0.4!(game1.west)$);
              \draw[-Stealth] (rdo2) to ($(game1.north west)!0.8!(game1.west)$);
              \draw[-Stealth] (rdo3) to ($(game1.south west)!0.8!(game1.west)$);
              \draw[-Stealth] (rdo4) to ($(game1.south west)!0.4!(game1.west)$);
              \node[interface] (rcdo1) at (2cm, 0.3cm) {};
              \node[interface] (rcdo2) at (2cm, 0.1cm) {};
              \node[interface] (rcdo3) at (2cm, -0.1cm) {};
              \draw[Stealth-] (rcdo1) to ($(game1.north east)!0.4!(game1.east)$);
              \draw[Stealth-] (rcdo2) to ($(game1.north east)!0.8!(game1.east)$);
              \draw[Stealth-] (rcdo3) to ($(game1.south east)!0.8!(game1.east)$);
               \node[interface] (sum) at (3cm, 0.3cm) {\scalebox{2}{$\overset{\trsyb_{1;3,2}}{\longmapsto}$}};
                \node[innode] (game2) at (5cm, 0cm) {$\mpg{A}$};
              \node[interface] (rdo21) at (4cm, 0.1cm) {};
              \node[interface] (rdo22) at (4cm, -0.1cm) {};
              \node[interface] (rdo23) at (4cm, -0.3cm) {};
               \node[interface] (rcdo21) at (6cm, 0.1cm) {};
              \node[interface] (rcdo22) at (6cm, -0.1cm) {};
              \draw[-Stealth] (4.5cm, 1cm) arc [radius=0.3, start angle = 90, end angle=270];
              \draw[-Stealth] (rdo21) to ($(game2.north west)!0.8!(game2.west)$);
              \draw[-Stealth] (rdo22) to ($(game2.south west)!0.8!(game2.west)$);
              \draw[-Stealth] (rdo23) to ($(game2.south west)!0.4!(game2.west)$);
              \draw[-Stealth] (5.5cm, 0.4cm) arc [radius=0.3, start angle = 270, end angle=450];
              \draw[Stealth-] (rcdo21) to ($(game2.north east)!0.8!(game2.east)$);
              \draw[Stealth-] (rcdo22) to ($(game2.south east)!0.8!(game2.east)$);
              \draw[-] (4.5cm, 1cm) to (5.5cm, 1cm);
          \end{tikzpicture}
    }
    \end{minipage}
    \caption{Trace operator.}
    \label{fig:trace_operator}
\end{wrapfigure}

To obtain a suitable semantic category $\fsemCat$ and a winning-position functor $\fwpFunctor\colon  \oMPG\to \fsemCat$, we follow the categorical workflow introduced in~\cite{Watanabe23}, in which  $\fsemCat$ and $\fwpFunctor$ are obtained in a two-step process. See~\cref{fig:catsFunctors}.
In the two steps we utilize general categorical constructions, namely the \emph{Int construction}~\cite{joyal1996} and the \emph{change of base} construction~\cite{Eilenberg65,cruttwell2008normed}. 
% which are also applicable to functors as well as (syntactic and semantic) categories.
The Int construction turns a framework of ``unidirectional'' open MPGs into that of  ``bidirectional'' open MPGs,
 % from `unidirectional' open MPGs,
and for this we need a \emph{trace operator} for the category $\roMPG$ of ``unidirectional'' open MPGs.
Intuitively, the trace operator is an algebraic operation that creates a loop in an MPG (\cref{fig:trace_operator}).
In the other step, the change of base construction builds the category $\roMPG$  of ``unidirectional'' open MPGs from a category $\roPlay$ of ``plays,'' by adding two nondeterministic structures for the two players $\eve$ and $\adam$. 
  Change of base has been applied in computer science to \emph{game semantics}~\cite{laird2017qualitative}, too.

The two-step process described above (\cref{fig:catsFunctors}) on the syntax side (i.e.\ $\roPlay\mapsto \roMPG\mapsto \oMPG$) also takes place in parallel on the semantics side, i.e., on semantic categories and winning-position functors.
The change of base construction (does not generally, but does in our specific case)  lift a trace operator, so we need a trace operator for the semantic category $\fpsemCat$ for ``plays.'' For its construction, we use a \emph{priority-based} technique inspired by the construction for parity games in~\cite{Watanabe21} (which is further inspired by~\cite{grellois2015finitary}).

We implemented the winning-position functor $\fwpFunctor\colon \oMPG\to\fsemCat$;
%and the information on which parts of an input game are copied, 
it receives a string diagram of oMPGs and outputs the solution of its composition. 
% we efficiently solve large MPGs in the divide-and-conquer manner.
 Experiments show that our implementation (we call it $\compMPG$)
 % captures and exploits the compositional structure for solving MPGs.
% $\compMPG$ 
outperforms the known algorithm QDPM~\cite{Benerecetti20} on both 1) simple but realistic benchmarks and 2) randomized benchmarks. $\compMPG$ solved benchmarks as big as $10^7$ positions within three seconds, demonstrating its efficiency.

 % shown in~\cref{sec:impAndExp}, which are large enough, so that the experiment shows that $\compMPG$ can solve the state space explosion problem.

Our contributions are summerized as follows:
\begin{enumerate}
    \item A compositional algorithm for solving MPGs composed by string diagrams.
 % compact closed structures. 
 \item Its structural and disciplined derivation by category theory.
    \item Its implementation $\compMPG$ and experimental evaluation that shows its efficiency.
 % show that $\compMPG$ is more efficient than the state-of-the-art algorithm QDPM in benchmarks whose inputs are large MPGs constructed from smaller open MPGs.
    %with the algebraic operations.
\end{enumerate}
\myvsplistaf

% \begin{figure}[tb]
% \centering
% \begin{math}
%    \begin{aligned}
%     \includegraphics[width=0.65\textwidth]{figs/compositionOpr.png}
%     \\
%     \includegraphics[width=0.65\textwidth]{}
%   \end{aligned}
% \end{math}  
% \caption{Sequential composition $\seqcomp$, sum $\oplus$, and loops of MPGs, illustrated.}
%    \label{fig:seqCompOplusIllustrated}

% \vspace{-2em}
% \end{figure}

% \subsection{Related Work}
\myparagraph{Related Work}
We have already mentioned related work on pseudo-polynomial algorithms for MPGs~\cite{Brim11,Benerecetti20}, and compositional model checking~\cite{ClarkeLM89,KwiatkowskaNPQ13, Watanabe21,Watanabe23,Stephens15,DBLP:conf/csl/TsukadaO14}. Here, we give a detailed comparison between our work and closely related work~\cite{Stephens15,Watanabe21,DBLP:conf/csl/TsukadaO14,Watanabe23}.

The work~\cite{Stephens15,RathkeSS14} uses string diagrams to compose Petri nets and computes
 reachability probabilities  in a compositional way. A major difference from our work is that they do not allow loops in composition; consequently, they use symmetric monoidal categories without traces or compact closed structures. We find loops to be essential in accommodating real-world examples (see e.g.\ \cref{fig:exbenchmark}).  Treatment of loops is a major theoretical challenge, too, which we successfully address by a priority-based construction of traces in $\fpsemCat$.

% Their solution is a \emph{strict symmetric monoidal functor} between \emph{props}~\cite{maclane1965categorical}. 
% A prop is a strict symmetric monoidal category whose objects are natural numbers.
% %Props and functors between them are used for variety fields, e.g. for \emph{control theory}~\cite{baez2017prop,bonchi2017}, and for \emph{automata}~\cite{piedeleu2020string} as graphical languages. 
% Our work and~\cite{Stephens15} both utilize the algebraic structures of symmetric monoidal categories ($\seqcomp$ and $\oplus$),
% but our work additionally uses the structures of compact closed categories,
% %i.e., duals, units and counits, 
% which enables us to draw loops and leftward edges instead of only rightward edges (as in \cref{fig:openMPG,fig:exbenchmark}).
% %Although we also use a monoidal structure for our compositional framework, we focus not only on a strict symmetric monoidal structure, but also on a compact closed structure in the semantic category of MPGs, which is outside the scope of their work.  

The formalism of string diagrams
 % is \emph{the} formal language for \emph{monoidal categories}, a branch of category theory that captures 
originates  from the theory of \emph{monoidal categories} (see e.g.~\cite[Chap.~XI]{MacLane2}). Capturing the mathematical essence of the algebraic structure of arrow composition $\circ$ and tensor product $\otimes$---they correspond to $;$ and $\oplus$ in this work, respectively---monoidal categories and string diagrams have found their application in a vast variety of scientific disciplines, such as  quantum field theory~\cite{khovanov2002functor}, quantum mechanics and computation~\cite{heunen2019categories}, linguistics~\cite{PiedeleuKCS15}, signal flow diagrams~\cite{BonchiHPSZ19}, and so on.

A compositional framework for parity games with string diagrams is introduced in~\cite{Watanabe21}, whose semantic category is designed following the work~\cite{grellois2015finitary} on denotational semantics of higher-order model checking. Unlike the present paper, the winning-position functor in~\cite{Watanabe21} does not use the change of base construction; as a result, the functor collects too many strategies and is thus not suited for efficient implementation. We note that parity games can be reduced to MPGs~\cite{Jurdzinski98} and solved by our current algorithm; besides, we expect it is possible  to adapt our current algorithm from MPGs to parity games.

%  and tries to compute \emph{memoryful strategies} without using the \emph{memoryless determinacy} of parity games. It is not clear if we can obtain a compositional algorithm for MPGs in a similar approach to~\cite{Watanabe21}. 
% See~\cref{rem:KleisliOrChangeOfBase} for more details.
% Although their framework gave a compact closed functor for solving parity games,
% It is not clear how we obtain a compositional algorithm for parity games whose computation terminates in finite steps because the compact closed functor 
% tries to compute \emph{memoryful strategies} without using \emph{memoryless determinacy} of parity games. See~\cref{rem:KleisliOrChangeOfBase} for more details.

% Another difference with~\cite{Watanabe21} is that in the present paper we do not formally discuss the string diagram for open MPGs, while in~\cite{Watanabe21} they introduced the string diagram for open parity games by using the notion of free prop. In fact, we can adapt the definition of the free prop for open parity games to open MPGs. %which is omitted due to the space reason.

\omitatsubm{The work \cite{DBLP:conf/csl/TsukadaO14} also studies compositional model checking. 
They give a compositional higher-order semantics for a wide variety of games, including the parity and mean payoff objectives, and a type system that completely characterize the semantics.
From the characterization, if the winning condition is $\omega$-regular as in the case of the parity objective, one obtains a compositional algorithm for higher-order model checking.
However, it seems difficult to adapt the algorithmic result to MPGs, whose winning condition is not $\omega$-regular.}
%\todo{This paragraph can be omitted for submission, and can be reverted at camera ready, which should have two extra pages.}

A compositional algorithm for computing expected rewards of MDPs with
% the compact closed structure 
string diagrams is introduced in~\cite{Watanabe23}. 
%Though the target models of the present paper and~\cite{Watanabe23} are quite different, 
In the current paper, we follow the categorical workflow in~\cite{Watanabe23}---in fact, \cref{fig:catsFunctors} is very similar to one in~\cite{Watanabe23}. However, there is a technical difference in the most challenging part of the workflow, namely the construction of a trace operator in $\fpsemCat$ (corresponding to $\mathbb{S}^{\mathrm{MC}}_{\mathbf{r}}$ in~\cite{Watanabe23}).  The construction in~\cite{Watanabe23} is least fixed point-based: it captures an arbitrary number of iterations and the reward collected in its course. In constrast, our current construction for MPGs is priority-based: we think of MPGs as an extension of parity games from finitely many priorities to  infinitely many; therefore we adapt the construction of traces in~\cite{Watanabe21,grellois2015finitary}. 

\myparagraph{Organization}
In~\cref{sec:graphOMPGs}, we introduce open MPGs and their  semantics of the conventional (non-categorical) style, and define a compact closed category $\oMPG$ of open MPGs (as well as $\roMPG$).
In~\cref{sec:decEq}, we define the category $\roPlay$ of ``plays'' (more precisely \emph{rightward open play graphs}) and their decomposition equalities, which are key properties towards compositionality and are explained without using category theory.
In~\cref{sec:semanticCategories}, we define all the semantic structures in \cref{fig:catsFunctors},
and show the main theorem (compositionality of  $\fwpFunctor$).
In~\cref{sec:impAndExp}, we show the results of experiments and address some research questions.
\myparagraph{Notations}
For natural numbers $m$ and $n$, we let $\nset{m, n} \defeq \{m, m+1,\dots, n-1, n\}$; as a special case, we let $\nset{m} \defeq \{1, 2, \dots, m\}$ (we let $\nset{0} = \emptyset$ by convention).
%We also consider $\nset{m}$ and set $Q$ to be disjoint without explicitly mentioning it. 
% ; as a special case, we let $\nset{m}\coloneqq \{1,2,\dotsc, m\}$ (we let $\nset{0}=\emptyset$ by convention).
 $X + Y$ denotes the disjoint union of sets $X, Y$. 
%Let $E\subset X\times X$ be a relation, then we often denote  $(x_1, x_2)\in E$ by $x_1 E x_2$.
For a category $\mathcal{C}$ and its objects $X$ and $Y$, we write $\mathcal{C}(X,Y)$ for the set of arrows from $X$ to $Y$.

\section{Compact Closed Category of Open Mean Payoff Games}
\label{sec:graphOMPGs}
We introduce \emph{open mean payoff games} (\emph{oMPGs}), an extension of MPGs with open ends, in \cref{subsec:oMPGs}, where
we also give their semantics in a conventional style.
Then, for our compositional framework, we introduce a compact closed category (compCC) $\oMPG$ of oMPGs.
% The arrows in $\oMPG$ are described in string diagrams of MPGs, giving us an intuitive visual information about open MPGs. 
%In order to automatically derive the equational axioms of compact closed category (compCC) for $\oMPG$, 
As explained in the introduction (see \cref{fig:catsFunctors}), for technical convenience
we define $\oMPG$ by the $\Int$ construction~\cite{joyal1996}, which is given in \cref{sec:compCCofOMPGbyInt}.
This takes as an input a traced symmetric monoidal category (TSMC), so in \cref{sec:TSMCofROMPG} we define a TSMC $\roMPG$ of ``unidirectional'' open MPGs, which we call \emph{rightward open MPGs}. See~\cite{joyal1996} for details.
% The construction of string diagrams for MPGs is essentially the same as in~\cite{Watanabe21,Watanabe23}, thus we omit some definitions.

\subsection{Open Mean Payoff Games}
\label{subsec:oMPGs}
\begin{wrapfigure}[4]{r}{0pt}
\scalebox{0.7}{
  \begin{minipage}[t]{15em}
  \vspace{-\baselineskip}
    \begin{tikzpicture}[
              innode/.style={draw, rectangle, minimum size=1cm},
              innodemini/.style={draw, rectangle, minimum size=0.5cm},
              interface/.style={inner sep=0},
              innodeeve/.style={draw, circle, minimum size=0.2cm},
              innodeadam/.style={draw, rectangle, minimum size=0.2cm},
              ]
              % \node[interface] (rdo1) at (0.5cm, -0.25cm) {$1$};
              \node[innodeeve] (game1) at (1cm, 0cm) {$3.1$};
              \node[innodeadam] (game2) at (2.5cm, 0.25cm) {$-4.5$};
              % \node[innode] (game3) at (3.5cm, -0.75cm) {$\mpg{C}$};
              \node[innodeadam] (game3) at (4cm, 0cm) {$2$};
              \node[interface] (rcdo1) at (5cm, 1.25cm) {\;$1_{\dr}$};
              \node[interface] (rcdo2) at (5cm, 0.25cm) {\;$2_{\dr}$};
              \node[interface] (rcdo3) at (5cm, -0.25cm) {\;$1_{\dl}$};
              \node[interface] (rdo1) at (0cm, 1.25cm) {$1_{\dr}$\;};
              \node[interface] (rdo2) at (0cm, 0.75cm) {$2_{\dr}$\;};
              \node[interface] (rdo3) at (0cm, 0.25cm) {$3_{\dr}$\;};
              \node[interface] (rdo4) at (0cm, -0.25cm) {$1_{\dl}$\;};
              \draw[->] (rdo1) to (rcdo1);
              \draw[->] (rdo2) to (game2);
              \draw[->] (rdo3) to (game1);
              \draw[<-] (rdo4) to (game1);
              \draw[->] (game1) to (game2);
              \draw[->] (game2) to (game3);
              \draw[->] ($(game3.south west)!0.3!(game3.west)$) to ($(game1.south east)!0.3!(game1.east)$);
              \draw[->] (game3) to (rcdo2);
              \draw[<-] (game3) to (rcdo3);
              % \draw[->] ($(game1.south east)!0.5!(game1.east)$) to ($(game3.south west)!0.5!(game3.west)$);
              % \draw[->] ($(game3.south east)!0.5!(game3.east)$) to (rcdo1);
              % \draw[->] ($(game1.north east)!0.5!(game1.east)$) to (game2.west);
              % \draw[->] (game2.east) to ($(game3.north west)!0.5!(game3.west)$);
              % \draw[->] (1cm, 0.75cm) arc [radius=0.3, start angle = 90, end angle=270];
              % \draw[<-] (1cm, 0.75cm) to (5cm, 0.75cm);
              % \draw[->] (5cm, 0.15cm) arc [radius=0.3, start angle = 270, end angle=450];
          \end{tikzpicture}
    \end{minipage}
    }
    \caption{An open MPG.}
    \label{fig:openMPG}
  \end{wrapfigure}
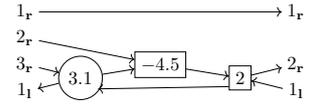

We first define open MPGs. They are connected by \emph{open ends}, which come with a notion of \emph{arity}---the numbers of open ends on their left and right, distinguishing leftward and rightward ones.
 As an example, \cref{fig:openMPG} is an open MPG whose arity on the left is $(3, 1)$, and one on the right is $(2, 1)$.

\begin{definition}[open MPG (oMPG)]
\myvspin[-0.3em]\label{def:oMPG}
 An \emph{open MPG} $\mpg{A}$ (from $\bfn{m}$ to $\bfn{n}$) is a tuple $(\bfn{m},\bfn{n},Q, E,\rho, w)$ of the following data. 
 \begin{enumerate}
    \item $\bfn{m} = (\nfr{m}, \nfl{m})$ and $\bfn{n} = (\nfr{n}, \nfl{n})$ are pairs of natural numbers; they are called the \emph{left-arity} and the \emph{right-arity}, respectively. Moreover, elements of $\nset{\nfr{m} + \nfl{n}}$ are called \emph{entrances}, and those of $\nset{\nfr{n} + \nfl{m}}$ are called \emph{exits}. Entrances and exits are also called \emph{open ends}.
    \item $Q$ is a finite set of \emph{positions}. 
    \item $E\subseteq (\nset{\nfr{m}+\nfl{n}} + Q)\times (\nset{\nfr{n}+\nfl{m}} + Q)$, whose elements are called \emph{edges}. In addition, $E$ must satisfy $(\dagger)$ for any $i\in \nset{\nfr{m}+\nfl{n}}$, there is exactly one $t\in \nset{\nfr{n}+\nfl{m}} + Q$ such that $(i, t)\in E$ (each entrance has a unique successor), and $(\ddagger)$  for any $j\in \nset{\nfr{n}+\nfl{m}}$, there is at most one $s\in \nset{\nfr{m}+\nfl{n}} + Q$ such that $(s, j)\in E$ (each exit has at most one predecessor).
    We denote $(s,t) \in E$ as $s \rightarrow_{\mpg{A}} t$.
    \item $\rho$ is a function $\rho:Q\rightarrow \{\eve, \adam\}$, which assigns a \emph{role} to each position.
    \item $w$ is a function $w:Q\rightarrow \Real$, which assigns a \emph{weight} to each position.
 \end{enumerate}
 
\end{definition}
\myvspaf

Note that the conditions $(\dagger)$ and $(\ddagger)$ are for technical convenience; they can be easily enforced by adding an extra ``access'' position to an entrance or an exit. The condition $(\ddagger)$  will be important for the definition of the mean payoff  condition before \cref{def:denotOfTDP}. 
% Note that an oMPG from $(m,0)$ to $(0,n)$ bijectively corresponds to an MPG equipped with $m+n$ many initial positions $s_1, \dots, s_{m+n}$.
% In this correspondence, the unique successor existence condition in the above definition is naturally used.
% \submcamera{
% The unique predecessor condition will be used for the definition of the mean payoff  condition before \cref{def:denotOfTDIP}.
% }
% {On the role of unique predecessor condition, see \cref{rem:uniqueSuccAccCondForDenOfTDIP}.}
For an oMPG $\mpg{A}$, we may write the components of the tuple as $\bfn{m}^\mpg{A}$, $\bfn{n}^\mpg{A}$, $Q^\mpg{A}$, $ E^\mpg{A}$, $\rho^\mpg{A}$, and $w^\mpg{A}$.
% Also, suppose $\mpg{A} =  (\bfn{m},\bfn{n},Q, E,\rho, w)$ in the definitions in \cref{subsec:oMPGs}.

We will give our compositional (categorical) semantics of oMPGs in \cref{sec:semanticCategories},
but before that we give a non-categorical semantics of oMPGs in a conventional style for MPGs.
Roughly, our problem is to decide, for a  given oMPG $\mpg{A}$, whether an entrance $i$ is \emph{winning}, \emph{losing}, or \emph{pending}, where pending is a limbo status (due to openness) between winning and losing. 

 We start with defining a \emph{play} on an oMPG, which is a  possibly infinite maximal sequence of positions or open ends.

\begin{definition}[play]
\myvspin
Let $\mpg{A}=  (\bfn{m},\bfn{n},Q, E,\rho, w)$ be an oMPG. A (maximal possibly infinite) play $\play = (s_j)_{j\in J}$ in $\mpg{A}$ from an entrance $i\in \nset{\nfr{m} + \nfl{n}}$ is a possibly infinite sequence such that: (i) $ J = \Natm{M} $ for some $M\in \Nat$ or $J = \Nat$, (ii) $s_0 = i$, (iii) if $j+1\in J$, then $(s_j, s_{j+1})\in E$ for each $j\in J$ (iv) if $J = \Natm{M}$, then for any $s\in Q +\nset{\nfr{n}+\nfl{m}}$, $(s_{M}, s)\not\in E$. 
\end{definition}
\myvspaf

Next, we define \emph{$\eve$-strategies} and \emph{$\adam$-strategies} on an oMPG, which are partial functions from $\eve$'s positions and $\adam$'s positions to their successor positions (or exits), respectively. We restrict strategies to \emph{memoryless} ones due to the existence of optimal memoryless strategies on MPGs~\cite{ehrenfeucht1979positional}. 
Since our target problem is to decide the winner on MPGs consisting of oMPGs as components, memoryless strategies of oMPGs are sufficient.

\begin{definition}[$\eve$-strategy and $\adam$-strategy]\label{def:strategy}
\myvspin
Let $\mpg{A}=  (\bfn{m},\bfn{n},Q, E,\rho, w)$ be an oMPG. A (memoryless) $\eve$-strategy on $\mpg{A}$ is a partial function $\tau:\rho^{-1}(\eve)\rightharpoonup \nset{\nfr{n}+\nfl{m}} + Q$ such that (i) if $\tau(s_k)$ is defined,  $(s_k, \tau(s_k))\in E$, and (ii) if $\tau(s_k)$ is undefined,  for all $s\in \nset{\nfr{n}+\nfl{m}} + Q$, $(s_k, s)\not\in E$. A $\adam$-strategy on $\mpg{A}$ is defined similarly, by replacing the occurrence of $\eve$ with $\adam$ in the definition. The sets of $\eve$-strategies and $\adam$-strategies on $\mpg{A}$ are denoted by $\stra{\eve}{\mpg{A}}$ and $\stra{\adam}{\mpg{A}}$, respectively.
\end{definition}
\myvspaf

Given an $\eve$-strategy $\tau_{\eve}$ and a $\adam$-strategy $\tau_{\adam}$, for each entrance $i$,
the pair of $\tau_{\eve}$ and $\tau_{\adam}$  induces a  play $\indplay{\tau_{\eve}}{\tau_{\adam}}{i}$ from $i$:
% For entrance $i$, $\eve$-strategy $\tau_{\eve}$ and $\adam$-strategy $\tau_{\adam}$, there is a (necessarily unique) \emph{induced play} $\indplay{\tau_{\eve}}{\tau_{\adam}}{i}$ from $i$ according to $\tau_{\eve}$ and $\tau_{\adam}$.

\begin{definition}[play $\indplay{\tau_{\eve}}{\tau_{\adam}}{i}$ induced by strategies; memoryless play]\label{def:indPlayMemlessPlay}
\myvspin
Let $\mpg{A}=  (\bfn{m},\bfn{n},Q, E,\rho, w)$ be an oMPG. The \emph{play} $\indplay{\tau_{\eve}}{\tau_{\adam}}{i}$ \emph{induced} by an $\eve$-strategy $\tau_{\eve}$ and a $\adam$-strategy $\tau_{\adam}$ from an entrance $i\in \nset{\nfr{m}+\nfl{n}}$ is the (unique) play $(s_j)_{j\in J}$ from $i$ such that: 
%(i) $s_0 = i$, (ii) $(s_0, s_1)\in E$, 
(i) for any $j\in I$, if $\rho(s_j) = \eve$ and $\tau_{\eve}(s_j)$ is defined, then $s_{j+1} = \tau_{\eve}(s_j)$, and similarly 
(ii) for any $j\in I$, if $\rho(s_j) = \adam$ and $\tau_{\adam}(s_j)$ is defined, then $s_{j+1} = \tau_{\adam}(s_j)$.

We say a play $\play = (s_j)_{j\in J}$ is \emph{memoryless} if  $\play=\indplay{\tau_{\eve}}{\tau_{\adam}}{}$, i.e.\ if it is induced by strategies (that must be memoryless by \cref{def:strategy}). 
\end{definition}
\myvspaf

The \emph{mean payoff condition} (\emph{MP condition}) for infinite plays on oMPGs is defined in the same way as the conventional one on MPGs.

\begin{definition}[MP condition]
\myvspin
\label{def:mpcondplays}
Let $\mpg{A}=  (\bfn{m},\bfn{n},Q, E,\rho, w)$ be an oMPG. We say that an infinite play $\pi = (s_j)_{j\in \Nat}$ in $\mpg{A}$ satisfies the \emph{MP condition} if it satisfies the following inequality:
\myvspmathbef
\begin{myminipage}
\begin{align}
\label{eq:mean_payoff_condition}
    \textstyle\liminf_{n\rightarrow \infty} \tfrac{1}{n} \sum_{j=1}^{n} w(s_j) \geq 0.
\end{align}
\end{myminipage}
\myvspmathaf
\end{definition}
\myvspaf[-1.2em]

The following lemma is fundamental for MPGs. It says that an infinite play---we have to restrict to memoryless ones (\cref{def:indPlayMemlessPlay})---satisfies the MP condition if and only if the sum of the weights of its cycle is non-negative
(note that the cycle is unique since the play is memoryless).
The proof is by elementary calculation.
% \todo{product order?}
\begin{lemma}[\cite{ehrenfeucht1979positional,Brim11}]
\myvspin
\label{lem:periodicity}
Let $\pi^{\tau_{\eve}, \tau_{\adam}} = (s_i)_{i\in \Nat}$ be an infinite memoryless play, induced by (memoryless) strategies $\tau_{\eve}, \tau_{\adam}$, on oMPG $\mpg{A} = (\bfn{m},\bfn{n},Q, E,\rho, w)$. 
%There are $j\in \Nat$ and $k\in \Nat$ such that: 
There is the least $(j,k) \in \Nat \times \Nat$ (w.r.t. the product order) such that: 
for all $i > j$, $s_i = s_{((i-j)\% k) + j  }$. 
Moreover, 
%$\lim_{n\rightarrow \infty} \frac{\sum_{i=1}^{n} w(s_i)}{n} = \frac{\sum_{j < i \leq j + k} w(s_i)}{k}$ and hence 
the following conditions are equivalent:
\vspace{.2em}
\begin{itemize}
\item $\textstyle\liminf_{n\rightarrow \infty} \tfrac{1}{n} \sum_{j=1}^{n} w(s_j) \geq 0$,
\item $\sum_{j < i \leq j + k} w(s_i) \geq 0$.
\qed
\end{itemize}
\end{lemma}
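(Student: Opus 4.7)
The plan is to handle two independent claims: first, existence of the least $(j_0, k_0)$ in the product order such that the tail of $\pi^{\tau_{\eve},\tau_{\adam}}$ from index $j_0$ is purely periodic of period $k_0$; second, the reduction of the $\liminf$-condition to the cycle-sum condition. I expect the minimality argument in the product order to be the main subtle point; the rest is routine pigeonhole and Ces\`aro averaging.

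First I would establish eventual periodicity. Because $\tau_{\eve}$ and $\tau_{\adam}$ are memoryless and $\pi^{\tau_{\eve},\tau_{\adam}}$ is infinite, every $s_i$ with $i \geq 1$ must lie in the finite set $Q$ (otherwise $s_i$ would be an exit and, by condition $(\ddagger)$ together with maximality of the play, the play would terminate there). Moreover, the successor $s_{i+1}$ is a deterministic function of $s_i$ alone: it is $\tau_{\eve}(s_i)$ or $\tau_{\adam}(s_i)$ if $s_i$ is controlled, and the unique successor guaranteed by $(\dagger)$ otherwise. Pigeonhole applied to the finite set $Q$ yields $1 \leq j < j'$ with $s_j = s_{j'}$, and determinism then forces $s_{i+(j'-j)} = s_i$ for all $i \geq j$.

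Next I would pick the least valid pair in the product order. Let $j_0$ be the smallest index $j$ such that the tail $(s_i)_{i \geq j}$ is purely periodic, and let $k_0$ be the fundamental (shortest) period of that tail. I claim $(j_0, k_0)$ is the least pair satisfying the stated equation. Any valid $(j', k')$ has $j' \geq j_0$ by minimality of $j_0$; moreover, for such $j'$ the tail $(s_i)_{i \geq j'}$ is a suffix of the purely periodic sequence starting at $j_0$, and therefore has the same fundamental period $k_0$. Consequently any period $k'$ of this tail is a positive multiple of $k_0$, giving $k' \geq k_0$. This establishes $(j_0, k_0) \leq (j', k')$ componentwise.

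Finally, for the equivalence of mean-payoff conditions I would use $(j_0, k_0)$ to split each partial sum. Writing $n = j_0 + \ell k_0 + r$ with $0 \leq r < k_0$, I decompose
\[
\sum_{i=1}^{n} w(s_i) \;=\; A + \ell\, S + R_n, \qquad A := \sum_{i=1}^{j_0} w(s_i), \qquad S := \sum_{j_0 < i \leq j_0 + k_0} w(s_i),
\]
where $R_n$ is a sum of at most $k_0$ weights and is therefore bounded independently of $n$ (by $k_0 \cdot \max_{q \in Q} |w(q)|$). Since $\ell/n \to 1/k_0$ as $n \to \infty$, the Ces\`aro limit $\lim_{n\to\infty} \tfrac{1}{n}\sum_{i=1}^{n} w(s_i) = S/k_0$ exists, and the $\liminf$ coincides with it. Since $k_0 \geq 1$, the condition $\liminf_n \tfrac{1}{n}\sum_{i=1}^n w(s_i) \geq 0$ is equivalent to $S \geq 0$, which is exactly the cycle-sum condition.
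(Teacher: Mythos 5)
Your proof is correct, and it fills in exactly the ``elementary calculation'' that the paper leaves to the cited literature: pigeonhole plus determinism of memoryless plays for eventual periodicity, minimality of the pair $(j_0,k_0)$ via the fact that every suffix of a purely periodic tail has the same fundamental period, and the Ces\`aro decomposition showing the average converges to $S/k_0$, so the $\liminf$ condition reduces to the cycle-sum condition. One cosmetic slip: the reason an infinite play never visits an exit is not condition $(\ddagger)$ (which bounds predecessors of exits) but the typing of $E$ itself --- exits occur only as targets, never as sources, so a maximal play ends upon reaching one; this does not affect the argument.
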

\myvspaf

The last lemma is fundamental in this paper, too.
It suggests us the notion of the denotation of a play (\cref{def:denoPlays}): while the MP condition takes an average by $1/n$~\cref{eq:mean_payoff_condition}, the second case of \cref{def:denoPlays} only sums up weights. It is also used in the proof of the \emph{decomposition equalities} for the trace operator (\cref{prop:deqtrace}).

 We now define the \emph{denotation} of a play, which shows the ``result'' of the play.
Here $\const{\eve}$ and $\const{\adam}$ mean that $\eve$ and $\adam$ win, respectively,
and any other result is ``pending.'' 
% In the definition, 
We divide plays into four cases: (i) going directly to an exit without visiting positions, (ii) going through positions to an exit, (iii) getting stuck in a position whose role is $\adam$, or satisfying the MP condition, (iv) getting stuck in a position whose role is $\eve$, or not satisfying the MP condition.

\begin{definition}[denotation $\denoplay{\pi}{\mpg{A}}$ of plays]
\myvspin
\label{def:denoPlays}
For a set $X$, we define $\fpmonad(X)\defeq X + \Real\times X + \{\const{\eve},\const{\adam}\}$.
Let $\mpg{A}=  (\bfn{m},\bfn{n},Q, E,\rho, w)$ be an oMPG.
The \emph{denotation} $\denoplay{(s_i)_{i\in I}}{\mpg{A}} \in \fpmonad(\nset{\nfr{n} + \nfl{m}}) $ of a play $(s_i)_{i\in I}$ is defined as
\myvspmathbef
\begin{myminipage}
\begin{alignat*}{2}
    % &s_{|I|-1}  &&\text{ if }|I| = 2\text{ and } s_{|I|-1} \text{ is an exit},\\
    &s_{1}  &&\text{ if }|I| = 2\text{ and } s_{1} \text{ is an exit},\\
    &\big(
    % \sum_{0 < i < |I|-1}
    \textstyle\sum_{i=1}^{|I|-2}
    w(s_i), s_{|I|-1}\big)  &&\text{ if }I \text{ is finite}, |I| > 2, s_{|I|-1}\text{ is an exit},\\
    &\const{\eve} &&\text{ if }I \text{ is finite and }\rho(s_{|I|-1}) = \adam \text{, or $(s_i)_{i\in I}$ satisfies the MP condition},\\
    &\const{\adam} &&\text{ if }I \text{ is finite and }\rho(s_{|I|-1}) = \eve \text{, or $(s_i)_{i\in I}$ fails the MP condition}.
\end{alignat*} 
\end{myminipage}
\myvspmathaf
We often omit the subscript $\mpg{A}$ when it is clear from the context.
\end{definition}
\myvspaf

Finally, we define the \emph{denotation of an entrance} $i$ of an oMPG. 
It is the collection, for each $\eve$-strategy $\tau_{\eve}$,
of the results of the induced plays against all $\adam$-strategies.
% It is what $\eve$ collects every `denotation of $\eve$-strategy $\tau_{\eve}$',
% which is what $\adam$ collects every `denotation of $\adam$-strategy $\tau_{\adam}$',
% which is the denotation of the induced play $\indplay{\tau_{\eve}}{\tau_{\adam}}{i}$.

\begin{definition}[denotation of  an entrance]
\myvspin
\label{def:denOfEntrances}
Let $\mpg{A}=  (\bfn{m},\bfn{n},Q, E,\rho, w)$ be an oMPG, and $i\in \nset{\nfr{m}+\nfl{n}}$. The denotation $\denoentry{i}{\mpg{A}}$ of an entrance $i$ is defined by $\denoentry{i}{\mpg{A}} \defeq  \big\{ \{ \denoplay{\indplay{\tau_{\eve}}{\tau_{\adam}}{i}}{\mpg{A}}
 \in \fpmonad(\nset{\nfr{n} + \nfl{m}})
\ | \ \tau_{\adam}\in \stra{\adam}{\mpg{A}}  \} 
\ \big| \allowbreak\ 
\tau_{\eve}\in \stra{\eve}{\mpg{A}}\big\}
%\in \powerfunc(\powerfunc(\fpmonad(\nset{\nfr{n} + \nfl{m}})))
$. 
The entrance $i$ is \emph{winning} if $\{ \const{\eve}\}\in \denoentry{i}{\mpg{A}}$, 
and \emph{losing} if $\const{\adam}\in S$ for all $S\in \denoentry{i}{\mpg{A}}$. 
Otherwise, the entrance $i$ is called \emph{pending}.
\end{definition}
\myvspaf

For an oMPG $\mpg{A}$ that has no exits, there is no pending entrance, and an entrance is winning if and only if the corresponding initial position of the equivalent MPG is winning (in the conventional sense). Indeed, the entrance $i$ of $\mpg{A}$ is winning if there is an $\eve$-strategy $\tau_{\eve}$ such that for any $\adam$-strategy $\tau_{\adam}$, $\denoplay{\indplay{\tau_{\eve}}{\tau_{\adam}}{i}}{\mpg{A}} = \const{\eve}$, and losing otherwise.

\begin{example}
\myvspin
Let $\mpg{A}:(3, 1)\rightarrow (2, 1)$ be the oMPG in~\cref{fig:openMPG}, where the shape of each position indicates the role, circles for player $\eve$ and rectangles for player $\adam$, and the label corresponds to the assigned weight. 
Note that each of $1_{\dl}, 1_{\dr}, 2_{\dr}$ labels an entrance and an exit in~\cref{fig:openMPG}; for distinction, 
we write $1'_{\dl}, 1'_{\dr}, 2'_{\dr}$ for the \emph{exits} labeled by $1_{\dl}, 1_{\dr}, 2_{\dr}$. 
%The denotation $\denoentry{1_{\dr}}{\mpg{A}}$ is given by
Then $\denoentry{1_{\dr}}{\mpg{A}} = \big\{\{ 1'_{\dr}\} \big\}$, and
%$\denoentry{2_{\dr}}{\mpg{A}}$ is given by 
$\denoentry{2_{\dr}}{\mpg{A}} = \big\{\{ (0.6, 1'_{\dl}), (-2.5, 2'_{\dr})\},\ \{ \const{\eve}, (-2.5, 2'_{\dr})\}\big\}$.
We can similarly calculate $\denoentry{3_{\dr}}{\mpg{A}}$ and $\denoentry{1_{\dl}}{\mpg{A}}$.
All of the four entrances are pending.
\end{example}
\myvspaf

\subsection{Traced Symmetric Monoidal Category of Rightward Open MPGs}
\label{sec:TSMCofROMPG}

We introduce a traced symmetric monoidal category (TSMC) $\roMPG$ of ``unidirectional'' oMPGs.
We call the latter \emph{rightward open MPGs (roMPGs)}, since they are defined as oMPGs whose open ends are limited to rightward ones. 
Later in \cref{sec:compCCofOMPGbyInt} we apply the Int construction  to $\roMPG$ (\cref{fig:openMPG}).

\begin{definition}[rightward open MPG (roMPG)]\label{def:roMPG}
\myvspin
A \emph{rightward open MPG} $\mpg{A} = (\bfn{m}, \bfn{n}, Q, E, \rho, w)$ is an oMPG such that $\bfn{m} = (m, 0)$ and $\bfn{n} = (n, 0)$. 
In this case, we say that an roMPG $\mpg{A}$ is from $m$ to $n$, writing $\mpg{A}:m\rightarrow n$.
\end{definition}
\myvspaf

 \begin{figure}[tbp]
     \centering
     \scalebox{0.8}{
     \begin{tikzpicture}[
              innode/.style={draw, rectangle, minimum size=1cm},
              innodemini/.style={draw, rectangle, minimum size=0.5cm},
              interface/.style={inner sep=0}
              ]
              \node[interface] (rdo1) at (0.5cm, -0.25cm) {$1$};
              \node[innode] (game1) at (1.5cm, 0cm) {$\mpg{A}$};
              \node[innodemini] (game2) at (3cm, 0.25cm) {$\mpg{B}$};
              % \node[innode] (game3) at (3.5cm, -0.75cm) {$\mpg{C}$};
              \node[innode] (game3) at (4.5cm, 0cm) {$\mpg{C}$};
              \node[interface] (rcdo1) at (5.5cm, -0.25cm) {$1$};
              \draw[->] (rdo1) to ($(game1.south west)!0.5!(game1.west)$);
              \draw[->] ($(game1.south east)!0.5!(game1.east)$) to ($(game3.south west)!0.5!(game3.west)$);
              \draw[->] ($(game3.south east)!0.5!(game3.east)$) to (rcdo1);
              \draw[->] ($(game1.north east)!0.5!(game1.east)$) to (game2.west);
              \draw[->] (game2.east) to ($(game3.north west)!0.5!(game3.west)$);
              \draw[->] (1cm, 0.75cm) arc [radius=0.3, start angle = 90, end angle=270];
              \draw[<-] (1cm, 0.75cm) to (5cm, 0.75cm);
              \draw[->] (5cm, 0.15cm) arc [radius=0.3, start angle = 270, end angle=450];
              \node[interface] (equal) at (6cm, 0cm) {$=$}; 
              \node[interface] (trl) at (6.9cm, 0cm) {$\trsyb_{1;1,1}\Biggl($};
              \node[interface] (trr) at (13.8cm, 0cm) {$\Biggr)$};
              \node[interface] (rdo11) at (7.5cm, 0.25cm) {$1$};
              \node[interface] (rdo12) at (7.5cm, -0.25cm) {$2$};
              \node[innode] (game11) at (8.5cm, 0cm) {$\mpg{A}$};
              \node[interface] (seq1) at (9.5cm, 0cm) {\scalebox{1.5}{$\seqcomp$}};
              \node[innodemini] (game22) at (10.5cm, 0.25cm) {$\mpg{B}$};
              \node[interface] (oplus1) at (10.5cm, -0.2cm) {$\oplus$};
              \draw[->] (9.7cm, -0.4cm) to (11.3cm, -0.4cm);
              \node[interface] (seq2) at (11.5cm, 0cm) {\scalebox{1.5}{$\seqcomp$}};
              \node[innode] (game32) at (12.5cm, 0cm) {$\mpg{C}$};
              \node[interface] (rcdo11) at (13.5cm, 0.25cm) {$1$};
              \node[interface] (rcdo12) at (13.5cm, -0.25cm) {$2$};
              \draw[->] (rdo11) to ($(game11.west)!0.5!(game11.north west)$);
              \draw[->] (rdo12) to ($(game11.west)!0.5!(game11.south west)$);
              \draw[->] ($(game11.east)!0.5!(game11.north east)$) to (9.3cm, 0.25cm);
              \draw[->] ($(game11.east)!0.5!(game11.south east)$) to (9.3cm, -0.25cm);
              \draw[->] (9.7cm, 0.25cm) to (game22.west);
              \draw[->] (game22.east) to (11.3cm, 0.25cm);
              \draw[->] (11.7cm, 0.25cm) to ($(game32.west)!0.5!(game32.north west)$);
              \draw[->] (11.7cm, -0.25cm) to ($(game32.west)!0.5!(game32.south west)$);
             \draw[->] ($(game32.east)!0.5!(game32.north east)$) to (rcdo11);
             \draw[->] ($(game32.east)!0.5!(game32.south east)$) to (rcdo12);
            %   \node[interface] (rcdo1) at (3cm, 0.25cm) {$1$};
            %   \node[innode] (game1) at (1.5cm, 0cm) {$\mpg{A}_1$};
            %   \node[innode] (game2) at (1.5cm, -1.5cm) {$\mpg{A}_2$};
            %   \node[innode] (gamem) at (1.5cm, -4.5cm) {$\mpg{A}_m$};
            %   \draw[->] (rdo1) to ($(game1.north west)!0.5!(game1.west)$);
            %   \draw[->] ($(game1.north east)!0.5!(game1.east)$) to (rcdo1);
            %   \draw[->] (1cm, -0.3cm) arc [radius=0.5, start angle = 90, end angle=270];
            %   \draw[->] (1cm, -1.8cm) arc [radius=0.5, start angle = 90, end angle=270];
            %   \draw[->] (1cm, -3.3cm) arc [radius=0.5, start angle = 90, end angle=270];
            %   \draw[->] (1cm, -4.8cm) arc [radius=0.5, start angle = 90, end angle=270];
            %   \draw[->] (1cm, -5.8cm) to (2cm, -5.8cm);
            %   \draw[->] (2cm, -5.8cm) arc [radius=0.5, start angle = 270, end angle=450];
            %   \draw[->] (2cm, -4.3cm) arc [radius=0.5, start angle = 270, end angle=450];
            % \draw[->] (2cm, -2.8cm) arc [radius=0.5, start angle = 270, end angle=450];
            % \draw[->] (2cm, -1.3cm) arc [radius=0.5, start angle = 270, end angle=450];
            %   \path (game2) -- node[auto=false]{\vdots} (gamem);
          \end{tikzpicture}
          }
     \caption{A string diagram of MPGs, with algebraic operations $\seqcomp$, $\oplus$, $\trsyb$.}
     \label{fig:algebraicOperations}
 \end{figure}

To show that roMPGs form a TSMC, we define sequential composition $\seqcomp$, sum $\oplus$, and trace operator $\trsyb$ on roMPGs.
Although the following definitions of $\seqcomp, \oplus, \trsyb$ look complicated, the intuition behind them is quite clear: see \cref{fig:seqCompOplusIllustrated,fig:trace_operator},
restricting to oMPGs whose open ends are only rightward.
\cref{fig:algebraicOperations} shows an example involving all three operations.

The sequential composition $\mpg{A}\seqcomp\mpg{B}$ of roMPGs $\mpg{A}$ and $\mpg{B}$ connects (and hides) the exits in $\mpg{A}$ and the corresponding entrances in $\mpg{B}$.

\begin{definition}[sequential composition $\seqcomp$ of roMPGs]
\myvspin
    Let $\mpg{A}:m\rightarrow l$ and $\mpg{B}:l\rightarrow n$ be roMPGs.
    Their sequential composition $\mpg{A}\seqcomp \mpg{B}$ is given by $\mpg{A}\seqcomp \mpg{B}\defeq (m, n, Q^{\mpg{A}} + Q^{\mpg{B}},E^{\mpg{A}\seqcomp \mpg{B}}, [\rho^{\mpg{A}},\rho^{\mpg{B}}], [w^{\mpg{A}},w^{\mpg{B}}])$, where
(i)  $E^{\mpg{A}\seqcomp \mpg{B}}$ is defined in the following natural manner:
\myvspmathbef
\begin{myminipage}
  \begin{alignat*}{3}
      & \text{for $s\in \nset{m}+Q^{\mpg{A}}$ and $s'\in Q^{\mpg{A}} $, }
      && \text{$(s, s')\in E^{\mpg{A}\seqcomp \mpg{B}} \text{ if } (s, s')\in E^{\mpg{A}}$,}
    \\& \text{for $s\in \nset{m} + Q^{\mpg{A}}$ and $s'\in \nset{n}+ Q^{\mpg{B}} $, }
      && \text{$(s, s')\in E^{\mpg{A}\seqcomp \mpg{B}} \text{ if } \exists i\in \nset{l}. \ (s, i)\in  E^{\mpg{A}} \land (i, s')\in E^{\mpg{B}}$,}
    \\& \text{for $s\in Q^{\mpg{B}}$ and $s'\in \nset{n}+ Q^{\mpg{B}} $, }
      && \text{$(s, s')\in E^{\mpg{A}\seqcomp \mpg{B}} \text{ if } (s, s')\in E^{\mpg{B}}$,}
      % and 
    \\& \text{for $s\in Q^{\mpg{B}}$ and $s'\in Q^{\mpg{A}} $,}  
      && \text{$(s, s') \notin E^{\mpg{A}\seqcomp \mpg{B}}$,}
  \end{alignat*}
\end{myminipage}
\myvspmathaf
  % \begin{itemize}
  %   \item for $s\in \nset{m}+Q^{\mpg{A}}$, and $s'\in Q^{\mpg{A}} $,  $(s, s')\in E^{\mpg{A}\seqcomp \mpg{B}} \text{ if } (s, s')\in E^{\mpg{A}}$,
  %   \item for $s\in \nset{m} + Q^{\mpg{A}}$, and $s'\in \nset{n}+ Q^{\mpg{B}} $,  $(s, s')\in E^{\mpg{A}\seqcomp \mpg{B}} \text{ if there is }i\in \nset{l} \text{ such that }(s, i)\in  E^{\mpg{A}}\text{ and }(i, s')\in E^{\mpg{B}}$,
  %   \item for $s\in Q^{\mpg{B}}$, and $s'\in \nset{n}+ Q^{\mpg{B}} $, 
  %         $(s, s')\in E^{\mpg{A}\seqcomp \mpg{B}} \text{ if } (s, s')\in E^{\mpg{B}}$,
  %     and 
  %   \item for $s\in Q^{\mpg{B}}$, and $s'\in Q^{\mpg{A}} $,   $(s, s') \notin E^{\mpg{A}\seqcomp \mpg{B}}$.
  % \end{itemize}
      % $
      % \Big(E^{\mpg{A}}\backslash \big((\nset{m}+Q^{\mpg{A}})\times \nset{l}\big) \Big) \cup \Big(E^{\mpg{B}}\backslash \big(\nset{l}\times(\nset{n}+Q^{\mpg{B}}) \big) \Big)\cup \{(s, s')\mid \text{ there exists }i\in \nset{l} \text{ such that }(s, i)\in E^{\mpg{A}} \text{ and }(i, s')\in E^{\mpg{B}}\}$, 
      (ii) $[\rho^{\mpg{A}},\rho^{\mpg{B}}]: Q^{\mpg{A}} + Q^{\mpg{B}} \rightarrow \{\eve, \adam\}$ combines $\rho^{\mpg{A}}$, $\rho^{\mpg{B}}$ by case distinction, and similarly
%     \item $[w^{\mpg{A}},w^{\mpg{B}}]: Q^{\mpg{A}} + Q^{\mpg{B}} \rightarrow \Real$ is defined as same as $[\rho^{\mpg{A}},\rho^{\mpg{B}}]$.
      (iii) $[w^{\mpg{A}},w^{\mpg{B}}]: Q^{\mpg{A}} + Q^{\mpg{B}} \rightarrow \Real$
      combines $w^{\mpg{A}}$, $w^{\mpg{B}}$ by case distinction.
\end{definition}
\myvspaf

Defining sum $\oplus$ of roMPGs is straightforward; see \cref{sec:DefOfROMPG}. 

The trace operator $\trsyb$ makes loops in roMPGs, from ``upper'' side (see \cref{fig:trace_operator}).
This plays an important role for the ``bidirectional'' sequential composition in the Int construction.

 \begin{definition}[trace operator $\trace{l}{m}{n}{\mpg{A}}$ over roMPGs]\label{def:trROMPG}
\myvspin
Let $\mpg{A}:l+m\rightarrow l+n$ be an roMPG. The trace $\trace{l}{m}{n}{\mpg{A}}:m\rightarrow n$ of $\mpg{A}$ is given by $\trace{l}{m}{n}{\mpg{A}}\defeq \big( m, n, Q^{\mpg{A}}, E^{\trace{l}{m}{n}{\mpg{A}}}, \rho^{\mpg{A}}, w^{\mpg{A}}\big)$, where 
$E^{\trace{l}{m}{n}{\mpg{A}}} \defeq 
\big\{ (s, s')\in (\nset{m} + Q^{\mpg{A}})\times (\nset{n}+Q^{\mpg{A}}) \ \big| \ 
\exists k\in \Nat.\ \forall j \in \nset{k}.\ \forall i_j\in \nset{l}.\allowbreak \ 
\lplus{l}{s} \rightarrow_{\mpg{A}} i_1 \rightarrow_{\mpg{A}} \cdots \rightarrow_{\mpg{A}} i_k \rightarrow_{\mpg{A}} \lplus{l}{s'} 
\big\}$, 
and for $l\in \Nat$ and $s\in \nset{m}+Q$ we define $\lplus{l}{s}\in \nset{l+m}+Q$ by: $\lplus{l}{s} \defeq l + s$ if $s\in \nset{m}$, and $\lplus{l}{s} \defeq s$ if $s\in Q$.
 \end{definition}
\myvspaf
% \subsection{TSMC Equations between roMPGs}

Note that $i_1,\dotsc, i_k$ above are open ends (and not positions).

The algebraic operations $\seqcomp, \oplus, \trsyb$ on roMPGs (with trivial constants, identity $\id$ and swap $\swapsyb$) satisfy the \emph{equational axioms of traced symmetric monoidal category (TSMC)},
whose definition is omitted; see~\cite{Watanabe21,Watanabe23} for details.
Precisely, 
%in order for roMPGs satisfy such axioms,
for this
we need to define the \emph{roMPG isomorphisms} (see~\cref{sec:DefOfROMPG}) and consider the quotient of the equivalence relation, where two games are equivalent if there is an isomorphism between them.

Finally, the \emph{unidirectional syntactic category} $\roMPG$ is defined in the following statement. 
%Algebraic operations $\seqcomp, \oplus, \trsyb$ of arrows on $\roMPG$ are graphically described in the traced monoidal string diagrams.

\begin{proposition}[TSMC $\roMPG$]
\myvspin
Let $\roMPG$ be the category whose objects are natural numbers, whose arrows are
%roMPGs modulo the roMPG isomorphisms,
(equivalence classes of) roMPGs,
and whose identity and composition are $\id$ and $\seqcomp$. Then the data $(\roMPG, \oplus, 0, \trsyb)$ constitutes a (strict) TSMC.
\qed
\end{proposition}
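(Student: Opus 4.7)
My plan is to verify the TSMC axioms by unfolding each one in terms of the explicit combinatorial definitions of $\seqcomp$, $\oplus$, and $\trsyb$ on roMPGs, and then reading off the required equality of edge sets (plus agreement of $\rho$ and $w$) up to the notion of roMPG isomorphism referenced in \cref{sec:DefOfROMPG}. First I would check that $\seqcomp$, $\oplus$, and $\trsyb$ are well-defined on isomorphism classes: an isomorphism renames positions and permutes open ends, so if we start from isomorphic inputs the constructed edge relations, roles, and weights are isomorphic componentwise. With that, the identity roMPG on $m$ (a bijection $\nset{m} \to \nset{m}$ with no positions) is clearly a two-sided unit for $\seqcomp$, because the edge-set clause that composes through the trivial middle simply identifies matching open ends.

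Next I would verify that $(\roMPG,\oplus,0)$ is a (strict) symmetric monoidal category. Associativity and the unit laws for $\oplus$ hold on the nose because $\oplus$ acts as disjoint union of position sets and as a shift of open-end indices; the strict versions of the coherence diagrams reduce to equalities of coproducts in $\Sets$. The symmetry $\swap{m}{n}$ is the roMPG with no positions and the obvious swapping bijection on open ends; its naturality and the hexagon axiom are straightforward bookkeeping about index shifts. Bifunctoriality of $\oplus$ with respect to $\seqcomp$, i.e.\ $(\mpg{A}\seqcomp\mpg{B})\oplus(\mpg{C}\seqcomp\mpg{D}) = (\mpg{A}\oplus\mpg{C})\seqcomp(\mpg{B}\oplus\mpg{D})$, follows because both sides declare an edge $(s,s')$ exactly when either the two endpoints lie inside one of the summands in the expected way, or there is a unique open end of the intermediate boundary witnessing the composition; that witness lives in one of the two strands, matching either construction.

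The main obstacle, and the real content of the proof, is verifying the trace axioms: left/right tightening (naturality of $\trsyb$ in the input and output wires), superposing, exchange (vanishing for nested traces), and yanking. All of these reduce to the key observation behind \cref{def:trROMPG}: an edge in $\trace{l}{m}{n}{\mpg{A}}$ is a path in $\mpg{A}$ whose interior vertices are open ends in $\nset{l}$ and whose endpoints are either non-traced open ends or positions. Hence every axiom becomes a statement about how such ``open-end paths'' behave under grafting. For tightening, a path through the traced region in $(f\oplus\id)\seqcomp\mpg{A}\seqcomp(g\oplus\id)$ is obtained by pre/post-pending the finite paths coming from $f$ and $g$, and uniqueness of successors ($\dagger$) together with at-most-one predecessor ($\ddagger$) guarantees no spurious duplication. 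Vanishing for $\trsyb_{l_1+l_2;m,n}=\trsyb_{l_1;m,n}\circ\trsyb_{l_2;l_1+m,l_1+n}$ follows by splitting each open-end path according to which traced block its intermediate ends belong to. Yanking reduces to the fact that tracing the symmetry collapses a length-one path through the loop to an identity. Throughout, the nontrivial step is to keep the indexing bookkeeping consistent with the $\lplus{l}{-}$ shift operator; this is routine but error-prone, so I would organize it as a single lemma that unfolds $E^{\trace{l}{m}{n}{\mpg{A}}}$ in terms of open-end paths and reuse it uniformly.

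Finally, since positions, roles, and weights are transported verbatim by every operation (only the edge relation changes), agreement of $\rho$ and $w$ on both sides of each axiom is automatic; equality of the edge relations up to the renaming isomorphism is what the above arguments establish. Packaging all of this gives the claim that $(\roMPG,\oplus,0,\trsyb)$ is a strict TSMC.
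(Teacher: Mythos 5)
Your plan is essentially the paper's own route: the paper states this proposition without a written proof, noting only that one must quotient roMPGs by isomorphism and then verify the TSMC equations directly (deferring details to~\cite{Watanabe21,Watanabe23}), and that is exactly the direct axiom-by-axiom verification you outline, including the key unfolding of $E^{\trace{l}{m}{n}{\mpg{A}}}$ as paths through traced open ends. The only points you leave implicit are the sliding/dinaturality axiom in the traced wire and the closure check that $\seqcomp$, $\oplus$, $\trsyb$ again yield roMPGs satisfying $(\dagger)$ and $(\ddagger)$ (for $\trsyb$, condition $(\ddagger)$ is what rules out a play from an entrance cycling forever among traced open ends, so that $(\dagger)$ survives), but both follow from the same open-end-path lemma you propose, so the approach is sound.
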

\myvspaf

\subsection{Compact Closed Category of Open MPGs}
\label{sec:compCCofOMPGbyInt}

Finally, we define the \emph{bidirectional} syntactic category $\oMPG$ for oMPGs, by applying to $\roMPG$ the \emph{$\Int$ construction}~\cite{joyal1996}, a general construction from TSMCs to compCCs. Thus, in terms of category theory, the following definition means that $\oMPG \defeq \Int(\roMPG)$.

\begin{definition}[category $\oMPG$]
\myvspin\label{def:oMPGCat}
 The \emph{category $\oMPG$ of open MPGs} is defined as follows. Its objects are pairs $(\nfr{m},\nfl{m})$ of natural numbers. Its arrows are defined by rightward open MPGs as follows,
 where the double line $=\joinrel=$ means ``is the same thing as'':
\myvspmathbef[-0.8em]
\begin{myminipage}
\begin{equation}\label{eq:arrowsOfOMPGCat}
 \vcenter{\infer={ 
   \text{an arrow } \mpg{A}\colon \nfr{m}+\nfl{n}\longrightarrow  \nfr{n}+\nfl{m}
   \text{ in $\roMPG$, i.e.\ an roMPG}
   }{
   \text{an arrow } (\nfr{m}, \nfl{m}) \longrightarrow (\nfr{n}, \nfl{n}) 
   \text{ in $\oMPG$}
   }}
\end{equation}
\end{myminipage}
\myvspmathaf
% The bottom-to-top correspondence in~\cref{eq:arrowsOfOMPGCat} is denoted by $\Gamma$. 
\end{definition}
\myvspaf[-1.2em]

For an oMPG $((\nfr{m}, \nfl{m}), (\nfr{n}, \nfl{n}), Q, E,\rho, w)$ (in the style of \cref{def:oMPG}),
the corresponding arrow from $(\nfr{m}, \nfl{m})$ to $(\nfr{n}, \nfl{n})$ in $\oMPG$---which by \cref{def:oMPGCat}  must be an roMPG---is $((\nfr{m}+\nfl{n},0), (\nfr{n}+\nfl{m}, 0), Q, E,\rho, w)$. 
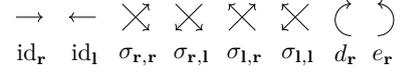
\begin{wrapfigure}[4]{r}{0pt}
\scalebox{0.7}{
  \begin{minipage}[t]{20em}
  \vspace{-1.5em}
    \begin{tikzpicture}[
              innode/.style={draw, rectangle, minimum size=1cm},
              innodemini/.style={draw, rectangle, minimum size=0.5cm},
              interface/.style={inner sep=0},
              innodeeve/.style={draw, circle, minimum size=0.2cm},
              innodeadam/.style={draw, rectangle, minimum size=0.2cm},
              ]
              \draw[->] (0cm, 0cm) to (0.5cm, 0cm);
              \node[interface] (idr) at (0.3cm, -0.7cm) {\scalebox{1.3}{$\id_{\dr}$}};
              \draw[<-] (1cm, 0cm) to (1.5cm, 0cm);
              \node[interface] (idl) at (1.3cm, -0.7cm) {\scalebox{1.3}{$\id_{\dl}$}};
              \draw[->] (2cm, 0.25cm) to (2.5cm, -0.25cm);
              \draw[->] (2cm, -0.25cm) to (2.5cm, 0.25cm);
              \node[interface] (swaprr) at (2.3cm, -0.75cm) {\scalebox{1.3}{$\swap{\dr}{\dr}$}};
              \draw[->] (3cm, 0.25cm) to (3.5cm, -0.25cm);
              \draw[<-] (3cm, -0.25cm) to (3.5cm, 0.25cm);
              \node[interface] (swaprl) at (3.3cm, -0.75cm) {\scalebox{1.3}{$\swap{\dr}{\dl}$}};
              \draw[<-] (4cm, 0.25cm) to (4.5cm, -0.25cm);
              \draw[->] (4cm, -0.25cm) to (4.5cm, 0.25cm);
              \node[interface] (swaplr) at (4.3cm, -0.75cm) {\scalebox{1.3}{$\swap{\dl}{\dr}$}};
              \draw[<-] (5cm, 0.25cm) to (5.5cm, -0.25cm);
              \draw[<-] (5cm, -0.25cm) to (5.5cm, 0.25cm);
              \node[interface] (swapll) at (5.3cm, -0.75cm) {\scalebox{1.3}{$\swap{\dl}{\dl}$}};
              \draw[<-] (6.3cm, 0.25cm) arc [radius=0.3, start angle = 90, end angle=270];
              \node[interface] (dr) at (6.2cm, -0.7cm) {\scalebox{1.3}{$d_{\dr}$}};
              \draw[->] (6.8cm, -0.35cm) arc [radius=0.3, start angle = 270, end angle=450];
              \node[interface] (er) at (6.9cm, -0.75cm) {\scalebox{1.3}{$e_{\dr}$}};
              % \node[interface] (rdo1) at (0.5cm, -0.25cm) {$1$};
              % \node[innodeeve] (game1) at (1cm, 0cm) {$3.1$};
              % \node[innodeadam] (game2) at (2.5cm, 0.25cm) {$-4.5$};
              % % \node[innode] (game3) at (3.5cm, -0.75cm) {$\mpg{C}$};
              % \node[innodeadam] (game3) at (4cm, 0cm) {$2$};
              % \node[interface] (rcdo1) at (5cm, 1.25cm) {$1_{\dr}$};
              % \node[interface] (rcdo2) at (5cm, 0.25cm) {$2_{\dr}$};
              % \node[interface] (rcdo3) at (5cm, -0.25cm) {$1_{\dl}$};
              % \node[interface] (rdo1) at (0cm, 1.25cm) {$1_{\dr}$};
              % \node[interface] (rdo2) at (0cm, 0.75cm) {$2_{\dr}$};
              % \node[interface] (rdo3) at (0cm, 0.25cm) {$3_{\dr}$};
              % \node[interface] (rdo4) at (0cm, -0.25cm) {$1_{\dl}$};
              % \draw[->] (rdo1) to (rcdo1);
              % \draw[->] (rdo2) to (game2);
              % \draw[->] (rdo3) to (game1);
              % \draw[<-] (rdo4) to (game1);
              % \draw[->] (game1) to (game2);
              % \draw[->] (game2) to (game3);
              % \draw[->] ($(game3.south west)!0.3!(game3.west)$) to ($(game1.south east)!0.3!(game1.east)$);
              % \draw[->] (game3) to (rcdo2);
              % \draw[<-] (game3) to (rcdo3);
              % \draw[->] ($(game1.south east)!0.5!(game1.east)$) to ($(game3.south west)!0.5!(game3.west)$);
              % \draw[->] ($(game3.south east)!0.5!(game3.east)$) to (rcdo1);
              % \draw[->] ($(game1.north east)!0.5!(game1.east)$) to (game2.west);
              % \draw[->] (game2.east) to ($(game3.north west)!0.5!(game3.west)$);
              % \draw[->] (1cm, 0.75cm) arc [radius=0.3, start angle = 90, end angle=270];
              % \draw[<-] (1cm, 0.75cm) to (5cm, 0.75cm);
              % \draw[->] (5cm, 0.15cm) arc [radius=0.3, start angle = 270, end angle=450];
          \end{tikzpicture}
    \vspace{-2em}
    \end{minipage}
    }
    \caption{Constants of compCC.}
    \label{fig:id-swap-co-unit}
  \end{wrapfigure}
  
The compact closed category $\oMPG$ has two algebraic operations $\seqcomp$ and $\oplus$, as illustrated 
% intuition is just as
 in \cref{fig:seqCompOplusIllustrated}.
The rest of the compact closed structure consists of  the constants shown in \cref{fig:id-swap-co-unit}, namely
 \emph{identity} ($\id$), \emph{swap} ($\swapsyb$), \emph{unit} ($d$), and \emph{counit} ($e$).
% Their graphical representations are shown in \cref{fig:id-swap-co-unit}.
%
All these algebraic operations are automatically derived by the $\Int$ construction.
The explicit definitions of $\seqcomp$ and $\oplus$ can be found in \cref{sec:DefOfOMPG}.
% We also note that these operations are used for correctness (\cref{thm:fwpFunctorIsCmpactClosed}), but are not necessary for our implementation (where we instead use a \emph{free prop} as in \cite{Watanabe21}).
% (In the implementation, we instead use a \emph{free prop} as in \cite{Watanabe21}, where the compact closed structures are freely---or syntactically---added as in free algebras.
% We need the compact closed structures rather on the semantic category given in \cref{sec:semanticCategories}).

The Int construction also ensures that
the algebraic operations of $\oMPG$ above satisfy \emph{the equational axioms of compCCs} 
(see~\cite{joyal1996}).

\begin{theorem}[$\oMPG$ is a compCC]
\myvspin\label{thm:oMDPIsCompactClosed}
 The category $\oMPG$ %(in \cref{def:oMPGCat}) 
 %equipped with the operations $\seqcomp,\oplus$, 
 is a compact closed category.
\qed
\end{theorem}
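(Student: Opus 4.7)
The plan is essentially to invoke the general categorical result that the $\Int$ construction takes any traced symmetric monoidal category to a compact closed category~\cite{joyal1996}, and then apply it to the TSMC $\roMPG$ established in the preceding proposition. Since \cref{def:oMPGCat} sets $\oMPG \defeq \Int(\roMPG)$, the compact closed structure on $\oMPG$ is the canonical one induced by Joyal--Street--Verity's theorem, with the constants, $\seqcomp$, and $\oplus$ being those illustrated in \cref{fig:seqCompOplusIllustrated,fig:id-swap-co-unit}.

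More concretely, the first step is to recall that the preceding proposition gives us that $(\roMPG, \oplus, 0, \trsyb)$ is a strict TSMC; in particular the trace $\trsyb$ defined in \cref{def:trROMPG} satisfies the usual trace axioms (vanishing, superposing, yanking, naturality, dinaturality) on the quotient by roMPG isomorphism. The second step is to spell out, following~\cite{joyal1996}, what the compact closed structure on $\Int(\roMPG)$ is: objects $(\nfr{m},\nfl{m})$ are pairs, and an arrow $(\nfr{m},\nfl{m}) \to (\nfr{n},\nfl{n})$ in $\Int(\roMPG)$ is an arrow $\nfr{m}+\nfl{n} \to \nfr{n}+\nfl{m}$ in $\roMPG$, matching exactly \cref{eq:arrowsOfOMPGCat}; sequential composition is obtained by ``feeding back'' the leftward open ends through the trace $\trsyb$ in $\roMPG$; the tensor $\oplus$ combines arities componentwise and uses $\roMPG$'s $\oplus$ composed with symmetries; the unit and counit $d_{\dr}, e_{\dr}$ (and their leftward variants) are instantiated by identities in $\roMPG$ via the standard Int recipe.

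The third step is simply to cite the Joyal--Street--Verity theorem: it guarantees that $\Int$ applied to any TSMC yields a compCC, and that the resulting data satisfies the equational axioms of compact closure (symmetric monoidal axioms together with the snake/triangle identities for $d$ and $e$). No axiom verification has to be redone by hand, since it is purely a consequence of the trace axioms of $\roMPG$ and the generic Int construction. Concretely one would write: ``By~\cite[Thm.~3]{joyal1996}, for any TSMC $\mathbb{C}$, the category $\Int(\mathbb{C})$ is compact closed. Applying this to the TSMC $\roMPG$ and using $\oMPG = \Int(\roMPG)$ yields the claim.''

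The only genuine subtlety, rather than an obstacle, is the minor bookkeeping that the roMPG-level operations $\seqcomp, \oplus, \trsyb$ are strictly associative/unital only up to the quotient by roMPG isomorphism; once this is settled as in~\cite{Watanabe21,Watanabe23}, the strict TSMC hypothesis of Joyal--Street--Verity is met and the rest is a pure invocation. Thus I do not foresee any hard step; the proof is essentially a one-line corollary of the preceding proposition together with~\cite{joyal1996}, with the explicit formulas for $\seqcomp$ and $\oplus$ on $\oMPG$ deferred to \cref{sec:DefOfOMPG} as the statement already indicates.
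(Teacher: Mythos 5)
Your proposal matches the paper's argument exactly: the paper also obtains \cref{thm:oMDPIsCompactClosed} as an immediate consequence of the Joyal--Street--Verity $\Int$ construction applied to the TSMC $\roMPG$ (with $\oMPG \defeq \Int(\roMPG)$ as in \cref{def:oMPGCat}), with the compact closed operations and constants automatically derived and the explicit formulas deferred to \cref{sec:DefOfOMPG}. Your remark on the quotient by roMPG isomorphism to get strictness is likewise the same point the paper makes before stating the TSMC proposition, so there is nothing to add.
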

\myvspaf
% \begin{definition}[a compCC $\oMPG$]
%     The compCC $\oMPG$ of open MPGs is given by $\oMPG \defeq \Int(\roMPG)$.
% \end{definition}

\section{Decomposition Equalities for Memoryless Plays}
\label{sec:decEq}

Here we introduce \emph{decomposition equalities}, a cornerstone of our compositional  MPG solution. 
% By exploiting the decomposition equalities, we can solve an entire MPG by combining the solutions of sub-oMPGs, rather than solving the entire MPG directly. 

Looking at our workflow (\cref{fig:catsFunctors}), 
the bottom level $\fpwpFunctor\colon \roPlay\to\fpsemCat$ is about ``plays'': 
the category $\roPlay$ has as its arrows \emph{rightward open play graphs (roPGs)}---they are roughly 
 (indexed families of) memoryless plays.
% ; the semantic category $\fpsemCat$ comes with 
Once we get $\fpwpFunctor$ compositional (i.e.\ preserving traced monoidal structures), this compositionality $\fpwpFunctor$  carries over to the middle and top levels of \cref{fig:catsFunctors} via categorical constructions, eventually realizing a compositional MPG solution $\fwpFunctor$. Decomposition equalities are a key to the compositionality of the bottom level $\fpwpFunctor$.

% The decomposition equalities are a key property to the compositionality (i.e.\ preservation of traced monoidal structures) of the semantic functor $\fpwpFunctor$ in \cref{fig:catsFunctors}. 

% In the overview of \cref{fig:catsFunctors},
% we show the decomposition equalities in the bottom level of `plays'.
% The category $\roPlay$ has as arrows \emph{rightward open play graphs (roPGs)}, 
% which are, roughly, (indexed families of) induced plays.
% The decomposition equalities are (the essential part of) the property that the semantic functor $\fpwpFunctor$ (in the figure) preserves the algebraic operations $;$, $\oplus$, $\trsyb$.
% Here we concentrate on our essential idea with saving the use of category theory, and in the next section we incorporate this key result, the decomposition equalities, into the semantics that we assemble using category theory.

%In this section, we show decomposition equalities not for roMPGs (nor oMPGs) themselves but for \emph{rightward open play graphs (roPGs)}, which can be seen as (indexed families of) induced plays.
%The reason for this is that decomposition equalities for roMPGs and oMPGs can be derived from those for roPGs by using well-known categorical constructions, namely \emph{change of base} and \emph{$\Int$ construction}.  The decomposition equalities for induced plays lead to the semantic categories for roPGs, roMPGs, and oMPGs in~\cref{sec:semanticCategories}.

We first define an roPG. It is intuitively a collection of  plays, one for each entrance $i$, indexed by $i$.
% a family of induced plays from an entrance $i$ (indexed by all the entrances $i$).
 %(as if they are induced by an identical \emph{memoryless} strategy)

\begin{definition}[roPG]
\myvspin
A \emph{rightward open play graph (roPG)} from $m$ to $n$ is an roMPG from $m$ to $n$ (\cref{def:roMPG})  whose set of edges $E$ is a partial function (i.e.\ at most one successor).
\end{definition}
\myvspaf

Rightward open play graphs forms a TSMC $\roPlay$ just in the same way as $\roMPG$---$\roPlay$ is a subcategory of $\roMPG$. Due to its determinancy, 
an roPG $\mpg{C}:m\rightarrow n$ has a unique play from each entrance $i \in \nset{m}$; this play is denoted by $\indplaya{\mpg{C}}{i}$. Note that $\indplaya{\mpg{C}}{i}$ is also the play induced by the unique $\eve$- and $\adam$-strategies of $\mpg{C}$, and thus is memoryless by definition.

% determines  a unique pair of an $\eve$-strategy and a $\adam$-strategy; we let $\indplaya{\mpg{C}}{i}$ denote the (memoryless) play induced by these strategies, from an entrance $i \in \nset{m}$.

%An roPG can also be seen as an intermediate notion between a family of induced plays and a pair of $\eve$-strategy and $\adam$-strategy; an roPG $\mpg{C}$ induces $(\indplaya{\mpg{C}}{i})_{i\in \nset{m}}$ while a pair of strategies $\tau^{\eve}, \tau^{\adam}$ on an roMPG $\mpg{A}$ induce an roPG as in \cref{def:roPGfromMPGandStrategies}.

 We start with the decomposition equality for $\seqcomp$.

\begin{proposition}[decomposition equality for $\seqcomp$]
\myvspin
\label{prop:deqSeqc}
    Let $\mpg{C}:m\rightarrow l$, $\mpg{D}:l\rightarrow n$ be roPGs, and $i\in \nset{m}$.
    The following equality holds regarding the denotation $\denoplay{\_\,}{}$ of plays (\cref{def:denoPlays}):
    \myvspmathbef
    \begin{myminipage}
    \begin{align*}
    \denoplay{\indplaya{\mpg{C}\seqcomp \mpg{D}}{i}}{\mpg{C}\seqcomp \mpg{D}} & =
        \begin{cases}
             \denoplay{\indplaya{\mpg{C}}{i}}{\mpg{C}} 
           & \text{ if } \denoplay{\indplaya{\mpg{C}}{i}}{\mpg{C}} \in\{\const{\eve}, \const{\adam}\},\\
             \denoplay{\indplaya{\mpg{D}}{j}}{\mpg{D}} 
           & \text{ if } \denoplay{\indplaya{\mpg{C}}{i}}{\mpg{C}}=j \in \nset{l}, \\
             \denoplay{\indplaya{\mpg{D}}{j}}{\mpg{D}} 
           & \text{ if } \denoplay{\indplaya{\mpg{C}}{i}}{\mpg{C}}=(r, j) \in \Real \times \nset{l}\text{ and }\denoplay{\indplaya{\mpg{D}}{j}}{\mpg{D}}\in \{\const{\eve}, \const{\adam}\}, \\
             (r, k) 
           &\text{ if } \denoplay{\indplaya{\mpg{C}}{i}}{\mpg{C}}=(r, j)\in \Real \times \nset{l} \text{ and }\denoplay{\indplaya{\mpg{D}}{j}}{\mpg{D}} = k\in \nset{n},\\
             (r+r', k) 
           &\text{ if } \denoplay{\indplaya{\mpg{C}}{i}}{\mpg{C}}=(r, j)\in \Real \times \nset{l} \text{ and }\denoplay{\indplaya{\mpg{D}}{j}}{\mpg{D}} = (r', k)\in \Real \times \nset{n}.
          \qed
        \end{cases}
    \end{align*}
    \end{myminipage}
   \myvspmathaf
% \ \vspace{-1.\baselineskip} \  
% \qed
\end{proposition}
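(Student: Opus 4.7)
The plan is to proceed by case analysis on $\denoplay{\indplaya{\mpg{C}}{i}}{\mpg{C}}$, which by \cref{def:denoPlays} falls into exactly the five shapes listed in the statement. The essential preliminary is a structural \emph{concatenation lemma}: unfolding the definition of $\mpg{C}\seqcomp\mpg{D}$ (in particular the three clauses for its edge relation), the position sequence of $\indplaya{\mpg{C}\seqcomp\mpg{D}}{i}$ is obtained by concatenating the position sequence of $\indplaya{\mpg{C}}{i}$ with, when the former terminates at an exit $j\in\nset{l}$ of $\mpg{C}$, the position sequence of $\indplaya{\mpg{D}}{j}$ (modulo the relabelling of the leading index). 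This is because edges internal to $Q^{\mpg{C}}$ or $Q^{\mpg{D}}$ are preserved, the fourth clause rules out backward edges, and the only new ``bridge'' edges are precisely those obtained by composing an edge into a shared open end $j\in\nset{l}$ with the unique edge out of $j$ in $\mpg{D}$ (unique by condition $(\dagger)$ of \cref{def:oMPG}). Since roPGs are deterministic, this concatenation is the unique play starting at $i$ in $\mpg{C}\seqcomp\mpg{D}$.

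Given this concatenation lemma, each case reduces to a direct computation using \cref{def:denoPlays}. When $\denoplay{\indplaya{\mpg{C}}{i}}{\mpg{C}}\in\{\const{\eve},\const{\adam}\}$, the $\mpg{C}$-part is already infinite or stuck, so $\indplaya{\mpg{C}\seqcomp\mpg{D}}{i}$ coincides with $\indplaya{\mpg{C}}{i}$ as position sequences; since positions retain their roles and weights via $[\rho^{\mpg{C}},\rho^{\mpg{D}}]$ and $[w^{\mpg{C}},w^{\mpg{D}}]$, the denotations agree. When $\denoplay{\indplaya{\mpg{C}}{i}}{\mpg{C}}=j\in\nset{l}$, the $\mpg{C}$-play is the length-$2$ sequence $(i,j)$, so the bridge edge sends $i$ directly to the successor of $j$ in $\mpg{D}$; hence the composite play is $\indplaya{\mpg{D}}{j}$ with its head re-labelled as $i$, and its denotation is the same.

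The three sub-cases in which $\denoplay{\indplaya{\mpg{C}}{i}}{\mpg{C}}=(r,j)$ all follow the same pattern: the concatenation produces a play whose $\mpg{C}$-prefix visits positions summing to $r$, followed by the $\mpg{D}$-suffix $\indplaya{\mpg{D}}{j}$ (with its leading $j$ removed). If the $\mpg{D}$-suffix terminates at an exit $k\in\nset{n}$ with total internal weight $r'$ (possibly $r' = 0$ when the $\mpg{D}$-play has length $2$, corresponding to the $k\in\nset{n}$ clause), then applying the weight-sum clause of \cref{def:denoPlays} to the concatenation yields $(r+r', k)$, respectively $(r, k)$, matching the two last lines of the statement.

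The main obstacle is the sub-case where the $\mpg{D}$-suffix is infinite and its verdict is $\const{\eve}$ or $\const{\adam}$ via the MP condition of \cref{def:mpcondplays}: we must check that prepending a finite-weight $\mpg{C}$-prefix does not change whether $\liminf_{n\to\infty}\tfrac{1}{n}\sum_{j=1}^{n} w(s_j)\geq 0$ holds. This is immediate because the prefix contributes a bounded constant that is washed out by the $\tfrac{1}{n}$ factor; more cleanly, by \cref{lem:periodicity} the MP verdict of a memoryless play depends only on the eventual cycle, which lies entirely within $\mpg{D}$ and is therefore untouched by the concatenation. The ``stuck at an $\eve$- or $\adam$-position'' sub-case is immediate from the definition of $[\rho^{\mpg{C}},\rho^{\mpg{D}}]$ by case distinction.
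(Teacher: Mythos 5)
Your proposal is correct and matches the argument the paper leaves implicit (the proposition is stated with its proof omitted, only intuition given): the concatenation lemma for $\indplaya{\mpg{C}\seqcomp\mpg{D}}{i}$, justified by the edge clauses of $\seqcomp$ and condition $(\dagger)$, followed by a case analysis under \cref{def:denoPlays}, is exactly the intended straightforward proof. Your treatment of the one genuinely delicate sub-case --- prefix independence of the MP verdict, argued either by the vanishing $\tfrac{1}{n}$-contribution of the bounded prefix or by the shared eventual cycle via \cref{lem:periodicity} --- is sound.
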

\myvspaf[-1em]

Here is some intuition.
In the first case, the winner of the composed play $\indplaya{\mpg{C}\seqcomp \mpg{D}}{i}$ is already decided within $\mpg{C}$.
% , then the winner in $\mpg{C \seqcomp D}$ is the same as that in $\mpg{C}$. 
In the second case, if the play $\indplaya{\mpg{C}\seqcomp \mpg{D}}{i}$ goes immediately to exit $j$, then the denotation of $\indplaya{\mpg{C}\seqcomp \mpg{D}}{i}$  solely relies on $\mpg{D}$. The third case models \emph{prefix independence}  of MPGs if an ultimate winner is decided in $\mpg{D}$, then the ``prefix'' in  $\mpg{C}$ does not matter. The fourth case is easy; finally, in the fifth case, we accumulate weights $r,r'$ from $\mpg{C}, \mpg{D}$, respectively.

% The remaining cases can be interpreted similarly.

The decomposition equality for the sum $\oplus$ is easy and is omitted.

We move on to the decomposition equality for the trace operator (cf.\ \cref{def:trROMPG}). We prepare some definitions.
% for the statement of the decomposition equality for the trace operator. 
%
Firstly, we define the \emph{traced denotation of plays (TDPs)}
%$\tdip{l}{m}{n}{E}{i}$
%$\Big(\big(\trace{l}{m}{n}{\denoplay{\indplaya{\mpg{A}}{}}{}}\big)_i\Big)_{i\in \nset{m}}$ of induced plays $\indplaya{\mpg{A}}{1},\dots,\indplaya{\mpg{A}}{l+m}$, 
of roPG $\mpg{E}:l+m\rightarrow l+n$.

\begin{definition}[TDP $\tdip{l}{m}{n}{E}{i}$]
\myvspin
\label{def:traced_induced_plays}
Let $\mpg{E}:l+m\rightarrow l+n$ be an roPG, and $i\in\nset{m}$ be an entrance. 
The \emph{traced denotation of  plays (TDP)} 
%$\big(\trace{l}{m}{n}{\denoplay{\indplaya{\mpg{A}}{}}{}}\big)_i$
$\tdip{l}{m}{n}{E}{i}$
of $\mpg{E}$ from $i$
% of induced plays $\indplaya{\mpg{A}}{1},\dots,\indplaya{\mpg{A}}{l+m}$
is the (unique) possibly infinite sequence $v= (v_0,v_1,\cdots)$ 
of elements in $\nset{l}+\nset{m}+\nset{n} + \Real\times (\nset{l} + \nset{n}) + \{\const{\eve}, \const{\adam}\}$
that satisfies $v_0 = i$ and the following conditions:
\begin{enumerate}
    \item 
    $v_{j+1} = \denoplay{\indplaya{\mpg{E}}{v_j}}{}$ if $v_j\in \nset{l}+\nset{m}$,
    \item
    $v_{j+1} = \denoplay{\indplaya{\mpg{E}}{k}}{}$ if $v_j = (r, k)\in \Real \times \nset{l}$,
    \item
     $v_{j+1}$ is undefined if $v_j \in \nset{n} + \Real \times \nset{n} + \{\const{\eve}, \const{\adam}\}$.
\end{enumerate}
\end{definition}
\myvspaf

% \begin{definition}[traced induced plays]
% \label{def:traced_induced_plays}
% Let $\mpg{A}:l+m\rightarrow l+n$ be a roMPG, $\tau^{\mpg{A}}_{\eve}$ and $\tau^{\mpg{A}}_{\adam}$ be an $\eve$-strategy and $\adam$-strategy, respectively, and $i\in\nset{m}$. The \emph{traced induced plays} $\big(\trace{l}{m}{n}{\denoplay{\indplay{\tau^{\mpg{A}}_{\eve}}{\tau^{\mpg{A}}_{\adam}}{}}{}}\big)_i$ of induced plays $\indplay{\tau^{\mpg{A}}_{\eve}}{\tau^{\mpg{A}}_{\adam}}{1},\dots,\indplay{\tau^{\mpg{A}}_{\eve}}{\tau^{\mpg{A}}_{\adam}}{l+m}$ is the (unique) possibly infinite sequence $v_0,v_1,\cdots$ that satisfies the following conditions:
% \begin{enumerate}
%     \item for any $j$ such that $v_j$ is defined, $v_j\in \nset{l}+\nset{m}+\nset{n} + \Real\times (\nset{l} + \nset{n})$,
%     \item $v_0 = i$,
%     \item if $v_j\in \nset{l}+\nset{m}$ \text{ and }$\denoplay{\indplay{\tau^{\mpg{A}}_{\eve}}{\tau^{\mpg{A}}_{\adam}}{v_j}}{\mpg{A}} \in \nset{l} + \nset{n} + \Real\times (\nset{l} + \nset{n})$, then $v_{j+1}\defeq \denoplay{\indplay{\tau^{\mpg{A}}_{\eve}}{\tau^{\mpg{A}}_{\adam}}{v_j}}{\mpg{A}}$,
%     \item if $v_j = (r, k)\in \Real \times \nset{l}$, \text{ and }$\denoplay{\indplay{\tau^{\mpg{A}}_{\eve}}{\tau^{\mpg{A}}_{\adam}}{k}}{\mpg{A}} \in \nset{l} + \nset{n} + \Real\times (\nset{l} + \nset{n})$, then $v_{j+1}\defeq \denoplay{\indplay{\tau^{\mpg{A}}_{\eve}}{\tau^{\mpg{A}}_{\adam}}{k}}{\mpg{A}}$,
%     \item otherwise (i.e., $v_j$ does not satisfy any above conditions), $v_{j+1}$ is undefined.
% \end{enumerate}
% \end{definition}

\todo{Get the numbering uniform in the explanation and in the definition}
The last definition can be thought of as a summary of the unique play of $\trace{l}{m}{n}{\mpg{E}}$, where we only record reaching open ends and the winner decided. Specifically, we record 1) reaching $k\in\nset{n}$ (then the play is over), 2) reaching $k\in \nset{l}$ (then the play loops), 3) the winner decided within $\mpg{E}$ (this is when $v_{j}\in \{\const{\eve}, \const{\adam}\}$), and 4) initialization (this is when $v_{j}\in \nset{m}$; this can only happen for $j=0$). Additionally, we record the accumulated weight in its course (note the $\Real\times \_\,$ components in the definition).

 % For intuition, we can explain the above definition also in the following recursive way.
 % To compute TDP $\tdip{l}{m}{n}{E}{i}$, we use variables $\mathtt{v}$ of sequence and $\mathtt{i} \in \nset{l+m}$; and at first, set $\mathtt{i}$ to the given entrance $i \in \nset{m}$ and $\mathtt{v}$ to the sequence $(i)$ of length one.
 % (I) If the induced play $\indplaya{\mpg{E}}{\mathtt{i}}$ ends in $\nset{n} + \Real\times \nset{n}$, or the winner of induced play 
 % $\indplaya{\mpg{E}}{\mathtt{i}}$ 
 % % $\denoplay{\indplaya{\mpg{E}}{i}}{\mpg{E}}$ 
 % is already decided (i.e., $\denoplay{\indplaya{\mpg{E}}{\mathtt{i}}}{} \in \{\const{\eve}, \const{\adam}\}$), then push $\denoplay{\indplaya{\mpg{E}}{\mathtt{i}}}{}$ to the end of the sequence $\mathtt{v}$, and return the updated $\mathtt{v}$ as the TDP.
 % (II) If the induced play $\indplaya{\mpg{E}}{\mathtt{i}}$ ends in $\nset{l} + \Real\times \nset{l}$ where the exit is $k \in \nset{l}$, 
 % push $\denoplay{\indplaya{\mpg{E}}{\mathtt{i}}}{} \in \nset{l}+\Real\times \nset{l}$ to the end of $\mathtt{v}$, and set $\mathtt{i} \defeq k$.
 % Then continue this process, i.e., (I) or (II).
 % This process may not terminate, computing the infinite sequence $\mathtt{v}$.

We define the \emph{mean payoff (MP) condition} for an infinite TDP $\tdip{l}{m}{n}{E}{i} = v = (v_j)_{j \in \Nat}$ as follows.
Since $v$ is infinite, by Item 3 in \cref{def:traced_induced_plays},
$v_j \in \nset{l} + \Real\times \nset{l}$ for any $j \ge 1$.
We call TDP $v$ \emph{productive} if $v_j \in \Real\times \nset{l}$ for infinitely many $j$. 
In fact, every TDP is productive, which can be easily shown by the condition $(\ddagger)$ in~\cref{def:oMPG} and the pigeonhole principle.
Now let $v' \in \Real^\Nat$ be the infinite sequence obtained by extracting all weights from $v$, where $v_j \in \nset{l}$ (without a weight) is simply skipped.
We say $v$ satisfies \emph{the MP condition} if  $v'$ does.

We also define the denotation of TDP, similarly to~\cref{def:denoPlays}. 
\begin{definition}[denotation of TDP]
\label{def:denotOfTDP}
\myvspin
\label{def:denTracedInducedPlays}
    For roPG $\mpg{E}:l+m\rightarrow l+n$, $i\in \nset{m}$, and the TDP $v\allowbreak=\allowbreak\tdip{l}{m}{n}{E}{i}$ of $\mpg{E}$ from $i$, the \emph{denotation} $\denosdigest{v}\in \nset{n} + \Real\mspace{2mu}{\times}\mspace{2mu}\nset{n} + \{\const{\eve}, \const{\adam}\}$ of $v$
    is defined as: %the following:
    \myvspmathbef
    \begin{myminipage}
    \begin{alignat*}{2}
     & \const{\eve} 
     && \text{if $v$ is infinite and satisfies the MP condition,}\\
     & \const{\adam} 
     && \text{if $v$ is infinite and does not satisfy the MP condition,}\\
     & v_k 
     && \text{if $v = (v_0,\cdots, v_k)$ and $v_k\in \{\const{\eve}, \const{\adam}\}$,}\\
     & v_k 
     && \text{if $v = (v_0,\cdots, v_k)$, $v_k\in \nset{n}$, and $v_j\in \nset{l}$ for each $j\in \nset{k-1}$,}\\
     & \big(\textstyle\sum^{}_{j\in \nset{k}} \mathrm{wt}(v_j),\, v_k\big)
     &&\text{if $v = (v_0,\cdots, v_k)$, $v_k\in \nset{n}$, and $v_j\in \Real\times \nset{l}$ for some $j\in \nset{k-1}$,}\\
     & \big(\textstyle\sum^{}_{j\in \nset{k}} \mathrm{wt}(v_j),\, \pi_2(v_k)\big)
     \ 
     &&\text{if $v = (v_0,\cdots, v_k)$ and $v_k\in \Real\times\nset{n}$,}
    \end{alignat*}
    \end{myminipage}
    \myvspmathaf
    where the \emph{weight-sum} $\sum^{}_{j\in \nset{k}} \mathrm{wt}(v_j)$ is defined as
    %$\sum_{j\in \nset{k}} v_j \defeq 
    $\sum_{j\in \nset{k} \text{ \upshape such that } v_j\in \Real \times (\nset{l} + \nset{n})}\pi_1(v_j)$,
    and $\pi_1:\Real \times (\nset{l} + \nset{n}) \rightarrow \Real$ and $\pi_2:\Real\times (\nset{l} + \nset{n}) \rightarrow \nset{l} + \nset{n}$ are the first and second projections.
\end{definition}
\myvspaf

\omitatsubm{
\begin{remark}
\myvspin
\label{rem:uniqueSuccAccCondForDenOfTDP}
As explained above, the condition $(\ddagger)$ ensures the productivity of infinite TDP $v$.
Without this condition, there would be infinite sequences that have only finitely many $v_j$'s which are assigned a weight; then, it would be unclear how to define the denotation $\denosdigest{v}$.
Similar situations would arise in the definitions of trace operators on the syntactic/semantic categories.
This is why we introduce the condition $(\ddagger)$ (and its semantic counterpart, the realizability condition, given later).
% Let us call an infinite TDP $\tdip{l}{m}{n}{E}{i} = (v_j)_{j\in\Nat}$ \emph{non-productive} if there is $j\in \Nat$ such that for all $k\geq j$, $v_k\in \nset{l}$ (rather than $\Real\times\nset{l}$).
% Then, in fact, there is no non-productive TDP, which can be easily shown by the unique successor/predecessor existence conditions in~\cref{def:oMPG} (and the pigeonhole principle).
% If there were a non-productive TDP,
% there would be only finitely many $v_j$ to which a weight is assigned;
% then, with the fact that $v$ is infinite, it is not clear how to define the denotation $\denosdigest{v}$.
% Also, there would happen similar unclear situations in the definitions of trace operators in the syntax/semantics categories.
% This is why we introduce the unique successor/predecessor existence conditions.
\end{remark}
\myvspaf
}

Finally, we show the \emph{decomposition equality for the trace operator $\trsyb$}. 
% The equality says that the denotation of the play $\indplaya{\trace{l}{m}{n}{\mpg{E}}}{i}$ on $\trace{l}{m}{n}{\mpg{E}}$ (the LHS of the equation, defined in~\cref{def:denoPlays}) is the same as the RHS, which by \cref{def:traced_induced_plays} can be computed using
%  the denotations of the plays $\indplaya{\mpg{E}}{1},\dots,\indplaya{\mpg{E}}{l},\indplaya{\mpg{E}}{l+i}$ of $\mpg{E}$ (without taking the trace $\trace{l}{m}{n}{\_\,}$). 
The equality says that the behavior of
% $\indplaya{\trace{l}{m}{n}{\mpg{E}}}{i}$ on
 $\trace{l}{m}{n}{\mpg{E}}$
% (the LHS of the equation, defined in~\cref{def:denoPlays}) 
can be described by the behavior of $\mpg{E}$---note that $\tdip{l}{m}{n}{E}{i}$ on the right-hand side is described by the denotations of suitable plays of $\mpg{E}$ (\cref{def:traced_induced_plays}).

 % the same as the RHS, which by \cref{def:traced_induced_plays} can be computed using
 % the denotations of the plays $\indplaya{\mpg{E}}{1},\dots,\indplaya{\mpg{E}}{l},\indplaya{\mpg{E}}{l+i}$ of $\mpg{E}$ (without taking the trace $\trace{l}{m}{n}{\_\,}$). 

\begin{proposition}[decomposition equality for $\trsyb$]
\myvspin\label{prop:deqtrace}
    Let $\mpg{E}:l+m\rightarrow l + n$ be an roPG, and $i\in \nset{m}$ be an entrance on $\trace{l}{m}{n}{\mpg{E}}$. The following equality holds:
    \myvspmathbef
    \begin{myminipage}
    \begin{align*}
        \denoplay{\indplaya{\trace{l}{m}{n}{\mpg{E}}}{i}}{} & = 
        \denosdigest{\,\tdip{l}{m}{n}{E}{i}\,}
        .\qed
    \end{align*}
    \end{myminipage}
    \myvspmathaf
\end{proposition}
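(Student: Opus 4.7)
The plan is to unfold both sides of the equality using the unique deterministic trajectory in $\mpg{E}$ induced by the roPG structure. By the definition of $E^{\trace{l}{m}{n}{\mpg{E}}}$, each edge $s \to s'$ of the traced graph corresponds to a path $\lplus{l}{s} \to_{\mpg{E}} i_1 \to \cdots \to i_k \to_{\mpg{E}} \lplus{l}{s'}$ in $\mpg{E}$ with $i_1,\ldots,i_k \in \nset{l}$. Concatenating these unfoldings along $\indplaya{\trace{l}{m}{n}{\mpg{E}}}{i}$ yields a single trajectory in $\mpg{E}$, which I will split into \emph{segments}: each segment is a maximal play of $\mpg{E}$ that starts at an entrance in $\nset{l+m}$ and ends either at a looped exit in $\nset{l}$, at a non-looped exit in $\nset{l+1,l+n}$, or at a dead-end that determines a winner by its role. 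I will argue that these segments are exactly the $\mpg{E}$-plays whose denotations appear as the successive entries of $v = \tdip{l}{m}{n}{E}{i}$: each entry $v_{j+1}$ is, by the definition of TDP, the denotation of the $\mpg{E}$-play from the entrance recorded at step $j$, which is precisely where the next segment begins.

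With this segmentation in hand, a case split on whether $\indplaya{\trace{l}{m}{n}{\mpg{E}}}{i}$ is finite or infinite completes the argument. In the finite case, the trace play ends either at a $\nset{n}$-exit or at a dead position forcing a winner, so its last segment is an $\mpg{E}$-play whose denotation lands in $\nset{n} + \Real \times \nset{n} + \{\const{\eve}, \const{\adam}\}$. An induction on the number of segments, using the decomposition equality for $\seqcomp$ to accumulate weights across consecutive pieces, will match each clause of the definition of $\denosdigest{v}$ with the corresponding clause of the definition of $\denoplay{\cdot}{\cdot}$, producing equal terminal values and equal cumulative weight-sums.

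The hard part is the infinite case. If the trace play is infinite, the periodicity lemma applied to the memoryless roPG $\trace{l}{m}{n}{\mpg{E}}$ makes it eventually periodic; the same reasoning on $\mpg{E}$ yields that $v$ is infinite and eventually periodic, and productivity (via condition $(\ddagger)$ and the pigeonhole principle) forces the extracted weight sequence $v'$ to be infinite. I then reduce both MP conditions, using the periodicity lemma together with the standard fact that Cesàro averages of eventually periodic sequences converge to the period average, to non-negativity of the sum of weights over one period. The segmentation guarantees that the positions of $\mpg{E}$ traversed in one period of the trace play are exactly those contributing weights to one period of $v'$, so the two cycle sums coincide and the two MP conditions agree. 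The principal technical obstacle is the bookkeeping around the shift $\lplus{l}{\cdot}$ and verifying that the segmentation is both exhaustive and non-overlapping; condition $(\dagger)$ and determinacy of roPGs together ensure this, after which the weight-additivity of the decomposition equality for $\seqcomp$ closes the argument.
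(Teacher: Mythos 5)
Your segmentation of the unfolded trajectory at the visits to open ends, together with the reduction of both MP conditions to cycle sums via \cref{lem:periodicity}, is the right skeleton, and the finite case and the weight bookkeeping (positions of the traced play are exactly the positions of $\mpg{E}$, open ends carry no weight) go through as you describe. However, there is a genuine gap in your infinite case: you claim that if $\indplaya{\trace{l}{m}{n}{\mpg{E}}}{i}$ is infinite then $\tdip{l}{m}{n}{E}{i}$ is also infinite, and your segment description (``ends either at a looped exit, at a non-looped exit, or at a dead-end'') silently excludes the possibility that a segment is itself an \emph{infinite} play of $\mpg{E}$. This possibility is real: take $\mpg{E}$ whose entrance $l+i$ leads to a position $q$ with a self-loop and no further edge to any open end. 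Then the traced play $i, q, q, \dots$ is infinite, but the TDP is the finite sequence $(i, \const{\eve})$ or $(i, \const{\adam})$, because $v_1 = \denoplay{\indplaya{\mpg{E}}{l+i}}{}\in\{\const{\eve},\const{\adam}\}$ by the third/fourth clauses of \cref{def:denoPlays}, after which \cref{def:traced_induced_plays} terminates. Your dichotomy ``trace play infinite $\Leftrightarrow$ $v$ infinite and productive'' therefore fails, and the cycle-sum comparison you set up does not cover this situation at all.

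The fix is not difficult but must be added explicitly: split the infinite case according to whether the unfolded trajectory visits the looped open ends $\nset{l}$ infinitely often. If it does, your argument (productivity, eventual periodicity of both $v$ and the traced play, equality of the two cycle sums, and \cref{lem:periodicity} on both sides) applies. If it does not, then from some point on the traced play coincides with the infinite tail of a single play of $\mpg{E}$; the TDP is finite and ends with the $\const{\eve}$/$\const{\adam}$ verdict of that infinite play of $\mpg{E}$ (third clause of \cref{def:denotOfTDP}), and you conclude by prefix independence of the MP condition---e.g.\ again via \cref{lem:periodicity}, since the two infinite plays share a common tail and hence the same cycle---that $\denoplay{\indplaya{\trace{l}{m}{n}{\mpg{E}}}{i}}{}$ equals that same verdict. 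With this extra case (and the analogous finite-segment case where the last segment dead-ends, which you do cover), the proof is complete.
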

\myvspaf[-1.3em]

It is still  nontrivial whether the right-hand side of \cref{prop:deqtrace} can be effectively computed---\cref{def:traced_induced_plays} utilizes a possibly infinite sequence. We can exploit the ultimate periodicity that arises from the finiteness of an roPG $\mpg{E}$; this is much like in~\cref{lem:periodicity}. Our implementation (\cref{sec:impAndExp}) uses this technique.

\section{Semantic Categories and Winning-position Functors}
\label{sec:semanticCategories}

Here we give our compositional solution for oMPGs, by defining the semantic category $\fsemCat$ for oMPGs and the winning-position functor $\fwpFunctor : \oMPG \to \fsemCat$, where $\fsemCat$ has and $\fwpFunctor$ preserves the compact closed structure. As shown in \cref{fig:catsFunctors}, we construct $\fsemCat$ and $\fwpFunctor$ in  two steps, via  the change-of-base and Int constructions.
 % i.e., first by applying change-of-base to the semantic category/functor for roPGs, and then by applying the Int construction to those for roMPGs.
% The essential idea for the semantic category/functor for roPGs is the decomposition equalities in~\cref{sec:decEq}.

\subsection{Semantic Category and Functor for Plays}
Firstly, we define the semantic category $\fpsemCat$ of roPGs (the bottom level of \cref{fig:catsFunctors}). The development in~\cref{sec:decEq} is crucial here.
 The operation
$\fpmonad(X) \defeq X + \Real\times X + \{\const{\eve},\const{\adam}\}$
%$\fpmonad(\nset{n}) = \nset{n} + \Real\times \nset{n} +\allowbreak \{\const{\eve},\const{\adam}\}$
used below was introduced previously in \cref{def:denoPlays} for defining the denotation $\denoplay{\pi}{\mpg{A}}$ of a play.
% Recall
% $\fpmonad(X) \defeq X + \Real\times X + \{\const{\eve},\const{\adam}\}$
% %$\fpmonad(\nset{n}) = \nset{n} + \Real\times \nset{n} +\allowbreak \{\const{\eve},\const{\adam}\}$
% from \cref{def:denoPlays}.

\begin{definition}[objects and arrows of $\fpsemCat$]
\myvspin\label{def:fpsemCat}
The category $\fpsemCat$  is defined as follows. Its objects are natural numbers.
Its arrows $f$ from $m$ to $n$ (denoted by $f : m\rightarrow n$ in $\fpsemCat$) are functions of the type
 % $f$ from $\nset{m}$ to $\fpmonad(\nset{n}) = \nset{n} + \Real\times \nset{n} + \{\const{\eve},\const{\adam}\}$ 
\begin{math}
 % f\;\colon\; \nset{m} \;\longrightarrow\; \fpmonad(\nset{n}) \;=\; \nset{n} + \Real\times \nset{n} + \{\const{\eve},\const{\adam}\} \quad\text{(set-theoretic function)}
 f\colon\nset{m} \longrightarrow\fpmonad(\nset{n})= \nset{n} + \Real\times \nset{n} + \{\const{\eve},\const{\adam}\} 
\end{math}. Such functions are further subject to the \emph{realizability condition}:
if some $i\in\nset{m}$ goes straight to an exit $k\in \nset{n}$, then there should be no other $j\in\nset{m}$ with $j\neq i$ that goes to the same exit $k$, with or without weights. To put it precisely: if $ f(i) =k\in \nset{n}$, then for each $j\in\nset{m}\setminus \{i\}$,  $f(j)\neq k$ and $f(j)\neq (r,k)$ for any $r$.

% has $ f(i) =k\in \nset{n}$ 
% (i.e.\ $i$ goes straight to an exit $k$), then there should be no other $j\in\nset{m}$ with $j\neq i$ that goes to the same exit $k$, with or without rewards (that is, for each $j\in\nset{m}\setminus \{i\}$,  $f(j)\neq k$ and $f(j)\neq (r,k)$ for any $r$).
\end{definition}
\myvspaf

% Intuitively, an arrow $f:m\rightarrow n$ in $\fpsemCat$, regarding as a sequence $f(1),\dots,f(m)$, represents the denotations of the induced plays $\indplay{\tau_{\eve}}{\tau_{\adam}}{1},\dots,\indplay{\tau_{\eve}}{\tau_{\adam}}{m}$ by some $\eve$-strategy $\tau_{\eve}$ and $\adam$-strategy $\tau_{\adam}$ on some roMPG. 

The realizability condition corresponds to the condition $(\ddagger)$ in~\cref{def:oMPG}.
% It excludes functions for which there are distinct entrances that both end at the same exit, and at least one of them goes to the exit without passing through any position.

We move on to the definition of the algebraic operations of the traced symmetric monoidal category (TSMC) $\fpsemCat$, i.e., $\,\seqcomp\,$, $\oplus$, and $\trsyb$. The sequential composition $\seqcomp$ and the trace operator $\trsyb$ are defined in the same way as~\cref{prop:deqSeqc,prop:deqtrace}, and the definition of the sum $\oplus$ is clear.

\begin{definition}[sequential composition $\seqcomp$ of $\fpsemCat$]
\myvspin
    Let $f:m\rightarrow l$ and $g:l\rightarrow n $ be arrows in $\fpsemCat$, and $i\in \nset{m}$. Their \emph{sequential composition} $f \mspace{2mu}{\seqcomp}\mspace{2mu} g : m\rightarrow n$ of $f$ and $g$ is given as follows:
    \myvspmathbef
    \begin{myminipage}
    \begin{align*}
       \big(f\seqcomp g\big)(i)  &=  
        \begin{cases}
            f(i) 
           & \text{ if } f(i) \in\{\const{\eve}, \const{\adam}\},\\
            g(j) 
           & \text{ if } f(i)=j \in \nset{l}, \\
            g(j) 
           & \text{ if } f(i) =(r, j) \in \Real \times \nset{l},\text{ and }g(j)\in \{\const{\eve}, \const{\adam}\}, \\
            (r, k) 
           &\text{ if } f(i)=(r, j)\in \Real \times \nset{l} \text{, and }g(j) = k\in \nset{n},\\
            (r+r', k) 
           &\text{ if } f(i)=(r, j)\in \Real \times \nset{l}, \text{ and }g(j) = (r', k)\in \Real \times \nset{n}.
        \end{cases}
    \end{align*}
    \end{myminipage}
    \myvspmathaf
\end{definition}
\myvspaf[-1.2em]

The trace operator $\trsyb$ can also be defined in the same manner (see \cref{sec:proofFplayTSMC}). 

Behind the definitions of the symmetric monoidal category $(\fpsemCat, \seqcomp, \oplus)$,
we can find and utilize a categorical concept of \emph{monad}, which models the notion of computation~\cite{Moggi91}.
Specifically, the mapping of a set $X$ to the set $\fpmonad(X) = X + \Real\times X + \{\const{\eve} + \const{\adam}\}$ (used in \cref{{def:fpsemCat}}) extends to a monad on the category $\Sets$ of sets and functions, and then $\fpsemCat$ is a subcategory of the \emph{Kleisli category} of $T$, where $\oplus$ is defined as the coproduct (see~\cref{sec:play_monads} for the details).

What remains to be shown for the next proposition is that the trace operator satisfies the axioms of trace operator~\cite{joyal1996}.
The proof is lengthy, but straightforward once we find that we can use a similar technique to \cref{lem:periodicity}; see~\cref{sec:proofFplayTSMC} for the proof.
\todo{In an extended version, add description of what kind of finiteness (namely that of interface, not that of a position space) is crucial for what (namely dinaturality).}
\begin{proposition}[TSMC $\fpsemCat$]
\myvspin
\label{prop:fplayTSMC}
The category $\fpsemCat$ is a TSMC.
% whose operations are $\seqcomp, \oplus$ induced by those of $\kleisli{\Sets}{\fpmonad}$ and $\trsyb$ defined in above.
\qed
\end{proposition}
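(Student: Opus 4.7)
The plan is to verify the TSMC structure by first defining the trace operator along the lines suggested by \cref{prop:deqtrace}, checking that it is well-defined and yields an arrow of $\fpsemCat$, and then verifying each TSMC axiom. For an arrow $f\colon l+m\to l+n$ of $\fpsemCat$ and $i\in\nset{m}$, define $\trace{l}{m}{n}{f}(i)\defeq \denosdigest{v}$, where $v=(v_j)_{j\in J}$ is the sequence with $v_0 = i$ generated by the update rules of \cref{def:traced_induced_plays} with $f$ in place of $\indplaya{\mpg{E}}{\_}$ (so $v_{j+1}=f(v_j)$ when $v_j\in\nset{l}+\nset{m}$, $v_{j+1}=f(k)$ when $v_j=(r,k)\in\Real\times\nset{l}$, and the sequence stops otherwise). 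Well-definedness of $\denosdigest{v}$ in the infinite case requires productivity of $v$; this follows from finiteness of $\nset{l}$ together with the realizability condition on $f$, by exactly the pigeonhole argument sketched after \cref{def:denotOfTDP}.

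Next I would check that $\trace{l}{m}{n}{f}$ is itself an arrow of $\fpsemCat$, i.e.\ satisfies realizability. If two distinct entrances $i\neq j\in\nset{m}$ both produced the ``weightless'' exit $k\in\nset{n}$, then walking the two traced sequences backwards and using realizability of $f$ at each step---which forces each weightless intermediate target to have a unique predecessor---would make the two sequences agree from some point on and hence from the start, contradicting $i\neq j$. A minor variant rules out a collision between a weightless exit and a weighted one with the same index.

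The category and symmetric monoidal structure are routine. Identity arrows are the evident inclusions $\nset{m}\hookrightarrow \fpmonad(\nset{m})$; associativity and unit laws for $\seqcomp$ reduce to associativity of $+$ on $\Real$ together with the absorbing character of $\{\const{\eve},\const{\adam}\}$, handled by case analysis on $f(i)$ and $g(j)$. The tensor $\oplus$, its symmetry, and its coherence isomorphisms descend from the coproduct in $\Sets$ via the Kleisli-like structure for the monad $\fpmonad$ referred to after \cref{def:fpsemCat}.

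The hard part is the trace axioms, and in particular dinaturality. Yanking is immediate from one step of the sequence $v$; vanishing (both laws) and superposing reduce to straightforward case distinctions on the shape of $(v_j)$; naturality in the non-traced inputs and outputs amounts to rewriting the construction of $v$ after pre- and post-composition with an untraced arrow. Dinaturality is the central obstacle, because the two sides of the equation generate sequences that visit the traced interfaces in formally different ways. The key technique is the ultimate-periodicity argument analogous to \cref{lem:periodicity}: finiteness of the traced interface forces every infinite sequence to be eventually periodic, so that termination at an exit, accumulated weight sums, and the MP condition in the infinite case can all be compared on the two sides by inspecting a single cycle of the normalized sequence.
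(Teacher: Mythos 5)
Your proposal is correct and follows essentially the same route as the paper's proof: the trace is defined via the semantic analogue of TDPs and their denotations, the symmetric monoidal structure is inherited from the Kleisli-style structure of the play monad, and the trace axioms are verified by case analysis, with dinaturality handled by the ultimate-periodicity argument analogous to \cref{lem:periodicity}. Your explicit check that the traced arrow again satisfies the realizability condition is a detail the paper's appendix leaves implicit, but it does not change the method.
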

\myvspaf

The \emph{solution functor} $\fpwpFunctor$ maps roPG $\mpg{C}$ to the denotations $(\denoplay{\indplaya{\mpg{C}}{i}}{})_i$ of its plays:

\begin{definition}[solution functor $\fpwpFunctor$]
\myvspin
    The \emph{solution functor} $\fpwpFunctor:\roPlay\rightarrow \fpsemCat$ is defined as follows: the mapping on objects is given by $\fpwpFunctor(m) \defeq m$, and for an arrow $\mpg{C}\in \roPlay(m, n)$, we define $\fpwpFunctor(\mpg{C}) \in \fpsemCat(m, n)$ as $\nset{m} \ni i \mapsto \denoplay{\indplaya{\mpg{C}}{i}}{} \in T(\nset{n})$. Here
$\denoplay{\indplaya{\mpg{C}}{i}}{}$ is from
\cref{def:denoPlays}.
\end{definition}
\myvspaf

The following is the categorical reformulation of the key results, \cref{prop:deqSeqc,prop:deqtrace}.

\begin{theorem}[compositionality for play graphs]
\myvspin
\label{thm:fpwpfuncComp}
$\fpwpFunctor:\roPlay\rightarrow \fpsemCat$ 
is a traced symmetric monoidal functor.
% i.e., preserves the structure of traced symmetric monoidal categories.
%We have:
In particular,
$\, \fpwpFunctor(\mpg{C}\seqcomp \mpg{D}) = \fpwpFunctor(\mpg{C})\seqcomp\fpwpFunctor(\mpg{D})$,
$\, \fpwpFunctor(\mpg{C}\oplus \mpg{D}) = \fpwpFunctor(\mpg{C})\oplus\fpwpFunctor(\mpg{D})$,
and 
%$\fpwpFunctor(\trace{l}{m}{n}{\mpg{E}}) = \trace{l}{m}{n}{\fpwpFunctor(\mpg{E})}$.
$\, \fpwpFunctor(\trsyb(\mpg{E})) = \trsyb(\fpwpFunctor(\mpg{E}))$.
\end{theorem}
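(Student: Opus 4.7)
The plan is to reduce each of the three displayed equalities to a result already proved in Section 3: the $\seqcomp$-preservation to Proposition~\ref{prop:deqSeqc}, the $\trsyb$-preservation to Proposition~\ref{prop:deqtrace}, and the $\oplus$-preservation to a direct inspection. The remaining conditions of being a traced symmetric monoidal functor (preservation of identities, symmetries, and the coherence axioms such as vanishing and yanking) are then routine, since every compCC axiom can be recast as an equation between composites of the basic operations we show to be preserved.

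First I would verify that $\fpwpFunctor$ is well-defined, i.e.\ that for each roPG $\mpg{C}\colon m\to n$, the function $i\mapsto\denoplay{\indplaya{\mpg{C}}{i}}{}$ lands in $\fpsemCat(m,n)$ and in particular satisfies the realizability condition of \cref{def:fpsemCat}. This follows from condition $(\ddagger)$ of \cref{def:oMPG} together with the determinism of roPGs: if an entrance $i$ goes straight to an exit $k\in\nset{n}$, then by $(\ddagger)$ no other entrance or position can have $k$ as successor, so no other $j\neq i$ can be mapped to $k$ or $(r,k)$. Preservation of identities and swaps is then immediate. Preservation of $\oplus$ is also routine: by construction any entrance of $\mpg{C}\oplus\mpg{D}$ lies entirely in one of the two summands and its induced play never crosses, so $\denoplay{\indplaya{\mpg{C}\oplus\mpg{D}}{i}}{}$ coincides with either $\denoplay{\indplaya{\mpg{C}}{i}}{}$ or $\denoplay{\indplaya{\mpg{D}}{i'}}{}$, which is exactly the coproduct-based definition of $\oplus$ in $\fpsemCat$.

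For $\seqcomp$-preservation, the five clauses defining $f\seqcomp g$ in $\fpsemCat$ are literally the five clauses on the right-hand side of \cref{prop:deqSeqc} with $f=\fpwpFunctor(\mpg{C})$ and $g=\fpwpFunctor(\mpg{D})$. Applying $\fpwpFunctor$ entrance-by-entrance and invoking \cref{prop:deqSeqc} then yields $\fpwpFunctor(\mpg{C}\seqcomp\mpg{D})(i)=(\fpwpFunctor(\mpg{C})\seqcomp\fpwpFunctor(\mpg{D}))(i)$ for every $i\in\nset{m}$. For $\trsyb$-preservation, I would first unfold the semantic trace of $\fpsemCat$ (deferred to \cref{sec:proofFplayTSMC} in the paper) which, modelled on \cref{def:traced_induced_plays} and \cref{def:denotOfTDP}, iterates its argument $h\colon\nset{l+m}\to\fpmonad(\nset{l+n})$ starting at $i\in\nset{m}$ to produce a sequence and returns its $\denosdigest{\cdot}$. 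Instantiating $h=\fpwpFunctor(\mpg{E})$, the generated sequence is step-for-step the TDP $\tdip{l}{m}{n}{E}{i}$, because each step is exactly $\denoplay{\indplaya{\mpg{E}}{-}}{}$ applied to the preceding open end. Hence $\trsyb(\fpwpFunctor(\mpg{E}))(i)=\denosdigest{\tdip{l}{m}{n}{E}{i}}$, which by \cref{prop:deqtrace} equals $\denoplay{\indplaya{\trace{l}{m}{n}{\mpg{E}}}{i}}{}=\fpwpFunctor(\trace{l}{m}{n}{\mpg{E}})(i)$.

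The main obstacle I expect is carefully aligning the semantic trace of $\fpsemCat$ with the syntactic TDP in the infinite case: one must ensure that the productivity argument (guaranteeing $v_j\in\Real\times\nset{l}$ infinitely often) and the MP condition on the extracted weight sequence agree on both sides. Here I would reuse the pigeonhole/ultimate-periodicity argument already invoked for \cref{def:traced_induced_plays}, combined with \cref{lem:periodicity}, to show that both the syntactic infinite play around the trace loop and the semantic infinite iteration stabilize on the same cycle and hence satisfy the MP condition simultaneously. Once this alignment is verified, the coherence axioms for a traced symmetric monoidal functor follow by diagram-chasing from the three displayed equalities together with the identity and swap cases.
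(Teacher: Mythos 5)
Your proposal is correct and follows essentially the same route as the paper, which presents \cref{thm:fpwpfuncComp} precisely as the categorical reformulation of \cref{prop:deqSeqc,prop:deqtrace}: well-definedness via $(\ddagger)$, clause-by-clause matching for $\seqcomp$, the literal identification of the semantic TDP of $\fpwpFunctor(\mpg{E})$ with $\tdip{l}{m}{n}{E}{i}$ for $\trsyb$, and routine checks for $\oplus$, identities, and swaps. One small remark: the vanishing/yanking-style axioms you mention are properties of the categories (established in \cref{prop:fplayTSMC}), not extra obligations on the functor, so only the preservation equalities you already verify are needed here.
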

\myvspaf
% \begin{proof}
% \myvspin[-.5em]
% The proof follows directly from~\cref{prop:deqSeqc,prop:deqtrace}.
% \end{proof}

\subsection{Semantic Category and Functor for roMPGs}
% \todo[inline]{Cocokara!}
% \begin{definition}[change of base~\cite{Eilenberg65,kelly1982basic}]
% Let $\mathbb{V}$ and $\mathbb{W}$ be symmetric monoidal categories, $(\mathcal{F}, m):\mathbb{V}\rightarrow \mathbb{W}$ be a symmetric monoidal functor, and $\mathbb{C}$ be a $\mathbb{V}$-category. We define a $\mathbb{W}$-category $\cobase{\mathcal{F}}(\mathbb{C})$ whose set of objects $\ob(\cobase{\mathcal{F}}(\mathbb{C}))$ is given by $ \ob(\cobase{\mathcal{F}}(\mathbb{C}))\defeq \ob(\mathbb{C})$, set of arrows $\cobase{\mathcal{F}}(X, Y)$ from $X$ to $Y$ is given by $\cobase{\mathcal{F}}(X, Y) \defeq \mathcal{F}(\mathbb{C}(X, Y))$, identity $\id^{\cobase{\mathcal{F}}(\mathbb{C})}_X$ on $X$ is given by $\id^{\cobase{\mathcal{F}}(\mathbb{C})}_X\defeq \mathcal{F}(\id^{\mathbb{C}}_X)\circ m_I$, and sequential composition $\circ^{\cobase{\mathcal{F}}(\mathbb{C})}_{X, Y, Z}$ is given by $\circ^{\cobase{\mathcal{F}}(\mathbb{C})}_{X, Y, Z} \defeq \mathcal{F}(\circ^{\mathbb{C}}_{X, Y, Z}) \circ^{\mathbb{C}} m_{\mathbb{C}(X, Y), \mathbb{C}(Y, Z)}$.
% \end{definition}

% Note that the \emph{powerset functor} $\powerfunc$ is a symmetric monoidal endo functor on $(\Sets, \times, \{\star\})$.

% \begin{definition}[rightward fat semantic category]
% The \emph{rightward semantic category} $\frsemCat$ is given by $\cobase{\powerfunc}\big(\cobase{\powerfunc}(\fpsemCat)\big)$.
% \end{definition}
We move on to the middle level of \cref{fig:catsFunctors}.
We construct the semantic category $\frsemCat$ 
%and semantic functor $\frwpFunctor$ 
for roMPGs by the change of base construction~\cite{Eilenberg65,cruttwell2008normed} from $\fpsemCat$ 
%and $\fpwpFunctor$ 
for roPGs.
We give the definition concretely below, but in categorical terms, $\frsemCat$ 
%and $\frwpFunctor$ 
is obtained simply by applying to $\fpsemCat$ 
%and $\fpwpFunctor$
the change of base construction by 
%the iterated covariant powerset functor $\powerfunc \circ \powerfunc$ on $\Sets$.
the iterated finite powerset functor $\powerfuncf \circ \powerfuncf$ on $\Sets$.

\begin{definition}[objects and arrows of $\frsemCat$]
% The rightward fat semantic category $\frsemCat$ is a TSMC $(\frsemCat, \otimes^{\frsemCat}, 0, \sym^{\frsemCat}, \trsyb^{\frsemCat})$, where $S\otimes^{\frsemCat} T \defeq \{ \{f\otimes^{\fpsemCat} g \mid f\in S'\text{ and }g\in T' \}\mid S'\in S \text{ and }T'\in T\}$, $\sym^{\frsemCat}_{m, n}\defeq \{\{\sym^{\fpsemCat}_{m, n}\}\}$, and $\trsyb^{\frsemCat}_{l;m, n}(S) \defeq \{\{ \trsyb^{\fpsemCat}_{l;m, n}(f)\mid f\in S'\} \mid S'\in S\}$.
\myvspin
    The category $\frsemCat$ has natural numbers $m$ as objects. Its arrow $F:m\rightarrow n$ is 
    an element in $\powerfuncf(\powerfuncf(\fpsemCat(m,n)))$, i.e.,
    a set $\big\{\{ f_{i, j}:m\rightarrow n \text{ in } \fpsemCat \mid i\in I_j\}\allowbreak \ \big|\allowbreak\  j\in J \big\}$ of sets of arrows from $m$ to $n$ in $\fpsemCat$
    ($J$ and $I_j$ are arbitrary finite index sets).
\end{definition}
\myvspaf

The intuition of the above definition is that index $j$ in the outer set represents an $\eve$-strategy, and index $i$ in the inner set represents an $\adam$-strategy. Once an $\eve$-strategy $\tau_{\eve}$ (corresponding to $j$) and a $\adam$-strategy $\tau_{\adam}$ (corresponding to $i$) are fixed, the arrow $f_{i, j}$ corresponds to the denotations of the plays induced by them. 
\begin{definition}[sequential composition $\seqcomp$ of $\frsemCat$]
\myvspin
Let $F:m\rightarrow l$, $G:l\rightarrow n$ be arrows in $\frsemCat$. Their sequential composition $F\seqcomp G$ is given by $F\seqcomp G \defeq \big\{ \{f\seqcomp g\mid f\in F',\ g\in G'\}\ \big|\allowbreak\ F'\in F,\ G'\in G\big\}$, where $f\seqcomp g$ is the sequential composition in $\fpsemCat$.
\end{definition}
\myvspaf

The sum $\oplus$ and trace $\trsyb$ are similarly defined by applying the operations of $\fpsemCat$ elementwise.

\begin{proposition}
\myvspin
    $\frsemCat$ is a TSMC.
    \qed
\end{proposition}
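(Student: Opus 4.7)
The plan is to verify the TSMC axioms for $\frsemCat$ by reducing each one, elementwise, to the corresponding axiom in $\fpsemCat$, which has been established in the preceding proposition. First, I would specify the rest of the symmetric monoidal and traced structure on $\frsemCat$: the identity in $\frsemCat(m,m)$ is the singleton-of-singleton $\{\{\id_m\}\}$, where $\id_m$ is the $\fpsemCat$-identity; the symmetry is $\{\{\sym_{m,n}\}\}$; the tensor on arrows is $F \oplus G \defeq \{\{f \oplus g \mid f \in F',\, g \in G'\} \mid F' \in F,\, G' \in G\}$; and the trace is $\trsyb^{l,m,n}(F) \defeq \{\{\trsyb^{l,m,n}(f) \mid f \in F'\} \mid F' \in F\}$. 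In every case, the $\frsemCat$-operation is obtained by ``lifting'' the corresponding $\fpsemCat$-operation through the double powerset.

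The second step is to check that $\frsemCat$ is a category and that $\oplus$ together with the symmetry provides a symmetric monoidal structure. Unfolding the definitions, both sides of each axiom (associativity and unit laws, bifunctoriality of $\oplus$, naturality of the symmetry, and the coherence axioms) expand to sets-of-sets whose entries are $\fpsemCat$-composites that agree by the corresponding $\fpsemCat$-axiom. For example, both $(F \seqcomp G) \seqcomp H$ and $F \seqcomp (G \seqcomp H)$ unfold to $\{\{(f \seqcomp g) \seqcomp h \mid f \in F',\, g \in G',\, h \in H'\} \mid F' \in F,\, G' \in G,\, H' \in H\}$, equal by associativity in $\fpsemCat$.

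Finally, for the trace axioms (naturality, dinaturality, vanishing, superposing, yanking), the same elementwise strategy applies: each axiom is an equation between arrows built from $\seqcomp$, $\oplus$, and $\trsyb$, and both sides unfold to ``the same shape'' of sets-of-sets whose entries are related by the corresponding axiom in $\fpsemCat$. The main obstacle is keeping track of the outer (intuitively $\eve$-strategy) and inner (intuitively $\adam$-strategy) set layers so that corresponding entries on both sides are compared correctly; this is most delicate for dinaturality, which relates $\trsyb((f \oplus \id) \seqcomp g)$ and $\trsyb(g \seqcomp (f \oplus \id))$, where trace mediates between two composites and the indexing must be reshuffled in the same way on both sides. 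Conceptually, this entire argument is an instance of the general fact that a symmetric traced monoidal structure lifts along a change of base induced by a commutative monad on $\Sets$---here $\powerfuncf \circ \powerfuncf$---which would give an alternative, more abstract proof.
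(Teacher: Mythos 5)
Your elementwise verification is correct and is essentially the paper's (omitted) argument: the paper defines $\seqcomp$, $\oplus$, $\trsyb$ on $\frsemCat$ exactly by applying the $\fpsemCat$-operations inside the two set layers, and every TSMC axiom then reduces to the corresponding axiom of $\fpsemCat$ because each axiom uses each arrow variable exactly once on both sides, so both sides unfold to identically indexed sets-of-sets. One caveat about your closing remark: $\powerfuncf\circ\powerfuncf$ is \emph{not} a (commutative) monad --- there is no distributive law of the powerset monad over itself --- it is only a lax symmetric monoidal functor, and the paper itself stresses that change of base does \emph{not} in general lift a trace operator (it does here precisely because the linear, elementwise check you carried out goes through); so the ``more abstract proof'' should not be invoked as a general fact, and the concrete verification is the actual content.
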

\myvspaf

Next we define the semantic functor $\frwpFunctor$ for roMPGs, using $\fpwpFunctor$ for roPGs.
To connect the notion of roMPG to that of roPG,
%(more precisely, to arrow of the change of base of $\roPlay$),
we define \emph{induced roPG}, similarly to memoryless play.
%induced by an roMPG and a pair of $\eve$-strategy and $\adam$-strategy.

%For stating the correspondence between arrows in $\frsemCat$ and the denotation of entrances defined in~\cref{def:denOfEntrances}, we define the induced rightward open play graph $\sybpg{\mpg{A}}{\tau^{\eve}}{\tau^{\adam}}$ by roMPG $\mpg{A}$, $\eve$-strategy $\tau^{\eve}$, $\adam$-strategy $\tau^{\adam}$.

\begin{definition}[induced roPG $\sybpg{\mpg{A}}{\tau^{\eve}}{\tau^{\adam}}$]
\myvspin
\label{def:roPGfromMPGandStrategies}
Let $\mpg{A}:m\rightarrow n$ be an roMPG, and $\tau^{\eve}$ and $\tau^{\adam}$ be (memoryless) $\eve$- and $\adam$-strategies on $\mpg{A}$, respectively. 
The \emph{induced roPG} $\sybpg{\mpg{A}}{\tau^{\eve}}{\tau^{\adam}}$ on $\mpg{A}:m\rightarrow n$ by $\tau^{\eve}$ and $\tau^{\adam}$
is defined as $(m, n, Q^{\mpg{A}}, E, \rho^{\mpg{A}}, w^{\mpg{A}})$ where partial function $E$ is defined as follows.
For $i\in \nset{m}$, $E(i) \defeq E^{\mpg{A}}(i)$. For $s\in Q $, $E(s) \defeq \tau^{\eve}(s)$ if $\tau^{\eve}(s)$ is defined, $E(s) \defeq \tau^{\adam}(s)$ if $\tau^{\adam}(s)$ is defined, and $E(s)$ is undefined otherwise.
\end{definition}
\myvspaf

We can easily check that $\indplaya{\sybpg{\mpg{A}}{\tau^{\eve}}{\tau^{\adam}}}{i}$ is $\indplay{\tau_{\eve}}{\tau_{\adam}}{i}$ as a play on $\mpg{A}$. This is used in:
% With this, now we define $\frwpFunctor$:

\begin{definition}[rightward winning-position functor $\frwpFunctor$]
\label{def:rwinposfunctor}
\myvspin
The \emph{rightward winning-position functor} $\frwpFunctor:\roMPG\rightarrow \frsemCat$ is defined as follows.
The mapping on objects is given by $\frwpFunctor(m) \defeq m$.
For an arrow $\mpg{A}\in \roMPG(m, n)$, we define $\frwpFunctor(\mpg{A}) \in \frsemCat(m, n)$ by 
\myvspmathbef[-0.5em]
\begin{myminipage}
\begin{align*}
    \frwpFunctor(\mpg{A})\defeq \Big\{ \big\{  \fpwpFunctor\big(\sybpg{\mpg{A}}{\tau^{\eve}}{\tau^{\adam}}\big) 
% \big( = (i \mapsto \denoplay{\indplay{\tau_{\eve}}{\tau_{\adam}}{i}}{\mpg{A}})\big)\ 
\big|\  \tau_{\adam}\in  \stra{\adam}{\mpg{A}}\big\}\ \Big|\ \tau_{\eve}\in \stra{\eve}{\mpg{A}}\Big\};
\end{align*}
note that $\fpwpFunctor\big(\sybpg{\mpg{A}}{\tau^{\eve}}{\tau^{\adam}}\big) $ maps $i$ to
$\denoplay{\indplay{\tau_{\eve}}{\tau_{\adam}}{i}}{\mpg{A}}$.
\end{myminipage}
\end{definition}
\myvspaf

Via the traced symmetric monoidal functor $\fpwpFunctor$ with the change of base technique,
we can establish the compositionality result below.

% Although the definition of $\frwpFunctor$ does not necessarily need the notions of $\fpwpFunctor$ and $\sybpg{\mpg{A}}{\tau^{\eve}}{\tau^{\adam}}$ (since we can instead use $\denoplay{\indplay{\tau_{\eve}}{\tau_{\adam}}{i}}{\mpg{A}}$ as above),
% by using these notions we can establish the compositionality result below in a much easier and theoretically clearer way.

\begin{theorem}[compositionality for rightward oMPGs]
\myvspin
$\frwpFunctor \colon \roMPG\to \frsemCat$ 
%in~\cref{eq:defsemFuncR} 
is a traced symmetric monidal functor, preserving $\seqcomp,\oplus,\trsyb$ as in~\cref{thm:fpwpfuncComp}.
\qed
\end{theorem}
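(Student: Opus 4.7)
The plan is to reduce the statement to the already-established compositionality of $\fpwpFunctor$ at the level of plays (\cref{thm:fpwpfuncComp}). The crucial bridge is the induced roPG construction $\sybpg{\mpg{A}}{\tau^{\eve}}{\tau^{\adam}}$ of \cref{def:roPGfromMPGandStrategies}: once I show it commutes with $\seqcomp$, $\oplus$, and $\trsyb$, the compositionality of $\frwpFunctor$ follows by a routine unfolding of the double-powerset definition of $\frwpFunctor$.

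First I would observe a bijective correspondence between strategies on composed roMPGs and tuples of strategies on components. Since $\mpg{A}\seqcomp\mpg{B}$ and $\mpg{A}\oplus\mpg{B}$ have $Q^{\mpg{A}}+Q^{\mpg{B}}$ as position set with roles inherited by copairing, an $\eve$-strategy (resp.\ $\adam$-strategy) on the composite is exactly a pair $[\tau_A^{\eve},\tau_B^{\eve}]$ of strategies on the components; and for $\trace{l}{m}{n}{\mpg{E}}$, the position set literally equals $Q^{\mpg{E}}$, so strategies coincide with those on $\mpg{E}$. Using these bijections I would verify, at the level of induced roPGs, the structural equalities
\begin{align*}
\sybpg{\mpg{A}\seqcomp\mpg{B}}{[\tau_A^{\eve},\tau_B^{\eve}]}{[\tau_A^{\adam},\tau_B^{\adam}]} &= \sybpg{\mpg{A}}{\tau_A^{\eve}}{\tau_A^{\adam}} \seqcomp \sybpg{\mpg{B}}{\tau_B^{\eve}}{\tau_B^{\adam}}, \\
\sybpg{\trace{l}{m}{n}{\mpg{E}}}{\tau^{\eve}}{\tau^{\adam}} &= \trace{l}{m}{n}{\sybpg{\mpg{E}}{\tau^{\eve}}{\tau^{\adam}}},
\end{align*}
together with the analogue for $\oplus$. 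These are essentially definitional: edges from open ends are inherited, edges from positions are determined by the (same) strategy, and roles and weights match on the nose.

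Applying $\fpwpFunctor$ to both sides and invoking \cref{thm:fpwpfuncComp} gives $\fpwpFunctor(\sybpg{\mpg{A}\seqcomp\mpg{B}}{\cdot}{\cdot}) = \fpwpFunctor(\sybpg{\mpg{A}}{\cdot}{\cdot})\seqcomp \fpwpFunctor(\sybpg{\mpg{B}}{\cdot}{\cdot})$, and analogously for $\oplus$ and $\trsyb$. Unfolding \cref{def:rwinposfunctor} and the elementwise definition of $\seqcomp$ in $\frsemCat$ then yields
\begin{align*}
\frwpFunctor(\mpg{A}\seqcomp\mpg{B})
&= \big\{\{\fpwpFunctor(\sybpg{\mpg{A}\seqcomp\mpg{B}}{\tau^{\eve}}{\tau^{\adam}}) \mid \tau^{\adam}\} \mid \tau^{\eve}\big\} \\
&= \big\{\{\fpwpFunctor(\sybpg{\mpg{A}}{\tau_A^{\eve}}{\tau_A^{\adam}}) \seqcomp \fpwpFunctor(\sybpg{\mpg{B}}{\tau_B^{\eve}}{\tau_B^{\adam}}) \mid \tau_A^{\adam},\tau_B^{\adam}\} \mid \tau_A^{\eve},\tau_B^{\eve}\big\} \\
&= \frwpFunctor(\mpg{A}) \seqcomp \frwpFunctor(\mpg{B}),
\end{align*}
with entirely analogous computations for $\oplus$ and $\trsyb$; preservation of identities and symmetries is immediate since these arrows have no positions and thus admit unique empty strategies. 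Conceptually this is the change-of-base lift by $\powerfuncf\circ\powerfuncf$ of $\fpwpFunctor$, where the outer double powerset encodes the quantifier alternation $\exists\tau^{\eve}.\forall\tau^{\adam}$. I expect the main obstacle to be the trace case---not the strategy bijection, which is trivial there, but verifying that the realizability condition of \cref{def:fpsemCat} is preserved under the elementwise trace in $\frsemCat$, and that the productivity of TDPs (which relied on $(\ddagger)$) transfers to the semantic side so that trace is even well-defined on both sides of the claimed equality.
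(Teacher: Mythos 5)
Your proposal is correct and follows essentially the same route as the paper: the paper establishes this theorem by appealing to the change-of-base lifting of $\fpwpFunctor$ along $\powerfuncf\circ\powerfuncf$, and your strategy correspondence together with the commutation of $\sybpg{\mpg{A}}{\tau^{\eve}}{\tau^{\adam}}$ with $\seqcomp$, $\oplus$, $\trsyb$ and the application of the compositionality of $\fpwpFunctor$ is precisely the concrete content that this appeal leaves implicit. One small imprecision: the correspondence between strategies on a composite (or on $\trace{l}{m}{n}{\mpg{E}}$) and tuples of strategies on the components (or strategies on $\mpg{E}$) is in general only a surjection rather than a bijection, because distinct choices routed through open ends can collapse to the same successor; but your set-level computation only needs this surjectivity together with the induced-roPG equations, so the argument stands.
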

\myvspaf

\begin{remark}[Kleisli construction or change of base]
\label{rem:KleisliOrChangeOfBase}
In the compositional approach for \emph{parity games}~\cite[cf.\ Rem.~4.9]{Watanabe21}, the non-deterministic structures of the semantic category and functor are constructed not by the change of base construction but by the Kleisli construction used e.g.\ in~\cite{grellois2015finitary}.
% they choose a Kleisli construction as the semantic category, 
% which is essentially introduced in the work by Grellois and Melli{\`e}s~\cite{grellois2015finitary},
% In the Kleisli approach for parity games, there are two kinds of models, i.e., finitary models (e.g.~\cite{grellois2015finitary}) and infinitary models (e.g.~\cite{grellois2015relational}); and an algorithmic result comes from the finitary model.
In this Kleisli approach for parity games, an algorithmic result comes from the finitary models~\cite{grellois2015finitary}.
% Although we can also obtain an infinitary model for MPGs based on the Kleisli approach, 
It seems difficult to obtain a finitary model for MPGs based on the Kleisli approach that induces an algorithm since there are infinitely many priorities (while for parity games, there are finitely many).

Even if one obtains some finitary model, another question is whether the \emph{trace operator} is computable. A computationally tractable trace operator will probably only consider memoryless strategies---exploiting memoryless determinacy---which is easy to enforce in the change of base approach  but not easy in the Kleisli approach.

% , where \emph{memoryless determinacy} often plays a significant role.
% A point of our change of base approach is that the semantic category and functor only deal with the memoryless strategies (based on the memoryless determinacy for an `entire' MPG to be decomposed), so that one can easily obtain an algorithm from the model.
% In the Kleisli approach, the semantics deals with possibly memoryful strategies and one might need an extra analysis on memoryless determinacy for the computability of the trace operator.
%
% Even if one obtains some kind of finitary model, it is not necessary that
% one immediately obtains a compositional algorithm; there \emph{memoryless determinacy} often plays a significant role.
% A point of our change of base approach is that the semantic category and functor deal with only the memoryless strategies (based on the memoryless determinancy for an `entire' MPG to be decomposed), so that one can easily obtain an algorithm from the model.
% On the other hand, in the Kleisli approach the semantics deals with possibly memoryful strategies and one might need an extra analysis on the memoryless determinacy to obtain an algorithm.
\end{remark}

\subsection{Semantic Category and Functor for oMPGs}

Finally, we move on to the top level of \cref{fig:catsFunctors}, and we define the \emph{semantic category $\fsemCat$ for oMPGs}
% from the semantic category $\frsemCat$ for roMPGs
by the Int constriction. We used the Int constriction already 
% which is the same construction as 
in~\cref{def:oMPGCat}.
We simply have $\fsemCat \defeq \Int(\frsemCat)$,
but we give the concrete definition of $\fsemCat$:
\begin{definition}[semantic category $\fsemCat$]
\myvspin
\label{def:mor_in_s}
We define the category $\fsemCat$ as follows.  Its objects are pairs $(\nfr{m},\nfl{m})$ of natural numbers. Its arrows are given by arrows in $\frsemCat$ as follows:
%\todo{For saving space, we can merge this and the next definition and replace the next equation with shorter homset style.}
\myvspmathbef[-0.5em]
\begin{myminipage}
\begin{equation*} %\label{eq:arrowsOfSemCat}
 \vcenter{\infer={ 
   \text{an arrow } F\colon \nfr{m}+\nfl{n}\longrightarrow  \nfr{n}+\nfl{m}
   \text{ in $\frsemCat$}
   }{
   \text{an arrow } F\colon (\nfr{m}, \nfl{m}) \longrightarrow (\nfr{n}, \nfl{n}) 
   \text{ in $\fsemCat$}
   }}
\end{equation*}
\end{myminipage}
\myvspaf[0.1em]
% The top-to-bottom correspondence in~\cref{eq:arrowsOfSemCat} is denoted by $\Delta$. 

The Int construction ensures that $\fsemCat$ is a compact closed category (compCC).
% Let $(\nfr{m},\nfl{m}), (\nfr{n},\nfl{n})\in \Nat\times \Nat$.
% An \emph{arrow} $F$ from $(\nfr{m}, \nfl{m})$ to $(\nfr{n}, \nfl{n})$ (in $\semCat$), denoted by $F:(\nfr{m}, \nfl{m})\rightarrow (\nfr{n}, \nfl{n})$, is an arrow in $\semCatrmc(\nfr{m}+\nfl{n}, \nfr{n}+\nfl{m})$. The set of arrows from $(\nfr{m}, \nfl{m})$ to $(\nfr{n}, \nfl{n})$ in $\semCat$ is denoted by $\semCat\big((\nfr{m}, \nfl{m}), (\nfr{n}, \nfl{n})\big)$.
\end{definition}
\myvspaf

We also obtain the \emph{winning-position} functor $\fwpFunctor$ from the rightward winning-position functor $\frwpFunctor$ by the $\Int$ construction, namely $\fwpFunctor\defeq \Int(\frwpFunctor)$.
Concretely:
\begin{definition}[winning-position functor $\fwpFunctor$]
\myvspin
The \emph{winning-position functor} $\fwpFunctor \colon \oMPG\to \fsemCat$ is defined as follows.
For $\mpg{A}:(m_r, m_l)\rightarrow (n_r, n_l)$, % and $i\in \nset{m_r+n_l}$
\myvspmathbef[-0.5em]
\begin{myminipage}
$$ \fwpFunctor(\mpg{A}) \defeq 
%\denoentry{\_}{\mpg{A}} = 
\Big\{ \big\{ 
%(\denoplay{\indplay{\tau_{\eve}}{\tau_{\adam}}{i}}{\mpg{A}})_{i \in \nset{\nfr{m}+\nfl{n}}}
(i \mapsto \denoplay{\indplay{\tau_{\eve}}{\tau_{\adam}}{i}}{\mpg{A}})
\in \fpsemCat(\nfr{m}+\nfl{n},\ \nfr{n}+\nfl{m}) \ \big|\  \tau_{\adam}\in \stra{\adam}{\mpg{A}}  \big\} \ \Big| \  \tau_{\eve}\in \stra{\eve}{\mpg{A}}\Big\} .
$$
\end{myminipage}
% Concretely, it acts as the identity on objects; on arrows we have $\frwpFunctor(\mpg{A})=\Delta^{-1}(\frwpFunctor(\Gamma(\mpg{A})))$, where $\Gamma$ is from~\cref{eq:arrowsOfOMPGCat} and $\Delta$ is from~\cref{eq:arrowsOfSemCat}.
\end{definition}
\myvspaf

We note that the ``compositional'' denotation $\fwpFunctor(\mpg{A})$ above
naturally induces, for each $i \in \nset{\nfr{m}+\nfl{n}}$, the set
$\Big\{ \big\{ 
(\denoplay{\indplay{\tau_{\eve}}{\tau_{\adam}}{i}}{\mpg{A}})
\in T(\nset{\nfr{n}+\nfl{m}}) \ \big|\  \tau_{\adam}\in \stra{\adam}{\mpg{A}}  \big\} \ \Big| \  \tau_{\eve}\in \stra{\eve}{\mpg{A}}\Big\}$, which agrees with the %`non-compositional' or
``conventional'' denotation $\denoentry{i}{\mpg{A}}$ given in \cref{def:denOfEntrances}.

The following is our main theorem. It is automatically proved by the Int construction. 
\begin{theorem}[compositionality for oMPGs]
\myvspin
\label{thm:fwpFunctorIsCmpactClosed} 
The winning-position functor $\fwpFunctor \colon \oMPG\to \fsemCat$ is a compact closed functor.
That is, $\fwpFunctor$ preserves the operations $\seqcomp$ and $\oplus$ as in
\myvspmathbef[-0.7em]
\begin{myminipage}
\begin{equation*}
  \fwpFunctor(\mpg{A} \seqcomp \mpg{B})
 =  
 \fwpFunctor(\mpg{A}) \seqcomp 
 \fwpFunctor(\mpg{B})
 ,\quad
 \fwpFunctor(\mpg{A}\oplus\mpg{B})
 =  
 \fwpFunctor(\mpg{A})\oplus
 \fwpFunctor(\mpg{B})
\end{equation*}
\end{myminipage}
\myvspmathaf
as well as the constants (see \cref{fig:id-swap-co-unit}).
 \qed
\end{theorem}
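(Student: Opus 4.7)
The plan is to reduce the theorem entirely to the functoriality of the Int construction together with the already-established compositionality result $\frwpFunctor\colon \roMPG\to\frsemCat$ for rightward oMPGs. Recall that by~\cite{joyal1996}, the Int construction is a 2-functor from the 2-category of traced symmetric monoidal categories (with traced symmetric monoidal functors as 1-cells) to the 2-category of compact closed categories (with compact closed functors as 1-cells). So once the input functor is shown to preserve $\seqcomp$, $\oplus$, $\id$, $\sym$ and $\trsyb$, the output functor automatically preserves $\seqcomp$, $\oplus$, $\id$, $\sym$, $d$ and $e$, i.e., is compact closed.

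Concretely, I would proceed as follows. First, observe that by construction $\oMPG \defeq \Int(\roMPG)$, $\fsemCat \defeq \Int(\frsemCat)$, and $\fwpFunctor \defeq \Int(\frwpFunctor)$, so the statement is literally $\Int(\frwpFunctor)$ is compact closed. Next, invoke the previous compositionality theorem, which gives that $\frwpFunctor$ is a traced symmetric monoidal functor from $\roMPG$ to $\frsemCat$, i.e., it preserves $\seqcomp$, $\oplus$, $\id$, $\sym$, and $\trsyb$ strictly. Third, unfold the definition of $\Int$ on 1-cells: an arrow $\mpg{A}\colon (\nfr{m},\nfl{m})\to(\nfr{n},\nfl{n})$ in $\oMPG$ is, by definition, an arrow $\nfr{m}+\nfl{n}\to\nfr{n}+\nfl{m}$ in $\roMPG$, and $\fwpFunctor(\mpg{A}) = \frwpFunctor(\mpg{A})$ under this identification. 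Hence the preservation of $\oMPG$-composition (which is defined in $\roMPG$ via $\seqcomp$, $\oplus$ and $\trsyb$) follows directly from the preservation of $\seqcomp$, $\oplus$ and $\trsyb$ by $\frwpFunctor$; the preservation of $\oMPG$-tensor (similarly defined in $\roMPG$ using $\oplus$ and $\sym$) follows from preservation of $\oplus$ and $\sym$; and the preservation of $\id$, $\sym$, $d$, $e$ follows since all of these constants in $\Int(-)$ are built from identities and swaps in the underlying TSMC, which $\frwpFunctor$ preserves.

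To make the proof self-contained, I would verify the two displayed equations explicitly by unfolding the definitions in~\cref{sec:DefOfOMPG}. For $\fwpFunctor(\mpg{A}\seqcomp\mpg{B}) = \fwpFunctor(\mpg{A})\seqcomp\fwpFunctor(\mpg{B})$, the $\oMPG$-sequential composition $\mpg{A}\seqcomp\mpg{B}$ is (up to symmetries) the $\roMPG$-arrow $\trsyb\bigl((\mpg{A}\oplus\id)\seqcomp(\id\oplus\mpg{B})\bigr)$ (with appropriate $\sym$'s wiring the leftward ends into a loop); applying $\frwpFunctor$ and using the TSMC-functor equations yields exactly the $\Int$-sequential composition of $\frwpFunctor(\mpg{A})$ and $\frwpFunctor(\mpg{B})$ in $\fsemCat$. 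The argument for $\oplus$ is analogous but simpler, using only $\oplus$ and $\sym$ preservation. The preservation of $\id$, $\sym$, $d$ and $e$ reduces to the fact that these compCC-constants are, in the Int construction, the identities and symmetries of the underlying TSMC, which $\frwpFunctor$ preserves on the nose.

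There is no real obstacle here: all the hard semantic work—namely the decomposition equalities of~\cref{prop:deqSeqc,prop:deqtrace} and the verification that $\frwpFunctor$ is a TSMC-functor (\cref{thm:fpwpfuncComp} plus its lifting via change of base)—has already been done. The only thing to keep track of is the bookkeeping of left/right arities under the Int construction, but this is purely combinatorial and is handled uniformly by the general theorem of~\cite{joyal1996} that $\Int$ sends TSMC-functors to compact closed functors. Hence the theorem is established by invoking this general result.
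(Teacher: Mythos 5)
Your proposal matches the paper's own argument: the paper proves \cref{thm:fwpFunctorIsCmpactClosed} ``automatically by the Int construction,'' i.e., since $\fwpFunctor\defeq\Int(\frwpFunctor)$ and $\frwpFunctor$ is already shown to be a traced symmetric monoidal functor, the general result of~\cite{joyal1996} that $\Int$ sends such functors to compact closed functors yields the claim. Your additional explicit unfolding of the $\oMPG$-level $\seqcomp$ and $\oplus$ via the TSMC operations is consistent with the paper's definitions in \cref{sec:DefOfOMPG} and is correct, just more detailed than the paper's one-line invocation.
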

\myvspaf

\section{Implementation and Experiment}
% \todo[inline]{Should we mention difference between QDPM and CompMPG? QDPM solves every winning positions and CompMPG solves focus on entry positions.}

\label{sec:impAndExp}
We describe our implementation  $\compMPG$  of the compositional algorithm for oMPGs, and show experimental results. The experiment results 1)
show that our compositional framework has advantages over the state-of-the-art solver QDPM~\cite{Benerecetti20}, and 2)
 identify two major factors that affect the performance of $\compMPG$.
QDPM is a pseudopolynomial algorithm based on
\emph{small progress measure}~\cite{jurdzinski2000small} and \emph{quasi
dominion}~\cite{benerecetti2018solving}. 

\myparagraph{Meager Semantics}
In our implementation of $\compMPG$, to enhance performance, we use \emph{meager semantics},
% a refinement of the \emph{fat} semantics given in \cref{sec:semanticCategories}. 
a refinement of the semantics given in \cref{sec:semanticCategories} (we call it \emph{fat semantics}). 
An arrow $F:\bfn{m}\rightarrow \bfn{n}$ in the fat semantic category $\fsemCat$ collects all possible strategies and some can be redundant; that is, for deciding whether the entrance is winning, losing, or pending, some strategies are dominated by others, and thus can be forgotten.
Since oMPGs have several exits in general, the situation is much like that of multi-objective optimization.
% there are several optimal strategies (with different denotations).
Such dominance relationship between strategies can be described
 % as the maximality of
by a certain order $\leq$.
 % (and the minimality of its `lower' order).

Specifically, we define the order $\leq$ between arrows in $\fpsemCat$ (the bottom level in \cref{fig:catsFunctors} of plays) as follows:
for arrows $f,g:m\rightarrow n$ in $\fpsemCat$ (which are functions $\nset{m} \to \nset{n} + \Real\times \nset{n} + \{\const{\eve},\const{\adam}\}$), we define $f \leq g$ if for each $i\in \nset{m}$, one of the following conditions is satisfied: 
(i) $f(i) = g(i) \in \nset{n}$,
(ii) $f(i) = (r_f, j_f)$, $g(i) = (r_g, j_g)$, $r_f \geq r_g$ and $j_f = j_g$,
(iii) $f(i) = \const{\eve}$,
or 
(iv) $g(i) = \const{\adam}$. 
We can also formulate this \emph{meager semantics} as a compact closed category/functor similarly to~\cref{sec:semanticCategories}, drawing the same picture as \cref{fig:catsFunctors}. See~\cref{sec:meagerSemantics}.
%We defer the details to another venue for lack of space (see~\cref{sec:meagerSemantics}).

% \todo[inline]{(Kazuyuki) Kazuki, how about replacing the last sentence with something like:\\
% "We defer the details to another venue for lack of space, because the construction has some subtle point to be explained in details, and we have an idea for some generalization (see~\cref{sec:meagerSemantics} for the meager semantics)."}

\myparagraph{Implementation} Our implementation  $\compMPG$ of  our compositional algorithm is based on the meager semantics in Haskell (it is available at \url{https://github.com/Kazuuuuuki/compMPG}). 
We evaluate $\compMPG$ comparing with QDPM. $\compMPG$ takes an oMPG $\mpg{A}$ as input, which is expressed by (a textual format for) a \emph{string diagram} with sequential composition $\seqcomp$, sum $\oplus$, and constants such as $\id$ (\cref{fig:id-swap-co-unit}). See~\cref{fig:algebraicOperations}.

Formally, those inputs are represented in a \emph{free prop} for oMPGs, which is a slight variant of the free prop for open parity games introduced in~\cite{Watanabe21}.
The input also expresses which component is duplicated (such as $\verb|let t = ... in (t; t; t)|$), by which $\compMPG$ can solve the entire input without solving the repeated component $\verb|t|$ more than once.
Given such an oMPG $\mpg{A}$ as input, $\compMPG$ returns the arrow $\fwpFunctor(\mpg{A})$ as output.
%The output does not contain redundant results as we choose the meager semantics for our implementation.
%If an entire input $\mpg{A}$ has no exits, i.e., $\mpg{A}$ an MPG, then we can immediately decide whether an entrance (an initial position) of $\mpg{A}$ is winning or losing.
If the entire input is an oMPG from $(1, 0)$ to $(0, 0)$ (recall arities from e.g.~\cref{fig:openMPG}), then the input can be interpreted as an MPG whose initial position is the entrance.
We note that $\compMPG$ only decides the winner at each entrance, while QDPM decides that at all positions.

\myparagraph{Experiment Setting}
% \todo[inline]{mention the size of game}
% \todo[inline]{mention the circle in the figure}
We pose the following research questions.
\begin{description}
 \item[RQ1] What characteristics of target MPGs affect the execution time of $\compMPG$?
 \item[RQ2] Can $\compMPG$ efficiently solve a variety of MPGs?
 \item[RQ3] Can $\compMPG$ efficiently solve large MPGs?
\end{description}
\myvsplistaf

For evaluating our framework and answering the research questions, we conducted experiments on an Amazon EC2 t2.xlarge instance, 2.30GHz Intel Xeon E5-2686, 4 virtual CPU cores, 16 GB RAM. We built four benchmark sets (a)--(d) to evaluate the three research questions. Each benchmark set consists of 200--400 MPGs; their weights  are randomly assigned in the range $[-100000,100000]$. 

The benchmark sets  (a) and (b) are designed to measure how the compositional structure affects  $\compMPG$. 
The benchmark set (a) evaluates the effect of the \emph{degree of repetition} (DR) in repeated sequential compositions.
DR indicates the frequency of occurrences of repeated parts in the input MPGs, with higher DR meaning more repetition.
The benchmark set (b) assesses the impact of the arity size of oMPGs in sequential compositions. The set (a) has a fixed compositional structure $t(\mpg{A}_{1},\dotsc,\mpg{A}_{n})$; the sub-oMPGs $\mpg{A}_{1},\dotsc,\mpg{A}_{n}$ are randomly generated and resulting in 400 different MPGs; the same for (b).
 % $\mpg{A}_{1}\seqcomp\cdots\seqcomp\mpg{A}_{n}$ (when DR is high we have $\mpg{A}_{i}=\mpg{A}_{j}$ for many pairs $(i,j)$)
 % contains 400 MPGs; the sub-oMPGs of each MPG are randomly generated.

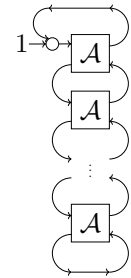
\begin{wrapfigure}[12]{R}{0.12\textwidth}
\centering
    \vspace{-\baselineskip}
  \begin{minipage}[t]{0.11\textwidth}
  \centering
  \scalebox{0.5}{
  \begin{tikzpicture}[
              innode/.style={draw, rectangle, minimum size=1cm},
              interface/.style={inner sep=0},
              innodemini/.style={draw, circle, minimum size=0.2cm}
              ]
              \node[interface] (rdo1) at (-0.3cm, 0.25cm) {\scalebox{2}{$1$}};
              % \node[interface] (rcdo1) at (3cm, 0.25cm) {$1$};
              \node[innodemini] (dummy) at (0.5cm, 0.25cm) {};
              \node[innode] (game1) at (1.5cm, 0cm) {\scalebox{2}{$\mpg{A}$}};
              \node[innode] (game2) at (1.5cm, -1.5cm) {\scalebox{2}{$\mpg{A}$}};
              \node[innode] (gamem) at (1.5cm, -4.5cm) {\scalebox{2}{$\mpg{A}$}};
              \draw[->] (rdo1) to (dummy.west);
              \draw[->] (dummy) to ($(game1.north west)!0.5!(game1.west)$);
              % \draw[->] ($(game1.north east)!0.5!(game1.east)$) to (rcdo1);
              % \path (game1) edge [loop above] (game1);
              \draw[<-] (0.4cm, 1.2cm) to (2cm, 1.2cm);
              \draw[->] (0.4cm, 1.2cm) arc [radius=0.4, start angle = 90, end angle=270];
              \draw[->] (1cm, -0.3cm) arc [radius=0.5, start angle = 90, end angle=270];
              \draw[->] (1cm, -1.8cm) arc [radius=0.5, start angle = 90, end angle=270];
              \draw[->] (1cm, -3.3cm) arc [radius=0.5, start angle = 90, end angle=270];
              \draw[->] (1cm, -4.8cm) arc [radius=0.5, start angle = 90, end angle=270];
              \draw[->] (1cm, -5.8cm) to (2cm, -5.8cm);
              \draw[->] (2cm, -5.8cm) arc [radius=0.5, start angle = 270, end angle=450];
              \draw[->] (2cm, -4.3cm) arc [radius=0.5, start angle = 270, end angle=450];
            \draw[->] (2cm, -2.8cm) arc [radius=0.5, start angle = 270, end angle=450];
            \draw[->] (2cm, -1.3cm) arc [radius=0.5, start angle = 270, end angle=450];
             \draw[->] (2cm, 0.2cm) arc [radius=0.5, start angle = 270, end angle=450];
              \path (game2) -- node[auto=false]{\vdots} (gamem);
          \end{tikzpicture}
        }
    \end{minipage}
    \caption{Benchmark (c): mining.}
    \label{fig:exbenchmark}
\end{wrapfigure}

The benchmark sets (c) and (d) are built for comparison with QDPM.
Each benchmark set contains $200$ randomly generated MPGs that have some compositional structures explained below.
%which are often difficult to handle for non-compositional framework due to the state space explosion problem. 
The benchmark set (c) is called \emph{mining}, and its compositional structure is shown in~\cref{fig:exbenchmark}, where the small circle has weight $0$ and role $\eve$. %which is introduced to form a loop in the uppermost $\mpg{A}$. 
Intuitively, 
%positive and negative weights represent \emph{rewards} and \emph{costs}, respectively, and 
the roles $\eve$ and $\adam$, respectively, correspond to an \emph{explorer} and the \emph{environment} that prevents the explorer from earning rewards.
Note that the winner of the entry position may not be determined solely within the uppermost $\mpg{A}$, i.e., the explorer  may choose to go deeper into the cave to maximize the rewards. 
For simplicity, we assume that each floor of the cave is the same oMPG $\mpg{A}$,
which is randomly generated in a non-compositional manner. 
The benchmark set (d) is built in a more random manner as follows.
Each MPG is built inductively from the bottom layer, and each layer randomly chooses a compositional structure from five pre-fixed compositional structures. This continues for $\sim 20$ layers.

The average number of positions is approximately $1.6 \times 10^7$ for the benchmark set (c), and  $1.7 \times 10^7$  for  (d).
We limit the size of games to this order because we could not generate input files for QDPM for larger games. We note that CompMPG could solve larger games  (with $10^8$ positions) constructed in the (c)--(d) ways within at most 5 seconds.

% \newcommand{\HL}[1][\,]{[#1]}
% Benchmark (d) is built in a more rondom way, with many layers of compositional structures $C$.
% First we prepare five fixed compositional structures, and the construction is bottom-up:
% In the lowest case, we construct one random oMPG $\mpg{A}_1$. %and put a tag $\#1$.
% In the next step we randomly choose one compositional structure $C_1\HL\dots\HL$ from the five ones,
% where each $\HL$ represents one occurrence of `hole' in compositional structures $C_1$.
% Then we fill in all the holes the same oMPG $\mpg{A}$, resulting in $\mpg{A}_2 = C_1\HL[\mpg{A}_1]\dots\HL[\mpg{A}_1]$. %on which we put a tag $#2$.
% Repeating this step, we obtain $\mpg{A}_{n} = C_{n-1}\HL[\mpg{A}_{n-1}]\dots\HL[\mpg{A}_{n-1}]$ %with tag $#n$ 
% as an input of this benchmark.

\myparagraph{Results and Discussion}
\begin{figure}[tb]
  \centering
  \begin{subfigure}[]{0.45\columnwidth}
        \centering
        \includegraphics[width=1\textwidth]{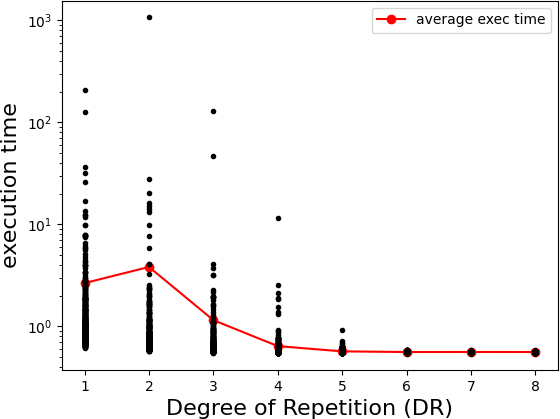}
        \caption{Influence of degree of repetition (DR). Execution time of CompMPG in seconds.}
        \label{subfig:infSeqComp}
  \end{subfigure}
  \hfill
  \begin{subfigure}[]{0.45\columnwidth}
        \centering
        \includegraphics[width=1\textwidth]{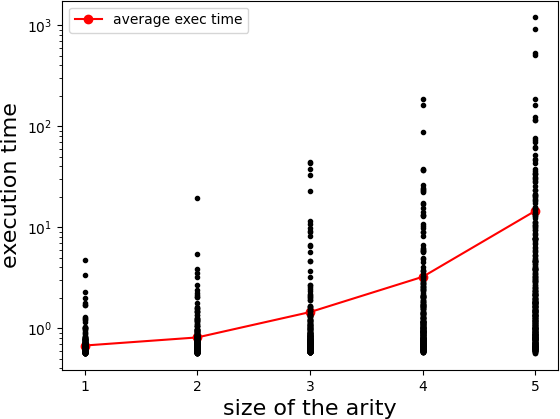}
        \caption{Influence of the size of the arity. Execution time of CompMPG in seconds.}
        \label{subfig:infArity}
  \end{subfigure} 
  \begin{subfigure}[]{0.45\columnwidth}
        \centering
        \includegraphics[width=1\textwidth]{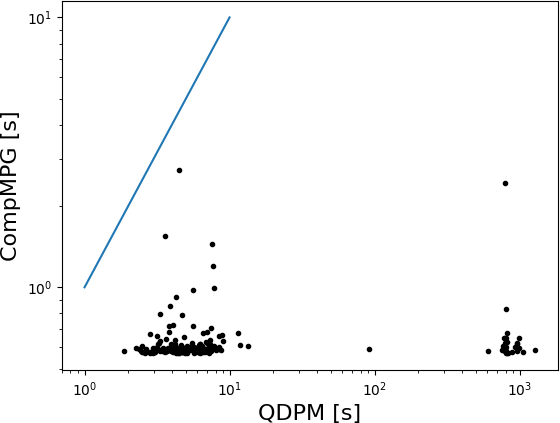}
        \caption{Mining.
        The average number of positions is $16404975$, and the average number of edges is $25847402$.
        The line is $y=x$.
        }
        \label{subfig:mining}
  \end{subfigure}  
  \hfill
  \begin{subfigure}[]{0.45\columnwidth}
        \centering
        \includegraphics[width=1\textwidth]{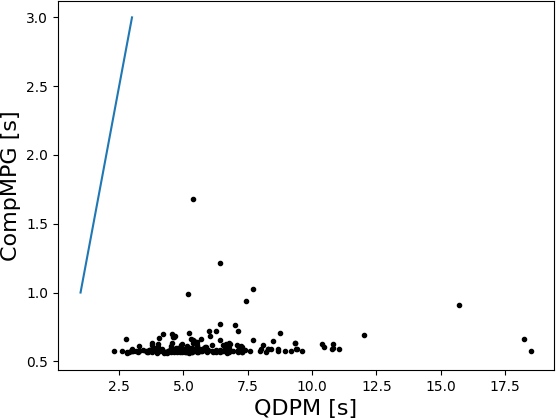}
        \caption{Randomized.
        The average number of positions is $17683982$, and the average number of edges is $29718173$.
        The line is $y=x$.
        }
        \label{subfig:randomized}
  \end{subfigure} 
  \caption{Experimental results for the benchmark sets (a)--(d). }
  \label{fig:results} 
\end{figure}
The experimental results are summarized in~\cref{fig:results}.
The four figures contain all the execution results, with no input MPGs leading to timeout (2000 seconds).
%The (blue) lines in \cref{subfig:mining,subfig:randomized} show the line $y=x$.
Based on these, we address the research questions as follows.

\begin{description}
    \item[RQ1] The results shown in~\cref{subfig:infSeqComp} and~\cref{subfig:infArity} indicate that the degree of repetition and  and the size of the arity  influences the performance of $\compMPG$. In~\cref{subfig:infSeqComp},
 as DR becomes bigger, CompMPG can exploit more repetition and gets faster. This is as expected.

    As the arity of $\mpg{A}$ increases, the number of dominant (i.e.\ optimal in the multi-objective sense) strategies on $\mpg{A}$ tends to increase. That is: the more exits, the more objectives. This makes the meager semantics less meager, leading to more computational cost. 
 % then so does the number of the $\fpsemCat$-arrows in $\fwpFunctor(\mpg{A})$, which means that the computation of operations is harder.
    \item[RQ2] All MPGs, especially in (d), are randomly generated with fixed compositional structures, and $\compMPG$ shows stable performance over all benchmarks.
    This means that $\compMPG$ solves a wide variety of MPGs by exploiting compositional structures.
    \item[RQ3] \cref{subfig:mining} shows that $\compMPG$ is remarkably faster than QDPM.
  $\compMPG$ is faster than QDPM for every input: for some MPGs, QDPM takes $600$ to $10^3$ seconds, while $\compMPG$ finishes within 1 second.
    \cref{subfig:randomized}  shows that $\compMPG$ is considerably faster than QDPM for more random (yet structured) benchmarks, too.
    % This result for the randomized benchmark implies that $\compMPG$ has advantages over QDPM for a variety of MPGs, once if they are composed of repetitive components.

    In conclusion, our  CompMDP performed consistently well for large games (with approximately $10^7$ positions), while non-compositional algorithms such as QDPM can struggle. Overall, we clearly see the advantage of compositionality.
\end{description}

\bibliography{arxiv.bib}

\appendix

\section{Definitions on $\roMPG$ and $\oMPG$}
\label{sec:DefOfROMPGandOMPG}

\subsection{Definitions on $\roMPG$}
\label{sec:DefOfROMPG}

\begin{definition}[sum $\oplus$ of roMPGs]
\label{def:sumOfRoMPGs}
Let $\mpg{A}:m\rightarrow n$ and $\mpg{B}:k\rightarrow l$ be roMPGs. Their sum $\mpg{A}\oplus \mpg{B}:m+k\rightarrow n + l$ is given by $\big(m+k, n+l, Q^{\mpg{A}}+Q^{\mpg{B}}, E^{\mpg{A}+\mpg{B}}, [\rho^{\mpg{A}},\rho^{\mpg{B}}], [w^{\mpg{A}}, w^{\mpg{B}}]\big)$, where $E^{\mpg{A}+ \mpg{B}}$ naturally combines the two set of edges by case distinction: 
  \begin{alignat*}{3}
      & \text{for $(s, s')\in (\nset{m} + Q^{\mpg{A}})\times (\nset{n} + Q^{\mpg{A}})$, }&
       \text{$(s, s')\in E^{\mpg{A}+ \mpg{B}}$} & \text{ if  $(s, s')\in E^{\mpg{A}}$,}\\
      & \text{for $(s, s')\in ([m+1, m+k] + Q^{\mpg{B}})\times ([n+1, n+l] + Q^{\mpg{B}})$, }&\text{$(s, s')\in E^{\mpg{A}+ \mpg{B}}$} & \text{ if $(\lplus{m}{s}, \lplus{n}{s'})\in E^{\mpg{B}}$},\\
      & \text{for $s, s'\in \nset{m+k} + \nset{n+l}+Q^{\mpg{A}}+Q^{\mpg{B}}$, } &\text{$(s, s')\not \in E^{\mpg{A}+ \mpg{B}}$} & \text{ otherwise,}
  \end{alignat*}
 where $\lplus{m}{s} \defeq s-m$ if $s\in [m+1, m+k]$, and $\lplus{m}{s} \defeq s$ otherwise. The definition of $[\rho^{\mpg{A}}, \rho^{\mpg{B}}], [w^{\mpg{A}}, w^{\mpg{B}}]$ are similar.

\end{definition}

\begin{definition}[isomorphism of roMPGs]
\label{def:iso_rompgs}
Let $\mpg{A} = (m, n, Q^{\mpg{A}}, E^{\mpg{A}}, \rho^{\mpg{A}}, w^{\mpg{A}})$ and $\mpg{B} = (m, n, Q^{\mpg{B}}, E^{\mpg{B}}, \rho^{\mpg{B}}, w^{\mpg{B}})$ be roMPGs, assuming that they have the same arity $m, n$. An \emph{isomorphism} from $\mpg{A}$ to $\mpg{B}$ is a bijection $\eta:Q^{\mpg{A}}\rightarrow Q^{\mpg{B}}$ that preserves the MPG structure, that is, (i) for each $(s, s')\in (\nset{m}+Q^{\mpg{A}})\times (\nset{n}+Q^{\mpg{A}})$, $(s, s')\in E^{\mpg{A}}\Leftrightarrow (\overline{\eta}(s), \overline{\eta}(s'))\in E^{\mpg{B}}$, (ii) for $s\in Q^{\mpg{A}}$, $\rho^{\mpg{A}}(s) = \rho^{\mpg{B}}(\eta(s))$, and (iii) for $s\in Q^{\mpg{A}}$, $w^{\mpg{A}}(s) = w^{\mpg{B}}(s)$. Here, we extend $\eta$ to $\overline{\eta}:\Nat + Q^{\mpg{A}}\rightarrow \Nat + Q^{\mpg{B}}$ by $\overline{\eta}(n) = n$ for $n\in \Nat$.
\end{definition}

% \begin{definition}[identity, swap]
% \label{def:idSwapRoMPGs}
% Let $m, n$ be natural numbers. The \emph{identity} $\Id{m}$ on $m$ is given by $\Id{m} \defeq (m, m, \emptyset, E, !, !)$, where $E\defeq \{(i, i) \mid i\in \nset{m}\}$. The \emph{swap} $\Swap{m}{n}$ over $m$ and $n$ is given by $\Swap{m}{n} \defeq (m+n, n+m, \emptyset, E, !, !)$, where $E \defeq \{(i, i+n) \mid i\in \nset{m}\}\cup \{(j+m, j)\mid j\in \nset{n}\}$.
% \end{definition}

\subsection{Definitions on $\oMPG$}
\label{sec:DefOfOMPG}

In the definition of the (bidirectional) sequential composition $\seqcomp$ of $\oMPG$, all (unidirectional) operations $\seqcomp$, $\oplus$, $\trsyb$ of $\roMPG$ are used. 

\begin{definition}[$\seqcomp$ of oMPGs]
\label{def:seqCompOMPGs}
Let $\mpg{A}:(\nfr{m},\nfl{m})\rightarrow (\nfr{l},\nfl{l})$ and $\mpg{B}:(\nfr{l},\nfl{l})\rightarrow (\nfr{n},\nfl{n})$ be arrows in $\oMPG$. Their \emph{sequential composition} $\mpg{A}\seqcomp\mpg{B}:(\nfr{m},\nfl{m})\rightarrow (\nfr{n},\nfl{n})$ is
defined by 
\begin{math}
 % \begin{array}{l}
 %  \mpg{A}\seqcomp\mpg{B}
 % \defeq \Gamma^{-1}\bigl(\trace{\nfl{l}}{\nfr{m} + \nfl{n}}{\nfr{n}+ \nfl{m}}{\big((\Swap{\nfl{l}}{\nfr{m}}\oplus\Id{\nfl{n}}) \seqcomp (\Gamma(\mpg{A})\oplus \Id{\nfl{n}}) \seqcomp (\Id{\nfr{l}}\oplus \Swap{\nfl{m}}{\nfl{n}}) \seqcomp (\Gamma(\mpg{B})\oplus\Id{\nfl{m}}) \seqcomp (\Swap{\nfr{n}}{\nfl{l}}\oplus\Id{\nfl{m}})\big)}\bigr)
   \mpg{A}\seqcomp\mpg{B}
 \defeq \trace{\nfl{l}}{\nfr{m} + \nfl{n}}{\nfr{n}+ \nfl{m}}{\big((\Swap{\nfl{l}}{\nfr{m}}\oplus\Id{\nfl{n}}) \seqcomp (\mpg{A}\oplus \Id{\nfl{n}}) \seqcomp (\Id{\nfr{l}}\oplus \Swap{\nfl{m}}{\nfl{n}}) \seqcomp (\mpg{B}\oplus\Id{\nfl{m}}) \seqcomp (\Swap{\nfr{n}}{\nfl{l}}\oplus\Id{\nfl{m}})\big)}
% \end{array}
\end{math}.
% where the right-hand side refers to arrows and operations in $\roMDP$ via the correspondence~\cref{eq:arrowsOfOMDPCat}. 
\end{definition}

The (bidirectional) sum $\oplus$ of $\oMPG$ is defined using (unidirectional) $\seqcomp$ and $\oplus$ of $\roMPG$.

\begin{definition}[$\oplus$ of oMPGs]
\label{def:sumOMPGs}
Let $\mpg{A}:(\nfr{m},\nfl{m})\rightarrow (\nfr{n},\nfl{n})$ and $\mpg{B}:(\nfr{k},\nfl{k})\rightarrow (\nfr{l},\nfl{l})$ be arrows in $\oMPG$. Their \emph{sum} $\mpg{A}\oplus\mpg{B}:(\nfr{m}+\nfr{k},\nfl{k}+\nfl{m})\rightarrow (\nfr{n}+\nfr{l},\nfl{l}+\nfl{n})$ is
defined by 
\begin{math}
 % \begin{array}{l}
 %  \mpg{A}\seqcomp\mpg{B}
 % \defeq \Gamma^{-1}\bigl(\trace{\nfl{l}}{\nfr{m} + \nfl{n}}{\nfr{n}+ \nfl{m}}{\big((\Swap{\nfl{l}}{\nfr{m}}\oplus\Id{\nfl{n}}) \seqcomp (\Gamma(\mpg{A})\oplus \Id{\nfl{n}}) \seqcomp (\Id{\nfr{l}}\oplus \Swap{\nfl{m}}{\nfl{n}}) \seqcomp (\Gamma(\mpg{B})\oplus\Id{\nfl{m}}) \seqcomp (\Swap{\nfr{n}}{\nfl{l}}\oplus\Id{\nfl{m}})\big)}\bigr)
   \mpg{A}\oplus \mpg{B}
 \defeq (\Swap{\nfr{m}}{\nfr{k}}\oplus \Swap{\nfl{l}}{\nfl{n}})\seqcomp (\Id{\nfr{k}}\oplus \mpg{A}\oplus \Id{\nfl{l}})
 \seqcomp (\Swap{\nfr{k}}{\nfr{n}}\oplus\Swap{\nfl{l}}{\nfl{m}}) \seqcomp (\Id{\nfr{n}}\oplus \mpg{B}\oplus \Id{\nfl{m}})
% \end{array}
\end{math}.
% where the right-hand side refers to arrows and operations in $\roMDP$ via the correspondence~\cref{eq:arrowsOfOMDPCat}. 
\end{definition}

\section{Play Monads}
\label{sec:play_monads}
\begin{definition}[play monad]
The \emph{play monad} $(\fpmonad, \eta, \mu)$ on $\Sets$ is defined by $\fpmonad(X) \defeq X + \Real\times X + \{\const{\eve}\} + \{\const{\adam}\}$, $\eta_X(x) \defeq x$, and
\begin{align*}
    \mu_X(z) &\defeq \begin{cases}
        z &\text{ if }z\in X \text{ or }z\in \Real \times X,\\
        \const{\eve} &\text{ if }z = \const{\eve} \text{ or }z\in \Real \times \{\const{\eve}\},\\
        \const{\adam} &\text{ if }z = \const{\adam} \text{ or }z\in \Real \times \{\const{\adam}\},\\
        (r+q, x) &\text{ if } z = (r, q, x) \in \Real \times \Real \times X.
    \end{cases}
\end{align*}
   
\end{definition}

The category $\fpsemCat$ is closely related with the \emph{Kleisli category} $\kleisli{\Sets}{\fpmonad}$, whose arrow $f:X\rightarrow Y$ is a $\fpmonad$-effectful function from $X$ to $\fpmonad(Y)$. We state the relationship between $\kleisli{\Sets}{\fpmonad}$ and $\fpsemCat$ below.

\begin{proposition}
\label{prop:fpsemcatSMC}
    \begin{enumerate}
        \item The category $\fpsemCat$ is a subcategory of $\kleisli{\Sets}{\fpmonad}$ by restricting whose objects are intervals $\nset{n}$ for $n\in \Nat$ and whose arrows satisfy the realizability condition.
        \item The category $\fpsemCat$ inherits \emph{symmetric monoidal structure} from $\kleisli{\Sets}{\fpmonad}$ whose monoidal product is coproduct.
    \end{enumerate}
\end{proposition}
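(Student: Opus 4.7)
\medskip
\noindent\textbf{Proof proposal.}
For part (1), the plan is to verify the three subcategory conditions: that identities live in $\fpsemCat$, that the composition of $\fpsemCat$ agrees with Kleisli composition, and that the realizability condition is closed under composition. First I would unfold the Kleisli composition $g \circ^{\fpmonad} f \defeq \mu_{\nset{n}} \circ \fpmonad(g) \circ f$ for $f : \nset{m} \to \fpmonad(\nset{l})$ and $g : \nset{l} \to \fpmonad(\nset{n})$, and perform a case analysis on $f(i) \in \nset{l} + \Real{\times}\nset{l} + \{\const{\eve},\const{\adam}\}$. In each case, using the explicit definition of $\mu$ from the play monad (including the accumulation $(r,q,x) \mapsto (r+q, x)$ and the absorbing behaviour of $\const{\eve}, \const{\adam}$), I would match the result line by line with the five cases defining $f \seqcomp g$ in $\fpsemCat$. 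For identities, the unit $\eta_{\nset{n}}(i) = i \in \nset{n}$ is clearly an injection into $\nset{n}$, so realizability holds trivially, and it acts as the $\seqcomp$-identity by the above reconciliation.

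Next I would show that realizability is stable under $\seqcomp$. Suppose $f$ and $g$ are realizable and assume $(f \seqcomp g)(i) = k \in \nset{n}$; inspecting the cases for $f \seqcomp g$ this can only arise from either $f(i) \in \nset{l}$ with $g(f(i)) = k \in \nset{n}$, or $f(i) = (r,j)$ with $g(j) = k$ (which is impossible since then the result would carry a weight). In the first case, realizability of $f$ forces every $i' \ne i$ to satisfy $f(i') \ne f(i)$ and $f(i') \ne (\_, f(i))$; combining this with realizability of $g$ at $f(i)$ eliminates any other $i'$ that could reach $k$ straight. A similar short argument rules out the $(r,k)$ collision. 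This is the only step I expect to require genuine care: the realizability condition is asymmetric (straight-exit exclusive of weighted-exit), so the case analysis must track exactly which clause of $\seqcomp$ produced the output.

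For part (2), the goal is to exhibit the symmetric monoidal structure of $\fpsemCat$ as inherited from the coproduct structure of $\kleisli{\Sets}{\fpmonad}$. The Kleisli category of any monad on $\Sets$ inherits the coproduct from $\Sets$: $\nset{m} + \nset{n} \cong \nset{m+n}$ as objects, and the tensor on arrows is $f \oplus g \defeq [\fpmonad(\kappa_1)\circ f,\, \fpmonad(\kappa_2)\circ g]$ where $\kappa_1, \kappa_2$ are the obvious injections into $\nset{m+n}$. I would verify that this coincides with the pointwise definition of $\oplus$ on $\fpsemCat$ announced earlier in the paper, by a direct expansion on a representative $i \in \nset{m+k}$. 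Since coproducts automatically give a symmetric monoidal structure with unit $\nset{0} = \emptyset$ and associator/unitor/symmetry inherited from $\Sets$, no further axiom has to be checked beyond this correspondence.

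Finally, I would verify that this tensor restricts to $\fpsemCat$, i.e., that $f \oplus g$ satisfies realizability whenever $f$ and $g$ do. This is immediate: the image of $f \oplus g$ on $\nset{m}$ lies in the $\nset{n}$-summand (with corresponding weighted and $\const{\eve}/\const{\adam}$ elements) and on $[m+1,m+k]$ in the $[n+1,n+l]$-summand, so any straight-exit collision on $\fpsemCat$ side would already have been a collision inside $f$ or inside $g$. Together with part (1), this shows that the inclusion $\fpsemCat \hookrightarrow \kleisli{\Sets}{\fpmonad}$ is a strict symmetric monoidal embedding, as required.
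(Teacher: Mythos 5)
Your proposal is correct, and it is essentially the routine verification that the paper leaves implicit (the proposition is stated without an explicit proof): unfolding Kleisli composition against the five-case definition of $\seqcomp$, checking that realizability is preserved by identities, composition and $\oplus$, and observing that the coproduct-induced symmetric monoidal structure of the Kleisli category restricts. The one subtlety you rightly respect is that in $\fpsemCat$ the inherited $\oplus$ is only a monoidal product, not a coproduct---copairing of realizable arrows can violate realizability---so it suffices (and is necessary) to check that the tensor of realizable arrows and the structural isomorphisms, which are injective renamings, remain realizable.
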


% \begin{definition}[meager play monad]
% Let $(\fpmonad, \eta, \mu)$ be the play monad.
% The \emph{meager play monad} $(\fpmonad^{\meagersymb}, \eta^{\meagersymb}, \mu^{\meagersymb})$ on $\PreOrd$ is defined as follows: let $X = (|X|, \leq_X)$, then (i) the functor $\fpmonad^{\meagersymb}(X) \defeq (\fpmonad(|X|), \leq_{\fpmonad^{\meagersymb}(X)})$, where \begin{align*}
%     \leq_{\fpmonad^{\meagersymb}(X)}&\defeq \big\{ (x_1, x_2) \mid x_1\leq_X x_2 \big\} \cup \big\{ (r_1, x_1), (r_2, x_2)  \mid r_1\geq r_2\text{ and }x_1\leq_X x_2\big\}\\
%     &\cup \big\{(\const{\eve}, z) \mid z\in \fpmonad(|X|) \big\} \cup \big\{ (z, \const{\adam}) \mid z\in \fpmonad(|X|) \big\},
% \end{align*}
% (ii) the unit $\eta^{\meagersymb}_X \defeq \eta_{|X|}$, and (iii) the multiplication $\mu^{\meagersymb}_X \defeq  \mu_{|X|}$.
% \end{definition}

% \begin{definition}[meager semantic category for plays]
    
% \end{definition}

\section{Proof of~\cref{prop:fplayTSMC} }
\label{sec:proofFplayTSMC}
Before proving the statement, we explicitly define trace operator in $\fpsemCat$. We introduce \emph{semantic TDP} and its denotation by mimicking~\cref{def:traced_induced_plays} and~\cref{def:denTracedInducedPlays}.

\begin{definition}[semantic TDP]
Let $f:l+m\rightarrow l+n$ in $\fpsemCat$, and $j\in \nset{m}$.
The \emph{semantic TDP} of $f$ from $j$ is the (unique) possibly infinite sequence $v = (v_0,v_1,\cdots)$ of elements in $\nset{l}+\nset{m}+\nset{n} + \Real\times (\nset{l} + \nset{n}) + \{\const{\eve}, \const{\adam}\}$ that $v_0 = i$ and satisfies the following conditions:
\begin{enumerate}
    \item
     $v_{j+1}$ is undefined if $v_j \in \nset{n} + \Real \times \nset{n} + \{\const{\eve}, \const{\adam}\}$,
    \item 
    $v_{j+1} = f(v_i)$ if $v_j\in \nset{l}+\nset{m}$,
    \item
    $v_{j+1} = f(k)$ if $v_j = (r, k)\in \Real \times \nset{l}$.
\end{enumerate}
    % \begin{enumerate}
    %     \item for any $i$ such that $v_i$ is defined, $v_i\in \nset{l}+\nset{m}+\nset{n}+\Real\times (\nset{l}+\nset{n}) + \{\const{\eve}, \const{\adam}\}$,
    %     \item $v_0 = j$,
    %     % \item if $v_i\in \nset{l}+\nset{m}$ and $f(v) \in \{\const{\eve}, \const{\adam}\}$, then $v_{i+1}$ is undefined, 
    %     \item if $v_i\in \nset{l}+\nset{m}$, then $v_{i+1} = f(v_i)$,
    %     % \item if $\psi_i = (r, k)\in \Real \times \nset{l}$ and $f(k)\in \{\const{\eve}, \const{\adam}\}$, then $\psi_{i+1}$ is undefined,
    %     \item if $v_i = (r, k)\in \Real \times \nset{l}$, then $v_{i+1}= f(k)$,
    %     \item otherwise (i.e., $v_j$ does not satisfy any above conditions), $v_{j+1}$ is undefined.
    % \end{enumerate}
\end{definition}

% \begin{definition}[weight sequences for infinite semantic digests] 
% Let $f:m\rightarrow n$ in $\fpsemCat$, $j\in \nset{m}$, and $(\psi_i)_{i\in \Nat}$ be an infinite semantic digest of $f$ from $j$. We define the \emph{weight sequence} $(r_i)_{i\in \Nat}$ of $(\phi_i)_{i\in \Nat}$ as follows: for any $i\in \Nat$, $r_i = 0$ if $\psi_i\in \nset{l} + \nset{m}$, and $r_i = q_i$ if $\psi_i = (q_i, j_i)\in \Real \times \nset{l}$. Note that there is no $\psi_i\in \nset{n}+\Real \times \nset{n}$ because $\psi$ is infinite.
% \end{definition}

% \begin{definition}[MP condition for infinite semantic digests]
% We say that an infinite semantic digests satisfies the MP condition if its weight sequence $(r_i)_{i\in \Nat}$ satisfies 
%     \begin{align}
% \label{eq:mean_payoff_condition}
%     \liminf_{n\rightarrow \infty} \frac{\sum_{j < n} r_j}{n} \geq 0.
% \end{align}
% \end{definition}

\begin{definition}[denotation of semantic TDP]
    Let $f:l+m\rightarrow l+n$, $j\in \nset{m}$, and $v \defeq v_0,v_1,\cdots $ be the semantic TDP of $f$ from $j$. The \emph{denotation} $\denosdigest{v}\in \fpmonad(\nset{n})$ of $v$ is defined as follows:
    \begin{alignat*}{2}
     & \const{\eve} && \text{ if }v \text{ is infinite and satisfies the MP condition,}\\
     & \const{\adam} && \text{ if }v \text{ is infinite and does not satisfy the MP condition,}\\
     & v_k && \text{ if }v = v_0,\cdots, v_k \text{ and }v_k \in \{\const{\eve}, \const{\adam}\},\\
     & v_k       && \text{ if } v = v_0,\cdots, v_k, v_j\in \nset{l} \text{ for each }j\in \nset{k-1}\text{, and } v_k\in \nset{n},\\
     & (\sum^{}_{j\in \nset{k}} \mathrm{wt}(v_j), v_k)&&\text{ if } v = v_0,\cdots, v_k, \text{ there is }v_j\in \Real\times \nset{l}, \text{ and }v_k\in \nset{n},\\
     & (\sum^{}_{j\in \nset{k}} \mathrm{wt}(v_j), \pi_2(v_k))&&\text{ otherwise.} 
    \end{alignat*}
    % If $\psi$ is infinite, then $\denosdigest{\psi} \defeq \const{\eve}$ if $\psi$ satisfies the MP condition, and $\denosdigest{\psi} \defeq \const{\adam}$ otherwise. If $\psi$ is finite and $\psi = \psi_0\cdots \psi_k$, then $\denosdigest{\psi} \defeq f(\pos{\psi_k})$ if $\psi_k\in \nset{l} + \nset{m} + \Real \times \nset{l}$, $\denosdigest{\psi} \defeq \psi_k$ if for each $i\in \nset{k-1}$, $\psi_i\in \nset{l}$, and $\psi_k \in \nset{n}$, and  $\denosdigest{\psi} \defeq (\sum_{i\in \nset{k}} \psi_i, \pos{\psi_k} )$ otherwise, where (i) $\possymb$ is the function from $\nset{l} + \nset{m} + \nset{n}+ \Real\times (\nset{l} + \nset{n})$ to $\nset{l} + \nset{m} + \nset{n}$ such that $\pos{z} = k$ if $z = (r, k)\in \Real \times (\nset{l}+\nset{n})$, and $\pos{z} = z$ otherwise, and (ii) the \emph{sum} $\sum_{i\in \nset{k}} \psi_i$ is given by $\sum_{i\in \nset{k}} \psi_i \defeq \sum_{i\in \nset{k} \text{ and } \psi_i\in \Real \times (\nset{l}+\nset{n})}\pi_1(\psi_i)$.
\end{definition}

Finally, we define the \emph{trace} $\trsyb$ in $\fpsemCat$ by using semantic digests.

\begin{definition}[trace operator $\trsyb$ over $\fpsemCat$]
Let $f:l+m\rightarrow l+n$ in $\fpsemCat$. The \emph{trace} $\trace{l}{m}{n}{f}$ of $f$ is given by $\trace{l}{m}{n}{f}(i) \defeq \denosdigest{v^{i}}$ for each $i\in \nset{m}$, where  $v^{i}$ is the semantic TDP of $f$ from $i$.

\end{definition}

\begin{proof}[proof of~\cref{prop:fplayTSMC}] 
We already know that $\fpsemCat$ is a symmetric monoidal category by~\cref{prop:fpsemcatSMC}. 
We directly prove that the operator $\trsyb$ satisfies axioms of trace operators. We do not show every axioms, since we can prove other axioms in the same manner.

First, we prove the naturality in $m$. Let $f:l+m'\rightarrow l+n$, and $g: m\rightarrow m'$. We prove that $\trace{l}{m}{n}{(id_l\oplus g)\seqcomp f} = g\seqcomp \trace{l}{m'}{n}{f}$. Let $i\in \nset{m}$, and $v^{i} = (v_k)_{k\in I}$ be the semantic TDP of $(id_l\oplus g)\seqcomp f$ from $i$.
\begin{description}
\item[$\lbrack\text{Case }g(i) = \const{\eve} \text{ or }g(i) = \const{\adam}\rbrack $] By the definition of the sequential composition, $\big(g\ \seqcomp\ \trace{l}{m'}{n}{f}\big)(i)\allowbreak = g(i)$. It is also trivial that $\denosdigest{v^{i}} = g(i)$ by definition. Therefore, $\trace{l}{m}{n}{(id_l\oplus g)\seqcomp f}(i) = \big(g\seqcomp \trace{l}{m'}{n}{f}\big)(i)$.
\item[$\lbrack\text{Case }g(i) = j\in \nset{n}\rbrack $] By definition, $\big(g\seqcomp \trace{l}{m'}{n}{f}\big)(i) = \trace{l}{m'}{n}{f}(j)$. Let $u^{j} = (u_k)_{k\in J}$ be the semantic TDP of $f$ from $j$. Then, for each $k\in J$, $u_k = v_{k+1}$. Since $v_0 = i$,  $\denosdigest{v^{i}} = \denosdigest{u^{j}}$ holds, which means that $\trace{l}{m}{n}{(id_l\oplus g)\seqcomp f}(i) =  g\seqcomp \trace{l}{m'}{n}{f}(i) $. 
\item[$\lbrack\text{Case }g(i) = (r, j)\in \Real\times \nset{n}\rbrack $] Same.

Next, we prove the dinaturality in $l$. Let $f: l+m\rightarrow l'+n$, and $g:l'\rightarrow l$. We prove that $\trace{l}{m}{n}{f\seqcomp (g\oplus \id_n)} = \trace{l'}{m}{n}{(g\oplus \id_m )\seqcomp f}$. Let $i\in \nset{m}$, $v^{i} = (v_k)_{k\in I}$ be the semantic TDP of $f\seqcomp (g\oplus \id_n)$, and $u^{i} = (u_k)_{k\in J}$ be the semantic TDP of $(g\oplus \id_m )\seqcomp f$. 
\item[$\lbrack\text{Case } I \text{ is finite}\rbrack $] Straightforward. 
\item[$\lbrack\text{Case } I \text{ is infinite}\rbrack $] It is easy to prove that $J$ is also infinite, and $v^{i}$ and $u^{i}$ are periodic, i.e., there are $k_1, k_2, l_1, l_2\in \Nat$ such that for all $j > l_1$, $v_{j} = v_{\big((j-l_1)\% k_1\big) + l_1  }$, and  for all $j > l_2$, $u_{j} = u_{\big((j-l_2)\% k_2\big) + l_2  }$. Then, it is also easy to prove that $\sum_{l_1 < j \leq l_1 + k_1} \mathrm{wt}(v_{j}) \geq 0 $ iff $\sum_{l_2 < j \leq l_2 + k_2} \mathrm{wt}(u_{j}) \geq 0 $. By the same argument as~\cref{lem:periodicity}, we conclude that $\denosdigest{v^{i}} = \denosdigest{u^{i}}$, which means that $\trace{l}{m}{n}{f\seqcomp (g\oplus \id_n)}(i) = \trace{l'}{m}{n}{(g\oplus \id_m )\seqcomp f}(i)$.
 
% By assumption,  $v_1 \in \nset{l} + \Real\times \nset{l}$. Suppose that $v_1\in \nset{l}$, then
% $v_1 = g(u_1)$. 

% $v_{2} = \big(f\seqcomp (g\oplus \id_n)\big)(v_1)$. Similarly, $u_{2} = \big((g\oplus \id_m)\seqcomp f \big)(u_1)$, since $u_1 = v_1$. 
% For each $k\in I\backslash \{0\}$ such that $k \not = \max I$, $v_k \in \nset{l} + \Real\times \nset{l}$. Suppose that $v_k\in \nset{l}$. Then, $v_{k+1} = \big(f\seqcomp (g\oplus \id_n)\big)(v_k)$. On the other hand, 
\end{description}
\end{proof}

\section{Meager Semantics}
\label{sec:meagerSemantics}
In this section, we define the \emph{meager semantics} used in $\compMPG$.
\begin{definition}[meager semantic category for plays]
The category $\mpsemCat$  is defined as follows. Its object is a natural number. Its arrow from $m$ to $n$ ($m\rightarrow n$) is a function $f$ from $\nset{m}$ to $\fpmonad(\nset{n}$ that satisfies the \emph{realizability condition} defined as follows:
\begin{itemize}
    \item (Realizability) for each $i, j\in \nset{m}$, if $j\not = i$, then $f(i)\not \in \nset{n}$ or $f(j)\not\in \{f(i)\} + \Real \times \{f(i)\}$.
\end{itemize}
In addition, for each $m, n\in \Nat$, the order $\leq_{m, n}$ in $\mpsemCat(m, n)$ is given by the functional order induced by $\leq_{\fpmonad(\nset{n})}$ in $\fpmonad(\nset{n})$, where $\leq_{\fpmonad(\nset{n})}$ is the least order satisfying the following condition: 
\begin{align*}
    (r_1, i) &\leq_{\fpmonad(\nset{n})} (r_2, i) &&\text{ if }r_1 \geq r_2, \text{ and }i\in \nset{n},\\
    \const{\eve} &\leq_{\fpmonad(\nset{n})} z  &&\text{ if }z\in\fpmonad(\nset{n}),\\
    z&\leq_{\fpmonad(\nset{n})} \const{\adam} && \text{ if }z\in\fpmonad(\nset{n}).\\   
    % \leq_{\fpmonad(\nset{n})}\defeq &\big\{ (j, j) \mid j\in \nset{n} \big\} \cup \big\{ (r_1, j), (r_2, j)  \mid r_1\geq r_2\text{ and } j\in \nset{n}\big\}\\
    % &\cup \big\{(\const{\eve}, z) \mid z\in \fpmonad(\nset{n}) \big\} \cup \big\{ (z, \const{\adam}) \mid z\in \fpmonad(\nset{n}) \big\}.
\end{align*}
\end{definition}

The meager semantics of plays can be obtained in the same way. 

\begin{proposition}[a TSMC $\mpsemCat$]
The category $\mpsemCat$ is a TSMC.
% whose operations are $\seqcomp, \oplus$ induced by those of $\kleisli{\Sets}{\fpmonad}$ and $\trsyb$ defined in above.
\qed
\end{proposition}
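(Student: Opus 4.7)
I would build directly on \cref{prop:fplayTSMC}. The crucial observation is that $\mpsemCat$ and $\fpsemCat$ have the same objects (natural numbers) and the same underlying arrows (functions $\nset{m}\to\fpmonad(\nset{n})$ satisfying the realizability condition); the only novel datum in $\mpsemCat$ is the pointwise partial order $\leq_{m,n}$ on hom-sets, which does not affect the underlying category.

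\textbf{Key steps.} I would import the sequential composition $\seqcomp$, monoidal product $\oplus$, symmetry, identity, and trace operator $\trsyb$ from $\fpsemCat$ verbatim. Since the hom-sets coincide, these operations automatically land back in $\mpsemCat$, and all TSMC axioms (associativity, unit laws, bifunctoriality, naturality and dinaturality of the trace, yanking, vanishing, and superposing) follow immediately from \cref{prop:fplayTSMC}. If the intended statement also asks for a $\mathbf{Pos}$-enrichment---i.e.\ that $\seqcomp$, $\oplus$, and $\trsyb$ be monotone in $\leq_{m,n}$---then I would verify monotonicity pointwise. For $\seqcomp$ and $\oplus$, this reduces to a straightforward case analysis on the clauses defining $\leq_{\fpmonad(\nset{n})}$, matched against the branches in the definition of $\seqcomp$ on $\fpsemCat$; the absorption of $\const{\eve}$ and $\const{\adam}$ makes most cases immediate, and additivity of $\Real$ handles the $(r,k)$-case.

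\textbf{Main obstacle.} The potentially delicate step is monotonicity of the trace: given $f\leq f'$, I must show that the denotations $\denosdigest{v}$ and $\denosdigest{v'}$ of the corresponding semantic TDPs from the same entrance $i$ are comparable. When both TDPs are finite, an inductive comparison on their length suffices, using the case analysis of \cref{def:denTracedInducedPlays}. When both are infinite, I would invoke the ultimate periodicity supplied by \cref{lem:periodicity} (adapted to semantic TDPs exactly as in the dinaturality step in the proof of \cref{prop:fplayTSMC}) to reduce the $\liminf$ comparison to a finite cycle weight-sum comparison, and conclude from monotonicity of this finite sum in the pointwise order. The trickiest subcase is when one TDP becomes finite by hitting $\const{\eve}$ or $\const{\adam}$ while the other remains infinite, or becomes finite at a different open end; here I would exploit the absorption property $\const{\eve}\leq_{\fpmonad(\nset{n})} z\leq_{\fpmonad(\nset{n})}\const{\adam}$ of the order to close the argument.
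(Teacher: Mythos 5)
Your proposal is correct and takes essentially the same approach the paper intends (it leaves the proof to a ``same way'' remark): since $\mpsemCat$ has exactly the same objects and arrows as $\fpsemCat$ (the two realizability conditions are equivalent) and only adds an order on hom-sets, the operations $\seqcomp$, $\oplus$, $\trsyb$ and all TSMC axioms carry over verbatim from \cref{prop:fplayTSMC}. The monotonicity of these operations, which you sketch as a contingency, is not part of this statement but is the content of the separate proposition that follows it in the paper.
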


\begin{proposition}[$\seqcomp, \oplus, \trsyb$ of  $\mpsemCat$  are monotone]
The category $\mpsemCat$ is $\Ord$-enriched, i.e., the sequential composition $\seqcomp$ is monotone. Moreover, the sum $\oplus$ and the trace operator $\trsyb$ are monotone.
% whose operations are $\seqcomp, \oplus$ induced by those of $\kleisli{\Sets}{\fpmonad}$ and $\trsyb$ defined in above.
\qed
\end{proposition}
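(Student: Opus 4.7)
The plan is to exploit the pointwise nature of $\leq_{m,n}$, which is induced by $\leq_{\fpmonad(\nset{n})}$. Recall that in $\fpmonad(\nset{n})$, $\const{\eve}$ is bottom, $\const{\adam}$ is top, elements of $\nset{n}$ are only comparable to themselves, and $(r_1, j) \leq (r_2, j)$ iff $r_1 \geq r_2$. Hence monotonicity of each operation reduces to showing that comparable inputs produce comparable outputs on each coordinate.

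For $\seqcomp$, fix $f_1 \leq f_2 \colon m \to l$, $g_1 \leq g_2 \colon l \to n$, and $i \in \nset{m}$. I would case-split on $f_1(i), f_2(i)$: the sub-cases $f_1(i) = \const{\eve}$ and $f_2(i) = \const{\adam}$ are immediate because these tokens are propagated verbatim by the definition of $\seqcomp$, and $\const{\eve}$ is bottom while $\const{\adam}$ is top. In the remaining cases $f_1(i), f_2(i)$ share a common second component $j \in \nset{l}$, possibly with real weights $r_1 \geq r_2$; I would then re-apply the same case analysis to $g_1(j) \leq g_2(j)$, invoking additivity $r_1 + r_1' \geq r_2 + r_2'$ when both branches emit weights. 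Monotonicity of $\oplus$ is immediate: by \cref{prop:fpsemcatSMC}(2) it is the coproduct in the Kleisli category, so $(f_1 \oplus g_1)(i)$ is either $f_1$-coordinate or $g_1$-coordinate injected into $\fpmonad(\nset{n+l})$, and the injection clearly preserves $\leq_{\fpmonad}$.

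The core difficulty is $\trsyb$. Given $f_1 \leq f_2 \colon l+m \to l+n$ and $i \in \nset{m}$, let $v^1, v^2$ be the semantic TDPs of $f_1, f_2$ from $i$. I would prove by induction on $k$ the invariant that exactly one of the following holds: (a) $v^1$ has already terminated with $\const{\eve}$; (b) $v^2$ has already terminated with $\const{\adam}$; or (c) $v^1_k$ and $v^2_k$ are both defined with matching ``position part''---either equal as elements of $\nset{l}+\nset{m}+\nset{n}$, or both of the form $(r_1^k, j), (r_2^k, j)$ with $r_1^k \geq r_2^k$---and the cumulative weight-sums satisfy $\sum_{h \leq k} \mathrm{wt}(v^1_h) \geq \sum_{h \leq k} \mathrm{wt}(v^2_h)$. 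The inductive step is essentially one instance of the $\seqcomp$-style case analysis from the previous paragraph, applied to $f_1(j) \leq f_2(j)$ at the common position $j$.

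To conclude $\denosdigest{v^1} \leq \denosdigest{v^2}$ I would case on the long-run behaviour. If (a) persists then $\denosdigest{v^1} = \const{\eve}$, bottom; if (b) persists then $\denosdigest{v^2} = \const{\adam}$, top. Otherwise (c) holds forever; either both TDPs terminate simultaneously at matching $\nset{n}$-exits, in which case the weight-sum bound gives $\denosdigest{v^1} \leq \denosdigest{v^2}$ by the $(r_1, k) \leq (r_2, k)$ rule, or both are infinite. In the latter case productivity (noted after \cref{def:traced_induced_plays}) yields extracted weight-sequences $w^1, w^2 \in \Real^{\Nat}$ with $w^1_h \geq w^2_h$ for every $h$, so $\liminf \tfrac{1}{n} \sum w^1_h \geq \liminf \tfrac{1}{n}\sum w^2_h$; hence if $v^2$ satisfies the MP condition so does $v^1$, and the three admissible MP-status combinations all respect $\const{\eve} \leq \const{\eve}$, $\const{\eve} \leq \const{\adam}$, and $\const{\adam} \leq \const{\adam}$. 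The main obstacle is pinning down the right invariant in the trace argument, particularly the joint tracking of position-part and accumulated weight; once the invariant is stated correctly, every step reduces to the bookkeeping already done for $\seqcomp$.
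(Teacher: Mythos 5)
Your proof is correct. Note that the paper itself states this proposition with the proof omitted (the \qed follows the statement directly), so there is no official argument to compare against; your write-up fills that gap in the way the paper's other proofs suggest. Your treatment of $\seqcomp$ and $\oplus$ is the expected pointwise case analysis, and the key observation making the trace case work is exactly the one you isolate: since bare exits and weighted exits are incomparable in $\leq_{\fpmonad(\nset{n})}$, the hypothesis $f_1\leq f_2$ forces the two semantic TDPs from a common entrance to have literally matching shapes step by step (same bare/weighted pattern, same position part), with weights dominated as $r^k_1\geq r^k_2$; from this the finite cases and the accumulated-weight comparison are bookkeeping. One small stylistic difference from the paper's toolkit: in the infinite case the paper's proofs (e.g.\ the dinaturality argument for \cref{prop:fplayTSMC}) reduce to ultimate periodicity and compare cycle sums as in \cref{lem:periodicity}, whereas you compare the extracted weight sequences directly and use monotonicity of $\liminf$ under pointwise domination; your route is slightly more general and avoids invoking periodicity at all, and it is valid because the identical skipping pattern aligns the two extracted sequences index by index (productivity, guaranteed by the realizability condition, ensures both are genuinely infinite). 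Two cosmetic remarks: your invariant should say ``at least one'' rather than ``exactly one'' of (a), (b), (c), since $f_1$ may hit $\const{\eve}$ at the same step at which $f_2$ hits $\const{\adam}$; and in the simultaneous-termination case at a bare exit with no weighted entries the two denotations are equal bare exits, which is covered by reflexivity rather than the $(r_1,k)\leq(r_2,k)$ rule. Neither affects correctness.
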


\begin{definition}[$\mpwpFunctor$]
    The \emph{solution functor} $\mpwpFunctor:\roPlay\rightarrow \mpsemCat$ is defined as follows: the mapping of objects is given by $\mpwpFunctor(m) \defeq m$, and for an arrow $\mpg{C}\in \roPlay(m, n)$, we define $\mpwpFunctor(\mpg{C})$ by $\mpwpFunctor(\mpg{C})(i)\defeq \denoplay{\indplaya{\mpg{C}}{i}}{\mpg{C}}$, using the denotation of the (unique) play $\pi^{\mpg{C}}_i$ from entrance $i$.
\end{definition}

\begin{theorem}[$\mpwpFunctor$ is compositional]
\label{thm:mpwpfuncComp}
The data $\mpwpFunctor$ is compositional, that is, $\mpwpFunctor(\mpg{C}\seqcomp \mpg{D}) = \mpwpFunctor(\mpg{C})\seqcomp\mpwpFunctor(\mpg{D}) $, $\mpwpFunctor(\mpg{C}\oplus \mpg{D}) = \mpwpFunctor(\mpg{C})\oplus\mpwpFunctor(\mpg{D}) $, and $\trace{l}{m}{n}{\mpwpFunctor(\mpg{E})} = \mpwpFunctor(\trace{l}{m}{n}{\mpg{E}})$.
\end{theorem}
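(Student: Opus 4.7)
The plan is to reduce this directly to \cref{thm:fpwpfuncComp} (compositionality of $\fpwpFunctor$) rather than redo the combinatorial work on plays. On arrows, $\mpwpFunctor$ and $\fpwpFunctor$ are given by exactly the same formula, namely $\mpg{C}\mapsto \bigl(i\mapsto \denoplay{\indplaya{\mpg{C}}{i}}{}\bigr)$. Moreover, the meager semantic category $\mpsemCat$ differs from $\fpsemCat$ only by the addition of the order $\leq_{m,n}$ on hom-sets: as sets of arrows and as carriers of the algebraic operations $(\seqcomp,\oplus,\trsyb)$ they coincide once we restrict to arrows satisfying the realizability condition. (This is the content of the paper's remark that ``the meager semantics of plays can be obtained in the same way.'')

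First, I would verify that $\mpwpFunctor(\mpg{C})$ is indeed an arrow of $\mpsemCat$, i.e.\ that it satisfies the realizability condition. For roPGs this follows from condition $(\ddagger)$ in \cref{def:oMPG}, which an roPG inherits (each exit has at most one predecessor). Concretely, if two distinct entrances $i\neq j$ of $\mpg{C}$ both had $\mpwpFunctor(\mpg{C})(i) = k\in\nset{n}$ and $\mpwpFunctor(\mpg{C})(j)\in\{k\}+\Real\times\{k\}$, then the play $\indplaya{\mpg{C}}{i}$ would reach $k$ in a single edge from $i$, while $\indplaya{\mpg{C}}{j}$ would also terminate at $k$; unfolding the first case shows two distinct predecessors of the exit $k$, contradicting $(\ddagger)$.

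Second, I would check that the algebraic operations of $\mpsemCat$, when applied to realizable arrows, agree with those of $\fpsemCat$ pointwise. This is an immediate inspection of the definitions, since the order $\leq_{m,n}$ plays no role in computing $\seqcomp$, $\oplus$, or $\trsyb$ as functions $\nset{m}\to \fpmonad(\nset{n})$. Given this, the three equalities $\mpwpFunctor(\mpg{C}\seqcomp\mpg{D}) = \mpwpFunctor(\mpg{C})\seqcomp\mpwpFunctor(\mpg{D})$, $\mpwpFunctor(\mpg{C}\oplus\mpg{D}) = \mpwpFunctor(\mpg{C})\oplus\mpwpFunctor(\mpg{D})$, and $\mpwpFunctor(\trace{l}{m}{n}{\mpg{E}}) = \trace{l}{m}{n}{\mpwpFunctor(\mpg{E})}$ transport verbatim from \cref{thm:fpwpfuncComp} (whose underlying content is \cref{prop:deqSeqc} and \cref{prop:deqtrace} together with the analogous trivial equation for $\oplus$).

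The main obstacle is bookkeeping rather than any genuine mathematical novelty: one must carefully unfold the definition of $\trsyb$ in $\mpsemCat$ (via semantic TDPs) and observe that it is computed by exactly the same recipe as in $\fpsemCat$, so that the equalities of functions established for $\fpwpFunctor$ give the desired equalities in $\mpsemCat$. No new use of \cref{lem:periodicity} is required, since the hard work of turning the possibly infinite TDPs into finite periodic data has already been absorbed into \cref{thm:fpwpfuncComp}.
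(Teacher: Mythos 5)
Your proposal is correct and matches the paper's (implicit) argument: the paper states this theorem without a separate proof precisely because, at the level of plays, the meager category $\mpsemCat$ has the same arrows and the same operations $\seqcomp,\oplus,\trsyb$ as $\fpsemCat$ (the order on hom-sets playing no role in their computation), so the result is \cref{thm:fpwpfuncComp} restated. Your check of the realizability condition via $(\ddagger)$ is also the right justification for well-definedness, the same one implicitly used for $\fpwpFunctor$.
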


Next, we define the meager semantics of roMPGs. 
Instead of the powerset functor, we introduce the \emph{maximal functor} for representing optimal $\adam$-strategies.

\begin{definition}[incomparable sets]
Let $X$ be a ordered set. The ordered set $\big(\uncompfunc(X), \leq_{\uncompfunc(X)}\big)$  of incomparable sets is given by 
\begin{align*}
    \uncompfunc(X) &\defeq \{ S\subseteq X \mid \forall x_1, x_2\in S,\ x_1\leq_X x_2 \Rightarrow x_1 = x_2,\ S\not = \emptyset,\text{ and }S\text{ is finite}\},\\ 
    \leq_{\uncompfunc(X)} &\defeq \{ (S, T) \mid \forall x_1\in S,\ \exists x_2\in T,\ x_1\leq_X x_2\}
\end{align*}
\end{definition}

\begin{proposition}
    Let $X$ be a ordered set. The binary relation $\leq_{\uncompfunc(X)}$ is an order on $\uncompfunc(X) $.
\end{proposition}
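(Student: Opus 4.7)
The plan is to verify the three axioms of a partial order---reflexivity, transitivity, and antisymmetry---of $\leq_{\uncompfunc(X)}$ on $\uncompfunc(X)$. Reflexivity and transitivity will follow routinely from the corresponding properties of $\leq_X$, so the main work lies in antisymmetry, where the incomparability condition built into the definition of $\uncompfunc(X)$ does all the heavy lifting.

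For reflexivity, given $S \in \uncompfunc(X)$, for every $x_1 \in S$ I would simply take $x_2 = x_1 \in S$; since $\leq_X$ is reflexive, this witnesses $S \leq_{\uncompfunc(X)} S$. For transitivity, suppose $S \leq_{\uncompfunc(X)} T$ and $T \leq_{\uncompfunc(X)} U$. Given $x_1 \in S$, pick $x_2 \in T$ with $x_1 \leq_X x_2$, then $x_3 \in U$ with $x_2 \leq_X x_3$; transitivity of $\leq_X$ gives $x_1 \leq_X x_3$, so $S \leq_{\uncompfunc(X)} U$.

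The main obstacle is antisymmetry. Assume $S \leq_{\uncompfunc(X)} T$ and $T \leq_{\uncompfunc(X)} S$; I want to show $S = T$. Pick any $x \in S$. By the first inequality there exists $y \in T$ with $x \leq_X y$, and by the second there exists $x' \in S$ with $y \leq_X x'$. Transitivity of $\leq_X$ yields $x \leq_X x'$, and since both $x$ and $x'$ lie in $S$, the incomparability clause in the definition of $\uncompfunc(X)$ forces $x = x'$. Substituting back, we have $x \leq_X y$ and $y \leq_X x$, so antisymmetry of $\leq_X$ gives $x = y$; in particular $x \in T$. Hence $S \subseteq T$; the symmetric argument starting from an arbitrary $y \in T$ yields $T \subseteq S$, so $S = T$.

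I expect the proof to be brief. The finiteness and non-emptiness requirements in the definition of $\uncompfunc(X)$ are not actually needed for the order axioms---they are presumably imposed for later use (e.g.\ for the functoriality of $\uncompfunc$ or for its interaction with the change-of-base construction on the meager semantic side). The only nontrivial ingredient is that pairwise incomparability of $S$ closes off the ``cycle'' $x \leq_X y \leq_X x'$ into an equality within $S$, which is precisely the mechanism that collapses a chain between mutually dominating sets into set equality.
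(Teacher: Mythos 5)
Your proof is correct and follows essentially the same route as the paper's: reflexivity and transitivity come directly from those of $\leq_X$, and antisymmetry is obtained by chasing $x \leq_X y \leq_X x'$ with $x, x' \in S$, using pairwise incomparability to get $x = x'$ and then antisymmetry of $\leq_X$ to conclude $x = y$ and hence mutual inclusion. The observation that finiteness and non-emptiness are irrelevant here is accurate but does not change the argument.
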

\begin{proof}
We directly prove each axioms:
\begin{description}
    \item[$\lbrack$reflexivity$\rbrack$] Trivial.
    \item[$\lbrack$antisymmetry$\rbrack$] Suppose that $S \leq_{\uncompfunc(X)} T$ and $T \leq_{\uncompfunc(X)} S$. For each $s\in S$, there is $t\in T$ such that $s\leq_X t$. There is also $s'\in S$ such that $t\leq_X s'$. Then, $s\leq_X s'$ holds, which means that $s=s'$, since the elements in $S$ are incomparable. Thus, we conclude that $s = t$, and $S\subseteq T$. By the same argument, we conclude that $S = T$.
    \item[$\lbrack$transitivity$\rbrack$] Let $S \leq_{\uncompfunc(X)} T$ and $T \leq_{\uncompfunc(X)} U$. For each $s\in S$, there is $t\in T$ such that $s\leq_X t$. There is also $u\in U$ such that $t\leq_X u$, therefore we conclude that $s\leq_X u$ and $S\leq_{\uncompfunc(X)} U$.
\end{description}
\end{proof}

\begin{definition}
  Let $X = (|X|, \leq_X)$ be an ordered set, and $S\in \mathcal{P}(|X|)$.
  We define a set $S^{\circ}$ as $\{x \in S \mid x\text{ is maximal in }S\}$.
\end{definition}
% \begin{definition}
%   Let $X = (|X|, \leq_X)$ be an ordered set, and $S\in \mathcal{P}(|X|)$.
%   We define a set $S_{\circ}$ as $\{x \in S \mid x\text{ is minimal in }S\}$
%   and $S^{\circ}$ as $\{x \in S \mid x\text{ is maximal in }S\}$.
% \end{definition}

% \begin{definition}[minimal and maximal functors]
% The \emph{minimal} and \emph{maximal} functors $\minfunc, \maxfunc: \Ord\rightarrow \Sets$ are given by $\minfunc(X) = \maxfunc(X) \defeq (\uncompfunc(X), \leq_{\uncompfunc(X)})$, and 
% \begin{equation*}
%     \minfunc(f)(S) \defeq \big(f(S)\big)_{\circ},\ \ 
%     \maxfunc(f)(S) \defeq \big(f(S)\big)^{\circ}.
% \end{equation*}
% \end{definition}

\begin{definition}[maximal functor]
The \emph{maximal} functors $\maxfunc: \Ord\rightarrow \Ord$ are given by $\maxfunc(X) \defeq (\uncompfunc(X), \leq_{\uncompfunc(X)})$, and $ \maxfunc(f)(S) \defeq \big(f(S)\big)^{\circ}$.
\end{definition}

\begin{proposition}
    The data $\maxfunc$ is a functor from $\Ord$ to $\Ord$.
\end{proposition}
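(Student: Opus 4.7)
The plan is to verify, in turn, (i) that $\maxfunc(f)$ is a well-defined morphism of $\Ord$ for every monotone $f : X \to Y$, (ii) that $\maxfunc$ preserves identities, and (iii) that $\maxfunc$ preserves composition. The ordered-set structure on $\maxfunc(X)$ has already been checked just before the statement, so no work is required there.

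For well-definedness of $\maxfunc(f)(S) = (f(S))^{\circ}$, first note that since $S$ is non-empty and finite, so is $f(S)$; hence $(f(S))^{\circ}$ is non-empty (every non-empty finite ordered set has a maximal element, obtained by iteratively replacing an element by any strict upper bound) and finite. By construction its elements are pairwise incomparable, so $(f(S))^{\circ} \in \uncompfunc(Y)$. For monotonicity, suppose $S \leq_{\uncompfunc(X)} T$ and pick $y = f(s) \in \maxfunc(f)(S)$ with $s$ maximal in $S$. Choose $t \in T$ with $s \leq t$; then $f(s) \leq f(t)$ since $f$ is monotone, and by finiteness of $T$ there is a maximal $t^{*} \in T$ with $t \leq t^{*}$. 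Hence $f(t^{*}) \in \maxfunc(f)(T)$ and $y \leq f(t^{*})$, giving $\maxfunc(f)(S) \leq_{\uncompfunc(Y)} \maxfunc(f)(T)$.

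Preservation of identities is immediate: if $S \in \uncompfunc(X)$, then every element of $S$ is already maximal in $S$ (its elements being pairwise incomparable), so $\maxfunc(\id_{X})(S) = S^{\circ} = S$.

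The main obstacle is preservation of composition, i.e.\ $\maxfunc(g \circ f) = \maxfunc(g) \circ \maxfunc(f)$, which amounts to the set-theoretic identity $(g(f(S)))^{\circ} = (g((f(S))^{\circ}))^{\circ}$ for every $S \in \uncompfunc(X)$. The subtlety is that an element of $f(S)$ which is not maximal in $f(S)$ could nevertheless have an image under $g$ that is maximal in $g(f(S))$, so we cannot simply discard non-maximal elements before applying $g$. I will dispose of this by a two-sided argument using finiteness and monotonicity of $g$. For $\subseteq$, take $y = g(x)$ maximal in $g(f(S))$; by finiteness of $f(S)$ and monotonicity of $g$, iterating $x \mapsto x'$ with $x < x' \in f(S)$ yields, after finitely many steps, a maximal $x^{*} \in (f(S))^{\circ}$ with $x \leq x^{*}$, and $y = g(x) \leq g(x^{*})$ forces $y = g(x^{*})$ by maximality of $y$; hence $y \in g((f(S))^{\circ})$, and a fortiori $y$ is maximal there. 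For $\supseteq$, take $y = g(x)$ with $x \in (f(S))^{\circ}$ and $y$ maximal in $g((f(S))^{\circ})$; if $y \leq y' = g(x')$ for some $x' \in f(S)$, replace $x'$ by a maximal $x'^{*} \geq x'$ in $f(S)$, obtaining $y \leq g(x'^{*}) \in g((f(S))^{\circ})$, so $y = g(x'^{*}) = y'$, showing that $y$ is already maximal in $g(f(S))$. This finishes the functoriality check.
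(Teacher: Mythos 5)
Your identity and composition checks are sound and essentially reproduce the paper's argument: both inclusions of $(g(f(S)))^{\circ} = (g((f(S))^{\circ}))^{\circ}$ are obtained by using finiteness to dominate an element of the image $f(S)$ by a maximal one and then invoking maximality of the given element (the paper argues by contradiction, you argue directly, but the idea is the same). The explicit check that $(f(S))^{\circ}$ is a nonempty finite antichain is a reasonable addition the paper leaves implicit.

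However, your monotonicity step for $\maxfunc(f)$ contains an inference that fails as written. Having chosen $t \in T$ with $s \leq t$, you pass to ``a maximal $t^{*} \in T$ with $t \leq t^{*}$'' and conclude $f(t^{*}) \in \maxfunc(f)(T)$. But $T \in \uncompfunc(X)$ is an antichain, so every element of $T$ is maximal in $T$ and $t^{*}$ is just $t$; more importantly, maximality of $t^{*}$ \emph{in $T$} gives no control over the image: $f(t^{*})$ need not be maximal in $f(T)$, i.e.\ need not lie in $\maxfunc(f)(T) = (f(T))^{\circ}$. For instance, if $T = \{t, t'\}$ with $t, t'$ incomparable and $f(t) < f(t')$, then $f(t) \notin (f(T))^{\circ}$. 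The maximal element must be taken in the image: by finiteness of $f(T)$ there is $y^{*} \in (f(T))^{\circ}$ with $f(t) \leq y^{*}$, whence $y = f(s) \leq f(t) \leq y^{*} \in \maxfunc(f)(T)$, giving $\maxfunc(f)(S) \leq_{\uncompfunc(Y)} \maxfunc(f)(T)$. This is exactly the move the paper makes in its monotonicity proof, and the one you yourself make correctly inside the composition argument, so the repair is immediate---but the step as stated is invalid and should be corrected. (The phrase ``with $s$ maximal in $S$'' is harmless, since $S$ is an antichain, but it is likewise doing no work.)
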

\begin{proof}
First, we prove that $\maxfunc$ is well-defined, i.e., $\maxfunc(f)$ is monotone. Let $f:X\rightarrow Y$, $S_1, S_2\in \uncompfunc(X)$ and $S_1\leq S_2$. For each $y_1\in \maxfunc(f)(S_1)$,  there is $x_1\in S_1$ such that $f(x_1) = y_1$, and there is $x_2\in S_2$ such that $x_1\leq x_2$. If $f(x_2) \not \in \maxfunc(f)(S_2)$, then there is $x'_2\in S_2$ such that $f(x_2) < f(x'_2)$ and $f(x'_2) \in \maxfunc(f)(S_2)$ because $S_2$ is a finite set. Then, $y_1 = f(x_1) < f(x'_2)$ holds. Therefore, we conclude that $\maxfunc(f)(S_1)\leq \maxfunc(f)(S_2)$.

Since $\maxfunc(\id_X) = \id_X$ is trivial by definition, we prove that $\maxfunc$ preserves sequential composition of arrows and finish the proof.

 Let $f: X\rightarrow Y$, $g: Y\rightarrow Z$. We prove that $\maxfunc(g) \circ \maxfunc(f) = \maxfunc(g\circ f)$. First, we prove that $\maxfunc(g) \circ \maxfunc(f) \subseteq \maxfunc(g\circ f)$, and then we prove that $\maxfunc(g\circ f)\subseteq \maxfunc(g) \circ \maxfunc(f)$.

Let $S\in \uncompfunc(X)$, and $z\in \maxfunc(g)\circ \maxfunc(f)(S)$. There is $x\in S$ such that $g\circ f(x) = z$,  $f(x)$ is maximal in $\maxfunc(f)(S)$, and $g(f(x))$ is maximal in $\maxfunc(g)\big(\maxfunc(f)(S)\big)$. Assume that there is $x'\in S$ such that $g\circ f(x') > g\circ f(x) $. We can assume that $f(x')\in \maxfunc(f)(S)$ because if $f(x')\not \in \maxfunc(f)(S)$, there is a $x''\in S$ such that $f(x'')\in  \maxfunc(f)(S)$ and $f(x'') > f(x')$, which means that $g\circ f(x'')\geq g\circ f(x')$. Since $g(f(x')) > g(f(x))$, $g(f(x))$ is not maximal in $\maxfunc(g)\big(\maxfunc(f)(S)\big)$, which leads to the contradiction. Therefore, we can conclude that $g\circ f(x)$ is maximal in $\maxfunc(g\circ f)(S)$, and $\maxfunc(g)\circ \maxfunc(f)(S)\subseteq \maxfunc(g\circ f)(S)$.

    Suppose that $z\in \maxfunc(g\circ f)(S)$. There is $x\in S$ such that $z = g\circ f(x)$ and $z$ is maximal in $ g\circ f(S)$. Suppose that $f(x)\in \maxfunc(f)(S)$ and $g(f(x)) \not \in \maxfunc(g)\big(\maxfunc(f)(S)\big)$. Then, there is $x'\in S$ and   $f(x')\in \maxfunc(f)(S)$ and $g(f(x')) \in \maxfunc(g)\big(\maxfunc(f)(S)\big)$ and $g(f(x')) > g(f(x))$ because $S$ is a finite set. But this contradicts to the fact that $g(f(x))$ is maximal in $\maxfunc(g\circ f)(S)$. Thus, $g(f(x))$ is also maximal in $\maxfunc(g)\big(\maxfunc(f)(S)\big)$. Suppose that $f(x) \not \in  \maxfunc(f)(S)$. Then, there is $x'\in S$ such that $f(x') \in  \maxfunc(f)(S)$ and $f(x') > f(x)$ because $S$ is a finite set. Then, $g\circ f(x') = g\circ f(x)$ because $g$ is monotone and $g\circ f(x)$ is maximal in $ g\circ f(S)$.  By the same argument, we can conclude that $g(f(x')) \in \maxfunc(g)\big(\maxfunc(f)(S)\big)$. Thus, $\maxfunc(g\circ f)(S)\subseteq \maxfunc(g)\circ \maxfunc(f)(S)$.

\end{proof}

\begin{proposition}
    The functor $\maxfunc$ is a lax monoidal functor from $(\Ord, \times , (\mathbf{1}, =))$ to $(\Ord, \times , (\mathbf{1}, =))$.
\end{proposition}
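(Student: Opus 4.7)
I will make $\maxfunc$ into a lax monoidal functor by exhibiting the standard data and verifying the required conditions. Define the unitor $\varepsilon \colon \mathbf{1} \to \maxfunc(\mathbf{1})$ by $\varepsilon(*) \defeq \{*\}$ (this is well-defined since, with $\mathbf{1}$ carrying the equality order, $\{*\}$ is the unique non-empty incomparable subset of $\mathbf{1}$), and define the laxator $\mu_{X,Y} \colon \maxfunc(X) \times \maxfunc(Y) \to \maxfunc(X \times Y)$ by $\mu_{X,Y}(S,T) \defeq S \times T$, using the componentwise product order on $X \times Y$.

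First I would check that $\mu_{X,Y}$ lands in $\maxfunc(X \times Y)$: $S \times T$ is finite and non-empty since $S$ and $T$ are; and if $(x_1,y_1) \leq (x_2,y_2)$ in $S \times T$, then $x_1 \leq_X x_2$ and $y_1 \leq_Y y_2$, whence $x_1 = x_2$ and $y_1 = y_2$ by incomparability of $S$ and $T$. Monotonicity of $\mu_{X,Y}$ in both arguments is direct: given $S_1 \leq_{\uncompfunc(X)} S_2$ and $T_1 \leq_{\uncompfunc(Y)} T_2$, each $(x_1,y_1) \in S_1 \times T_1$ is dominated by some $(x_2,y_2) \in S_2 \times T_2$ componentwise.

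The main technical step is naturality of $\mu$: for $f\colon X \to X'$ and $g\colon Y \to Y'$ in $\Ord$, I need
\myvspmathbef
\begin{myminipage}
\begin{equation*}
    \maxfunc(f \times g) \circ \mu_{X,Y} \;=\; \mu_{X',Y'} \circ (\maxfunc(f) \times \maxfunc(g)).
\end{equation*}
\end{myminipage}
\myvspmathaf
Unfolding both sides on $(S,T)$, this reduces to the set-theoretic identity $(f(S) \times g(T))^{\circ} = f(S)^{\circ} \times g(T)^{\circ}$, which I expect to be the most delicate point. The inclusion $\supseteq$ follows because a pair of maxima is maximal in the product order. For $\subseteq$, if $(x',y')$ is maximal in $f(S) \times g(T)$ but, say, $x' < x''$ for some $x'' \in f(S)$, then $(x'',y') \in f(S) \times g(T)$ strictly dominates $(x',y')$, a contradiction; symmetrically for $y'$. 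Finiteness of $S$ and $T$ (hence of $f(S)$, $g(T)$) ensures that the maxima actually exist.

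Finally, I would verify the lax monoidal coherence axioms. Since $(\Ord, \times, \mathbf{1})$ is strict symmetric monoidal (via the canonical cartesian product), associativity
\myvspmathbef
\begin{myminipage}
\begin{equation*}
    \mu_{X, Y \times Z} \circ (\id_{\maxfunc(X)} \times \mu_{Y,Z}) \;=\; \maxfunc(\alpha_{X,Y,Z}) \circ \mu_{X \times Y, Z} \circ (\mu_{X,Y} \times \id_{\maxfunc(Z)})
\end{equation*}
\end{myminipage}
\myvspmathaf
collapses to the tautology $S \times (T \times U) = (S \times T) \times U$ (identifying along the associator), and both unit coherences reduce to $S \times \{*\} = S = \{*\} \times S$. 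This establishes that $(\maxfunc, \mu, \varepsilon)$ is a lax monoidal endofunctor on $(\Ord, \times, \mathbf{1})$. I do not expect any part of the coherence verification to pose difficulty once the naturality square above has been settled.
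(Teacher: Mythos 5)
Your proof is correct and follows the route the paper itself implicitly relies on: the proposition is stated there without proof, and your laxator $\mu_{X,Y}(S,T)=S\times T$ (with unit $*\mapsto\{*\}$) is exactly the structure underlying the paper's concrete change-of-base formulas such as $\{f\seqcomp g\mid f\in F,\ g\in G\}^{\circ}$ for composition in $\maxfunc_{\star}(\mpsemCat)$. The delicate point, the naturality identity $(f(S)\times g(T))^{\circ}=f(S)^{\circ}\times g(T)^{\circ}$, is argued correctly (and in fact needs no finiteness), so nothing further is required.
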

% \begin{proof}
% The the natural transformation $\alpha_{X, Y}: \maxfunc(X)\times \maxfunc(Y) \rightarrow \maxfunc(X\times Y) $ is given by $\alpha_{X, Y}(S, T) \defeq \{(s, t) \mid s\in S,\text{ and }t\in T\}$.
% We prove that the natural transformation $\alpha_{X, Y}: \maxfunc(X)\times \maxfunc(Y) \rightarrow \maxfunc(X\times Y) $ is well-defined, i.e., $\alpha_{X, Y}$ is monotone. Suppose that $S_1, T_1\in \maxfunc(X)$ and $S_2, T_2\in \maxfunc(Y)$ such that $S_1\leq T_1$ and  $S_2\leq T_2$. For each $s_1\in S_1$ and $s_2\in S_2$, there are $t_1\in T_1$ and $t_2\in T_2$ such that $s_1\leq t_1$ and $s_2\leq t_2$. Therefore, we conclude that $S_1\times S_2\leq T_1\times T_2 $.
% \end{proof}

We use the following lemmas for proving the axioms of the trace operator.
\begin{lemma}
\label{lem:seqcomp_monotone_and_maximal}
    Let $S\subseteq \mpsemCat(m, l)$ and $T\subseteq \mpsemCat(l, n)$, and $S, T$ be finite sets. Then, $\{f\seqcomp g\mid f\in S, \text{ and }g\in T \}^{\circ} = \{f\seqcomp g \mid f\in S^{\circ}\text{ and }g\in T^{\circ} \}^{\circ}$.
\end{lemma}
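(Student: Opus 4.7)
The plan is to prove the two inclusions separately, writing $A \defeq \{f\seqcomp g\mid f\in S,\ g\in T \}$ and $B \defeq \{f\seqcomp g \mid f\in S^{\circ},\ g\in T^{\circ} \}$, so the target equality reads $A^{\circ}=B^{\circ}$. The only tools needed are (i) finiteness of $S$ and $T$ (so every element has a maximal element above it in the respective set), and (ii) monotonicity of $\seqcomp$ in $\mpsemCat$ (which is an earlier proposition). Since $B\subseteq A$ trivially, both inclusions will reduce to short sandwich arguments.

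For the inclusion $A^{\circ}\subseteq B^{\circ}$, I would take $h\in A^{\circ}$, so $h=f\seqcomp g$ with $f\in S$, $g\in T$. Using finiteness of $S$, I pick $f'\in S^{\circ}$ with $f'\geq f$, and likewise $g'\in T^{\circ}$ with $g'\geq g$. Monotonicity gives $f'\seqcomp g'\geq f\seqcomp g=h$, and since $h$ is maximal in $A$ while $f'\seqcomp g'\in A$, we conclude $f'\seqcomp g'=h$. This exhibits $h\in B$. Maximality of $h$ in $B$ is then immediate from $B\subseteq A$, because any $h''\in B$ with $h''\geq h$ would also lie in $A$ and force $h''=h$ by maximality of $h$ in $A$.

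For the inclusion $B^{\circ}\subseteq A^{\circ}$, I would take $h\in B^{\circ}$ with $h=f\seqcomp g$, $f\in S^{\circ}$, $g\in T^{\circ}$. Given any $h'\in A$ with $h'\geq h$, write $h'=f'\seqcomp g'$ for some $f'\in S$, $g'\in T$. Using finiteness again, pick $f''\in S^{\circ}$ with $f''\geq f'$ and $g''\in T^{\circ}$ with $g''\geq g'$. Monotonicity yields $f''\seqcomp g''\geq h'\geq h$, and since $f''\seqcomp g''\in B$, the maximality of $h$ in $B$ forces $f''\seqcomp g''=h$. Sandwiching, $h'=h$, proving $h\in A^{\circ}$.

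There is no real obstacle here, only bookkeeping: the key point is simply that for finite $S$, every $f\in S$ has some maximal $f'\in S^{\circ}$ above it, and then monotonicity of $\seqcomp$ propagates this dominance through composition. I would state once at the beginning a short ``upgrade lemma'': for any finite $P\subseteq\mpsemCat(m,l)$ and $p\in P$, there is $p^{*}\in P^{\circ}$ with $p\leq p^{*}$. Applying this to both $S$ and $T$ in each direction of the sandwich argument makes both inclusions essentially one line each after setup.
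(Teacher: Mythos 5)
Your proof is correct and follows essentially the same route as the paper's: both directions are handled by upgrading $f\in S$, $g\in T$ to dominating maximal elements (using finiteness) and pushing the dominance through $\seqcomp$ via its monotonicity, then invoking maximality in $A$ resp.\ $B$ to force equality. The only difference is cosmetic bookkeeping (your explicit ``upgrade lemma'' and the sandwich phrasing of maximality), so nothing further is needed.
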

\begin{proof}
    Let $f_1\seqcomp g_1\in \{f\seqcomp g\mid f\in S, \text{ and }g\in T \}^{\circ}$. If $f_1\not \in S^{\circ}$, then $f_2 \in S^{\circ}$ such that $f_1 < f_2$. Similarly, if $g_1\not \in T^{\circ}$, then $g_2 \in T^{\circ}$ such that $g_1 < g_2$. Then, $f_1\seqcomp g_1 \leq f_2\seqcomp g_2$. Since  $f_1\seqcomp g_1$ is maximal, $f_1\seqcomp g_1 = f_2\seqcomp g_2$, thus $\{f\seqcomp g\mid f\in S, \text{ and }g\in T \}^{\circ} \subseteq \{f\seqcomp g \mid f\in S^{\circ}\text{ and }g\in T^{\circ} \}^{\circ}$. 

     Let $f_1\seqcomp g_1\in \{f\seqcomp g \mid f\in S^{\circ}\text{ and }g\in T^{\circ} \}^{\circ}$. For each $f_2\in S$ and $g_2\in T$, there are $f_3\in S^{\circ}$ and $g_3\in T^{\circ}$ such that $f_2\leq f_3$ and $g_2\leq g_3$. Since the sequential composition $\seqcomp$ is monotone, $f_2\seqcomp g_2\leq f_3\seqcomp g_3$. By assumption, $f_3\seqcomp g_3 \not > f_1\seqcomp g_1$, which means that $f_2\seqcomp g_2 \not > f_1\seqcomp g_1$ and $f_1\seqcomp g_1 \in \{f\seqcomp g\mid f\in S, \text{ and }g\in T \}^{\circ}$. Therefore, $\{f\seqcomp g \mid f\in S^{\circ}\text{ and }g\in T^{\circ} \}^{\circ}\subseteq \{f\seqcomp g\mid f\in S, \text{ and }g\in T \}^{\circ}$.
\end{proof}

\begin{lemma}
\label{lem:sum_monotone_and_maximal}
    Let $S\subseteq \mpsemCat(m, n)$ and $T\subseteq \mpsemCat(k, l)$, and $S, T$ be finite sets. Then, $\{f\oplus g\mid f\in S, \text{ and }g\in T \}^{\circ} = \{f\oplus g \mid f\in S^{\circ}\text{ and }g\in T^{\circ} \}^{\circ}$.
\end{lemma}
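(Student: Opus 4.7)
The plan is to mimic, essentially verbatim, the proof of the analogous statement \cref{lem:seqcomp_monotone_and_maximal} for sequential composition. The two ingredients that made that argument work are (i) the monotonicity of the binary operation and (ii) the fact that $S$ and $T$ are finite, so every non-maximal element is strictly dominated by some maximal one. Both ingredients are available here: monotonicity of $\oplus$ on $\mpsemCat$ is recorded in the proposition stating that $\seqcomp, \oplus, \trsyb$ are monotone, and finiteness of $S, T$ is part of the hypothesis.

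I will prove the two inclusions separately. For the forward inclusion, take $f_1\oplus g_1$ in $\{f\oplus g\mid f\in S,\ g\in T\}^{\circ}$. If $f_1\notin S^{\circ}$, finiteness of $S$ gives some $f_2\in S^{\circ}$ with $f_1 < f_2$; similarly for $g_1$, producing $g_2\in T^{\circ}$ with $g_1\le g_2$ (where the second inequality may be an equality if $g_1$ was already maximal). By monotonicity of $\oplus$, $f_1\oplus g_1 \le f_2\oplus g_2$, and the maximality of $f_1\oplus g_1$ in the larger set forces equality $f_1\oplus g_1 = f_2\oplus g_2$. Hence $f_1\oplus g_1$ lies in $\{f\oplus g \mid f\in S^{\circ},\ g\in T^{\circ}\}$; and since this latter set is a subset of $\{f\oplus g\mid f\in S,\ g\in T\}$, maximality in the larger set descends to maximality in the smaller one.

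For the reverse inclusion, take $f_1\oplus g_1\in\{f\oplus g\mid f\in S^{\circ},\ g\in T^{\circ}\}^{\circ}$. Given any $f_2\in S$ and $g_2\in T$, use finiteness of $S$ and $T$ to pick $f_3\in S^{\circ}$, $g_3\in T^{\circ}$ with $f_2\le f_3$ and $g_2\le g_3$. Monotonicity of $\oplus$ gives $f_2\oplus g_2 \le f_3\oplus g_3$. By the assumed maximality of $f_1\oplus g_1$ in the subset, $f_3\oplus g_3\not> f_1\oplus g_1$, hence $f_2\oplus g_2\not> f_1\oplus g_1$. Since $f_2, g_2$ were arbitrary and $f_1\oplus g_1$ itself obviously belongs to $\{f\oplus g\mid f\in S,\ g\in T\}$, it is maximal there.

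There is no real obstacle here; the whole lemma is a direct adaptation of the $\seqcomp$ version, with $\oplus$ swapped in throughout and the same two-step argument (strict-dominator witness via finiteness $+$ monotonicity) reused. The only subtlety worth checking is that the strict inequality $f_1 < f_2$, when combined via monotonicity of $\oplus$ with $g_1 \le g_2$, still yields at least $f_1\oplus g_1 \le f_2\oplus g_2$; this is all that the argument actually needs, and it follows from monotonicity alone without having to separately establish strict monotonicity.
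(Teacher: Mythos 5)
Your proof is correct and matches the paper's intent exactly: the paper leaves this lemma without an explicit proof precisely because it is the same two-inclusion argument as \cref{lem:seqcomp_monotone_and_maximal}, using finiteness to pick maximal dominators and monotonicity of the operation, which is what you reproduce with $\oplus$ in place of $\seqcomp$. Your added remarks (that maximality descends to the subset in the forward inclusion, and that only $\leq$, not strict monotonicity, is needed) are sound and slightly more careful than the paper's template.
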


\begin{lemma}
\label{lem:trace_monotone_and_maximal}
    Let $S\subseteq \mpsemCat(l+m, l+n)$, and $S$ be finite sets. Then, $\{\trace{l}{m}{n}{f}\mid f\in S\}^{\circ} = \{\trace{l}{m}{n}{f} \mid f\in S^{\circ} \}^{\circ}$.
\end{lemma}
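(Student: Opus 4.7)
The plan is to mirror the proofs of \cref{lem:seqcomp_monotone_and_maximal,lem:sum_monotone_and_maximal} almost verbatim, since the only properties of the operation that were used there were monotonicity and the fact that we are working in a poset. Monotonicity of $\trsyb$ on $\mpsemCat$ was asserted in the proposition stating that $\seqcomp$, $\oplus$, and $\trsyb$ on $\mpsemCat$ are monotone, and finiteness of $S$ guarantees that every element of $S$ is dominated by some element of $S^\circ$. I would split the proof into the two set-inclusions.

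For the direction $\{\trace{l}{m}{n}{f}\mid f\in S\}^{\circ} \subseteq \{\trace{l}{m}{n}{f}\mid f\in S^{\circ}\}^{\circ}$, I would take $\trace{l}{m}{n}{f_1}$ maximal in the left-hand collection. If $f_1\in S^{\circ}$ there is nothing to do (beyond checking maximality in the smaller collection, which is immediate from $S^{\circ}\subseteq S$). Otherwise pick $f_2\in S^{\circ}$ with $f_1<f_2$; monotonicity of $\trsyb$ gives $\trace{l}{m}{n}{f_1}\leq\trace{l}{m}{n}{f_2}$, and since $\trace{l}{m}{n}{f_2}$ lies in the left-hand collection, maximality of $\trace{l}{m}{n}{f_1}$ forces equality. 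Maximality of this common value inside $\{\trace{l}{m}{n}{f}\mid f\in S^{\circ}\}$ again follows from $S^{\circ}\subseteq S$.

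For the opposite inclusion, take $\trace{l}{m}{n}{f_1}$ maximal in $\{\trace{l}{m}{n}{f}\mid f\in S^{\circ}\}^{\circ}$ with $f_1\in S^{\circ}\subseteq S$. To verify maximality in the larger set, pick any $f_2\in S$; by finiteness of $S$ choose $f_3\in S^{\circ}$ with $f_2\leq f_3$. Monotonicity of $\trsyb$ yields $\trace{l}{m}{n}{f_2}\leq\trace{l}{m}{n}{f_3}$, and maximality of $\trace{l}{m}{n}{f_1}$ in the right-hand collection gives $\trace{l}{m}{n}{f_3}\not>\trace{l}{m}{n}{f_1}$, hence $\trace{l}{m}{n}{f_2}\not>\trace{l}{m}{n}{f_1}$ as well.

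Since both inclusions go through with a single appeal to monotonicity and one use of finiteness of $S$, there is no real obstacle: the argument is a direct transcription of the $\seqcomp$ proof with $f\seqcomp g$ replaced by $\trace{l}{m}{n}{f}$ (a unary, rather than binary, operation, which actually slightly simplifies the bookkeeping). The only place one might slip is forgetting to separately argue the maximality claim on the smaller collection; I would state it explicitly in both directions to keep the proof complete.
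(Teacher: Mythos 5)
Your proof is correct and follows exactly the route the paper intends: the paper leaves \cref{lem:trace_monotone_and_maximal} without an explicit proof, expecting it to be proved ``in the same manner'' as \cref{lem:seqcomp_monotone_and_maximal}, which is precisely the transcription you give, using monotonicity of $\trsyb$ on $\mpsemCat$ and finiteness of $S$. Your explicit handling of the maximality check in each inclusion matches the paper's argument for $\seqcomp$, so there is nothing missing.
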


By~\cref{lem:seqcomp_monotone_and_maximal,lem:sum_monotone_and_maximal,lem:trace_monotone_and_maximal}, we can easily prove the following proposition. Here, $F_{\star}$ denote change of base with lax monoidal functor $F$.

\begin{proposition}
    The category $\maxfunc_{\star}(\mpsemCat)$ is TSMC.
\end{proposition}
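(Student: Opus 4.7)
\medskip

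\noindent\textbf{Proof proposal.} The plan is to lift every piece of traced symmetric monoidal structure from $\mpsemCat$ to $\maxfunc_{\star}(\mpsemCat)$ by applying it representative-wise and then closing up by $(-)^{\circ}$, and to reduce each axiom to the corresponding axiom in $\mpsemCat$ via \cref{lem:seqcomp_monotone_and_maximal}, \cref{lem:sum_monotone_and_maximal}, and \cref{lem:trace_monotone_and_maximal}. Concretely, for $F \in \maxfunc_{\star}(\mpsemCat)(m,l)$ and $G \in \maxfunc_{\star}(\mpsemCat)(l,n)$ I would set
\[
  F \seqcomp G \;\defeq\; \{f \seqcomp g \mid f \in F,\, g \in G\}^{\circ},
\]
and define $F \oplus G$ and $\trace{l}{m}{n}{F}$ analogously, using the operations of $\mpsemCat$ under $(-)^{\circ}$. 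Well-definedness (nonempty, finite, incomparable) is immediate from finiteness of $F,G$ and the definition of $(-)^{\circ}$; the identities are $\{\id_m\}$, and the symmetry constants are $\{\swapsyb_{m,n}\}$ (singletons are trivially maximal).

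Next I would verify that each of the monoidal and trace axioms lifts from $\mpsemCat$ to $\maxfunc_{\star}(\mpsemCat)$. The general pattern is the following. Each axiom in $\mpsemCat$ is an equality of derived operations built from $\seqcomp$, $\oplus$, $\trsyb$, $\id$, and $\swapsyb$. Both sides of such an axiom lift in $\maxfunc_{\star}(\mpsemCat)$ to expressions of the form $\{\Phi(f_1,\dots,f_k) \mid f_i \in F_i\}^{\circ}$ for a common multi-argument operation $\Phi$, once we apply \cref{lem:seqcomp_monotone_and_maximal}, \cref{lem:sum_monotone_and_maximal}, and \cref{lem:trace_monotone_and_maximal} repeatedly to push $(-)^{\circ}$ to the outside of nested compositions, sums, and traces. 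The equality of the two outer maximal closures then follows because the two inner sets are equal term-by-term, thanks to the corresponding axiom of $\mpsemCat$ (which holds by \cref{prop:fplayTSMC} together with the monotonicity statements preceding the solution functor $\mpwpFunctor$). In particular the associativity and unit laws for $\seqcomp$, functoriality of $\oplus$, naturality of the symmetry $\swapsyb$, the coherence pentagon/hexagon, and the vanishing, superposing, and yanking axioms of the trace~\cite{joyal1996} are all obtained this way.

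The main obstacle will be the \emph{dinaturality} axiom of the trace, since it relates traces over two different interface sizes $l$ and $l'$ via an intermediate arrow $g \colon l' \to l$, so the term on either side is not merely a pointwise composition of traces of the same type. To handle it I would first apply \cref{lem:seqcomp_monotone_and_maximal} to the two sequentially composed factors $F \seqcomp (\{g\} \oplus \id)$ and $(\{g\} \oplus \id) \seqcomp F$, then apply \cref{lem:trace_monotone_and_maximal} to push $(-)^{\circ}$ outside the trace, reducing both sides to $\{\trace{l}{m}{n}{f \seqcomp (g \oplus \id_n)} \mid f \in F\}^{\circ}$ and $\{\trace{l'}{m}{n}{(g \oplus \id_m) \seqcomp f} \mid f \in F\}^{\circ}$; dinaturality in $\mpsemCat$ (proved in the dinaturality case of \cref{prop:fplayTSMC}) then equates the two inner sets element-by-element. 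The sliding/yanking and superposition axioms follow by the same reduction strategy, and together this establishes that $\maxfunc_{\star}(\mpsemCat)$ satisfies all the TSMC axioms, completing the proof. \qed
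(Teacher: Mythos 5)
Your proposal is correct and follows essentially the same route as the paper, which likewise defines $\seqcomp$, $\oplus$, and $\trsyb$ on $\maxfunc_{\star}(\mpsemCat)$ representative-wise with maximal closure and reduces every TSMC axiom to the corresponding axiom of the (monotone) TSMC $\mpsemCat$ via \cref{lem:seqcomp_monotone_and_maximal}, \cref{lem:sum_monotone_and_maximal}, and \cref{lem:trace_monotone_and_maximal}, leaving the routine verification implicit. One minor point: dinaturality in $\maxfunc_{\star}(\mpsemCat)$ must be checked for an arbitrary arrow $G$ (a finite antichain), not just a singleton $\{g\}$, but your reduction---push $(-)^{\circ}$ outside with the lemmas and then apply dinaturality of $\mpsemCat$ element-by-element---applies verbatim in that generality.
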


\begin{proposition}[$\oplus, \trsyb$ of  $\maxfunc_{\star}(\mpsemCat)$  are monotone]
The sum $\oplus$ and the trace operator $\trsyb$ of $\maxfunc_{\star}(\mpsemCat)$ are monotone.
% whose operations are $\seqcomp, \oplus$ induced by those of $\kleisli{\Sets}{\fpmonad}$ and $\trsyb$ defined in above.
\end{proposition}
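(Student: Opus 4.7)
The plan is to unfold the definitions on both sides and reduce monotonicity in the change-of-base category $\maxfunc_{\star}(\mpsemCat)$ to the monotonicity of $\oplus$ and $\trsyb$ already established on $\mpsemCat$ in the previous proposition. Recall that an arrow $S : m \to n$ in $\maxfunc_{\star}(\mpsemCat)$ is a finite nonempty set of pairwise incomparable arrows of $\mpsemCat(m,n)$, and $S_1 \leq S_2$ means that every $f \in S_1$ admits some $f' \in S_2$ with $f \leq f'$ in $\mpsemCat$. The compound operations are defined by $S \oplus T = \{f \oplus g \mid f \in S,\, g \in T\}^{\circ}$ and $\trsyb(S) = \{\trsyb(f) \mid f \in S\}^{\circ}$, following \cref{lem:sum_monotone_and_maximal,lem:trace_monotone_and_maximal}.

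For monotonicity of $\oplus$, I would take $S_1 \leq S_2$ and $T_1 \leq T_2$ and pick any $h \in S_1 \oplus T_1$. By definition $h = f_1 \oplus g_1$ for some $f_1 \in S_1$, $g_1 \in T_1$. The assumptions give $f_2 \in S_2$ and $g_2 \in T_2$ with $f_1 \leq f_2$ and $g_1 \leq g_2$ in $\mpsemCat$. Since $\oplus$ is monotone on $\mpsemCat$ by the previous proposition, $h = f_1 \oplus g_1 \leq f_2 \oplus g_2$. Although $f_2 \oplus g_2$ need not itself be maximal in $\{f \oplus g \mid f \in S_2, g \in T_2\}$, finiteness guarantees a maximal element $h'$ above it, and $h' \in S_2 \oplus T_2$ with $h \leq h'$. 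This exhibits the witness required for $S_1 \oplus T_1 \leq S_2 \oplus T_2$.

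The argument for $\trsyb$ is strictly analogous and in fact simpler, since only one set relation is involved: given $S_1 \leq S_2$ and $h = \trsyb(f_1) \in \trsyb(S_1)$, pick $f_2 \in S_2$ with $f_1 \leq f_2$; monotonicity of $\trsyb$ in $\mpsemCat$ yields $\trsyb(f_1) \leq \trsyb(f_2)$, and a maximal element above $\trsyb(f_2)$ inside $\{\trsyb(f) \mid f \in S_2\}^{\circ}$ supplies the required witness in $\trsyb(S_2)$.

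The only subtlety is the step where we pass from $f_2 \oplus g_2$ (respectively $\trsyb(f_2)$) to a maximal element above it; this is the main thing to get right. It uses finiteness of $S_2 \oplus T_2$ (respectively $\{\trsyb(f) \mid f \in S_2\}$), which follows from finiteness of $S_2$ and $T_2$ as elements of $\uncompfunc(\,\cdot\,)$ by definition. No new calculation is needed; the heavy lifting has already been done by the monotonicity of $\seqcomp, \oplus, \trsyb$ on $\mpsemCat$ and by \cref{lem:sum_monotone_and_maximal,lem:trace_monotone_and_maximal} which ensure the maximization is compatible with the elementwise definitions.
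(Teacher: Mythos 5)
Your proposal is correct and follows essentially the same route as the paper: for a maximal element of the image set, pick a witness above it via the hypothesis, apply the already-established monotonicity of $\oplus$/$\trsyb$ on $\mpsemCat$, and then pass (using finiteness) to a maximal element dominating the result. The paper dismisses $\oplus$ as obvious and writes out only the $\trsyb$ case, which is exactly your argument.
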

\begin{proof}
The sum $\oplus$ is obviously monotone. We prove that $\trsyb$ is monotone. Let $S, T\in \maxfunc_{\star}(\mpsemCat)(l+m, l+n)$, and $S\leq T$. For $f\in S$ such that $\trace{l}{m}{n}{f}\in \trace{l}{m}{n}{S}$, there is $g\in T$ such that $f\leq g$. If $\trace{l}{m}{n}{g}\not\in \trace{l}{m}{n}{T}$, there is $g'\in T$ such that $\trace{l}{m}{n}{g'}\in \trace{l}{m}{n}{T}$ and $\trace{l}{m}{n}{g} < \trace{l}{m}{n}{g'}$. Then, $\trace{l}{m}{n}{f} \leq \trace{l}{m}{n}{g} < \trace{l}{m}{n}{g'}$ holds. Therefore, $ \trace{l}{m}{n}{S}\leq  \trace{l}{m}{n}{T}$ holds.
\end{proof}

% \begin{proposition}
%     The category $\maxfunc_{\star}(\mpsemCat)$ is $\Ord$-enriched.
% \end{proposition}
% \begin{proof}
% Let $S_1, S_2\in \maxfunc_{\star}(\mpsemCat)(X, Y)$, $T_1, T_2\in \maxfunc_{\star}(\mpsemCat)(Y, Z)$, $S_1\leq T_1$, and $S_2\leq T_2$. We prove that $T_1\circ S_1 
% \leq T_2\circ S_2$.  Assume that $f_1\in S_1$ and $g_1\in T_1$ satisfies $g_1\circ f_1\in T_1\circ S_1$. There are $f_2\in S_2$ and $g_2\in T_2$ such that $f_1\leq f_2$ and $g_1\leq g_2$. If $g_2\circ f_2 \not\in T_2\circ S_2$, there are $f'_2\in S_2$ and $g'_2\in T_2$ such that $g_2\circ f_2 < g'_2\circ f'_2 $ and $g'_2\circ f'_2 \in T_2\circ S_2$. This means that $g_1\circ f_1 <  g'_2\circ f'_2$. Therefore, for each $f_1\in S_1$ and $g_1\in T_1$ satisfies $g_1\circ f_1\in T_1\circ S_1$, there are $f_2\in S_2$ and $g_2\in T_2$ such that $g_2\circ f_2\in T_2\circ S_2$ and $g_1\circ f_1 \leq g_2\circ f_2$, which means that $T_1\circ S_1 \leq T_2\circ S_2$ holds.
% \end{proof}

Next, we introduce the \emph{minimal functor} for representing optimal $\eve$-strategies.

\begin{definition}
  Let $X = (|X|, \leq_X)$ be an ordered set, and $S\in \mathcal{P}(|X|)$.
  We define a set $S_{\circ}$ as $\{x \in S \mid x\text{ is minimal in }S\}$.
\end{definition}

\begin{definition}[minimal functor]
The \emph{minimal} functors $\minfunc: \Ord\rightarrow \Sets$ are given by $\minfunc(X) \defeq \uncompfunc(X)$, and $ \minfunc(f)(S) \defeq \big(f(S)\big)_{\circ}$.
\end{definition}

\begin{remark}
     The codomain of $\minfunc$ cannot be $\Ord$ unlike $\maxfunc$, because there is a monotone function $f$ such that $\minfunc(f)$ is not monotone. Let $X \defeq (\{x_1, x_2\}, = )$ and $Y\defeq (\{y_1, y_2\}, \leq_Y)$, where $y_1< y_2$, and $f:X\rightarrow Y$ be $f(x_1) \defeq y_1$ and $f(x_2) \defeq y_2$. Suppose that $S \defeq  \{x_2\} $ and $T\defeq \{x_1, x_2\}$. Obviously $S\leq T$, but $\minfunc(f)(S)\not \leq \minfunc(f)(T)$ since $\minfunc(f)(S) = \{y_2\}$ and $\minfunc(f)(T) = \{y_1\}$.
     Note that  $\maxfunc(f)(S)\leq \maxfunc(f)(T)$ holds since $\maxfunc(f)(S) = \{y_2\}$ and $\maxfunc(f)(T) = \{y_2\}$.
\end{remark}

\begin{proposition}
    The functor $\minfunc$ is a lax monoidal functor from $(\Ord, \times , (\mathbf{1}, =))$ to $(\Sets, \times , \mathbf{1})$.
\end{proposition}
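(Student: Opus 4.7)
The plan is to exhibit the three pieces of data making up a lax monoidal functor, namely (i) the underlying functor $\minfunc:\Ord\to\Sets$, (ii) a unit $\epsilon:\mathbf{1}\to\minfunc((\mathbf{1},=))$, and (iii) a natural transformation $\mu_{X,Y}:\minfunc(X)\times\minfunc(Y)\to\minfunc(X\times Y)$, and then to verify the standard coherence diagrams. I will take $\epsilon(\star)\defeq\{\star\}$ (the unique finite non-empty antichain in the singleton) and $\mu_{X,Y}(S,T)\defeq S\times T$, relying throughout on the observation that the product of two antichains, in the product order, is again an antichain: if $(x,y)\leq(x',y')$ with $x,x'\in S$ and $y,y'\in T$ antichains, then $x=x'$ and $y=y'$.

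For functoriality, identity preservation is immediate, because every $S\in\uncompfunc(X)$ is already an antichain, hence $S_{\circ}=S$. For the composition law, given monotone $f:X\to Y$, $g:Y\to Z$ and $S\in\uncompfunc(X)$, I need $(g(f(S)))_{\circ}=(g((f(S))_{\circ}))_{\circ}$. Writing $A\defeq f(S)$, this reduces to a key lemma: for any finite $A$ in an ordered set and any monotone $g$, $(g(A))_{\circ}=(g(A_{\circ}))_{\circ}$. The inclusion $\supseteq$ uses that any $x\in A\setminus A_{\circ}$ dominates some $x'\in A_{\circ}$ by finiteness; monotonicity of $g$ gives $g(x')\leq g(x)$, and if $g(x)$ is minimal in $g(A)$ then $g(x')=g(x)\in g(A_{\circ})$. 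The inclusion $\subseteq$ is dual. This is the order-theoretic dual of the composition argument for $\maxfunc$ already in the paper.

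Naturality of $\mu$ requires, for monotone $f:X\to X'$ and $g:Y\to Y'$, the identity $(f(S))_{\circ}\times(g(T))_{\circ}=(f(S)\times g(T))_{\circ}$, which is a special case of the general fact $A_{\circ}\times B_{\circ}=(A\times B)_{\circ}$ for finite $A$, $B$: $(a,b)$ is minimal in $A\times B$ exactly when $a$ is minimal in $A$ and $b$ is minimal in $B$, directly from the definition of the product order. The coherence axioms then become essentially trivial, because $\mu$ is literally cartesian product of subsets. Associativity reduces to the observation that $(S\times T)\times U$ and $S\times(T\times U)$ agree under the canonical associator in $\Sets$, and both are already antichains so any subsequent application of $(\cdot)_{\circ}$ is a no-op; the left and right unit laws similarly reduce to $\{\star\}\times S\cong S\cong S\times\{\star\}$ under the canonical projections.

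The main obstacle is the composition law of functoriality, $\minfunc(g)\circ\minfunc(f)=\minfunc(g\circ f)$; this is the one place where monotonicity of the input maps is used essentially, and is precisely the reason the codomain must be $\Sets$ rather than $\Ord$ (as the remark preceding the statement emphasizes, $\minfunc(f)$ itself need not be monotone). Everything else---the antichain property of products, naturality, and the coherence diagrams---is a matter of pushing cartesian products through and holds strictly rather than up to coherent isomorphism.
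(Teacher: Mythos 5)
Your proof is correct, and it is essentially the argument the paper intends: the paper states this proposition without giving a proof, and your composition law is precisely the order-dual of the paper's proof that $\maxfunc$ preserves composition (for finite $A$ and monotone $g$, $(g(A))_{\circ}=(g(A_{\circ}))_{\circ}$, using that every element of a finite set dominates a minimal one), while the structure maps $\epsilon(\star)\defeq\{\star\}$ and $\mu_{X,Y}(S,T)\defeq S\times T$, together with the facts that a product of antichains is an antichain and $(A\times B)_{\circ}=A_{\circ}\times B_{\circ}$, give the lax monoidal data with strictly commuting coherence diagrams. One cosmetic slip: the argument you present under the label $\supseteq$ actually establishes $(g(A))_{\circ}\subseteq(g(A_{\circ}))_{\circ}$ and the ``dual'' half is the reverse inclusion, but both directions do go through by the short finiteness-plus-monotonicity arguments you indicate, so nothing substantive is missing.
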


% \begin{definition}[minimal functor]
% The \emph{minimal} functors $\minfunc: \Ord\rightarrow \Sets$ are given by $\minfunc(X) \defeq (\uncompfunc(X), \leq_{\uncompfunc(X)})$, and $ \minfunc(f)(S) \defeq \big(f(S)\big)_{\circ}$.
% \end{definition}

\begin{definition}
    The category $\mrsemCat$ is $\minfunc_{\star}\big(\maxfunc_{\star}(\mpsemCat)\big)$. Concretely, let $F:m\rightarrow l$, $G:l\rightarrow n$ be arrows in $\mrsemCat$. Their sequential composition $F\seqcomp G$ is given by $F\seqcomp G \defeq \big\{ \{f\seqcomp g\mid f\in F',\ g\in G'\}^{\circ}\mid F'\in F,\ G'\in G\big\}_{\circ}$, where $f\seqcomp g$ is the sequential composition in $\mpsemCat$.
\end{definition}

% \begin{definition}[$\seqcomp$ in $\mrsemCat$]
% Let $F:m\rightarrow l$, $G:l\rightarrow n$ be arrows in $\mrsemCat$. Their sequential composition $F\seqcomp G$ is given by $F\seqcomp G \defeq \big\{ \{f\seqcomp g\mid f\in F',\ g\in G'\}^{\circ}\mid F'\in F,\ G'\in G\big\}_{\circ}$, where $f\seqcomp g$ is the sequential composition in $\mpsemCat$.
% \end{definition}

\begin{lemma}
    Let $S\subseteq \maxfunc_{\star}(\mpsemCat)(m, l)$ and $T\subseteq \maxfunc_{\star}(\mpsemCat)(l, n)$. Also let $S, T$ be finite sets. Then, $\{f\seqcomp g\mid f\in S, \text{ and }g\in T \}_{\circ} = \{f\seqcomp g \mid f\in S_{\circ}\text{ and }g\in T_{\circ} \}_{\circ}$.
\end{lemma}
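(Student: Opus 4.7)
The proof will proceed by double inclusion, dualising Lemma~\ref{lem:seqcomp_monotone_and_maximal} from maxima $(-)^{\circ}$ to minima $(-)_{\circ}$, now carried out at the level of $\maxfunc_{\star}(\mpsemCat)$. The prerequisite tool is the monotonicity of $\seqcomp$ in $\maxfunc_{\star}(\mpsemCat)$ with respect to $\leq_{\uncompfunc(-)}$, which I would first record explicitly: given $F\leq F'$ and $G\leq G'$, any representative $f\seqcomp g$ of an element of $F\seqcomp G$ is dominated, via monotonicity of $\seqcomp$ in $\mpsemCat$, by some $f'\seqcomp g'$ with $f'\in F', g'\in G'$, and the latter is in turn dominated by a maximal representative of $F'\seqcomp G'$.

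For the inclusion $(\subseteq)$, take $f_1\seqcomp g_1$ minimal in $\{f\seqcomp g\mid f\in S, g\in T\}$. By finiteness, pick $f_2\in S_{\circ}$ with $f_2\leq f_1$ and $g_2\in T_{\circ}$ with $g_2\leq g_1$. Monotonicity yields $f_2\seqcomp g_2\leq f_1\seqcomp g_1$; since $f_2\in S$ and $g_2\in T$, the minimality of $f_1\seqcomp g_1$ in the unrestricted set forces equality, so $f_1\seqcomp g_1=f_2\seqcomp g_2$ lies in $\{f\seqcomp g\mid f\in S_{\circ}, g\in T_{\circ}\}$. Its minimality there is immediate: any strictly smaller element $f'\seqcomp g'$ with $f'\in S_{\circ}\subseteq S, g'\in T_{\circ}\subseteq T$ would contradict the original minimality of $f_1\seqcomp g_1$.

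For the inclusion $(\supseteq)$, take $f_1\seqcomp g_1$ minimal in $\{f\seqcomp g\mid f\in S_{\circ}, g\in T_{\circ}\}$; since $S_{\circ}\subseteq S$ and $T_{\circ}\subseteq T$, it belongs to the underlying set of the other side. For an arbitrary $f_2\in S, g_2\in T$, finiteness provides $f_3\in S_{\circ}, g_3\in T_{\circ}$ with $f_3\leq f_2, g_3\leq g_2$; monotonicity then gives $f_3\seqcomp g_3\leq f_2\seqcomp g_2$. If $f_2\seqcomp g_2 < f_1\seqcomp g_1$ were to hold, then $f_3\seqcomp g_3 < f_1\seqcomp g_1$ would contradict the minimality of $f_1\seqcomp g_1$ in the restricted set.

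The main obstacle I expect is a subtle one: monotonicity of $\seqcomp$ delivers only a \emph{non-strict} inequality when passing from $f_1$ (resp.\ $g_1$) to a minimal element below it, so one cannot merely ``replace'' $f_1, g_1$ by strictly smaller minimal witnesses and read off a contradiction. Instead the minimality of the composition has to be leveraged to convert non-strict inequalities into equalities, thereby transporting the witnessing pair into $S_{\circ}\times T_{\circ}$ without altering the composite. The finiteness hypothesis on $S$ and $T$ is likewise indispensable: it is precisely what guarantees that every element dominates some minimal one, a fact used in both directions.
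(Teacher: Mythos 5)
Your proof is correct and takes essentially the same route as the paper's: a double inclusion using monotonicity of $\seqcomp$ in $\maxfunc_{\star}(\mpsemCat)$ and finiteness to pick minimal elements $f_2\leq f_1$, $g_2\leq g_1$, forcing $f_1\seqcomp g_1=f_2\seqcomp g_2$ in one direction and refuting any strictly smaller composite in the other. The paper's version is merely terser (its second inclusion is a one-line appeal to monotonicity), and the details you make explicit --- minimality in the restricted set being inherited from the full set, and the monotonicity of $\seqcomp$ coming from the change of base --- are exactly what it leaves implicit.
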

\begin{proof}
    Let $f_1\seqcomp g_1\in \{f\seqcomp g\mid f\in S, \text{ and }g\in T \}_{\circ}$. If $f_1\not \in S_{\circ}$, then $f_2 \in S_{\circ}$ such that $f_1 > f_2$. Similarly, if $g_1\not \in T_{\circ}$, then $g_2 \in T_{\circ}$ such that $g_1 > g_2$. Then, $f_1\seqcomp g_1 \geq f_2\seqcomp g_2$. Since  $f_1\seqcomp g_1$ is minimal, $f_1\seqcomp g_1 = f_2\seqcomp g_2$, thus $\{f\seqcomp g\mid f\in S, \text{ and }g\in T \}_{\circ} \subseteq \{f\seqcomp g \mid f\in S_{\circ}\text{ and }g\in T_{\circ} \}_{\circ}$. 

     Let $f_1\seqcomp g_1\in \{f\seqcomp g \mid f\in S_{\circ}\text{ and }g\in T_{\circ} \}_{\circ}$. Since the sequential composition $\seqcomp$ is monotone, $f_1\seqcomp g_1 \in \{f\seqcomp g\mid f\in S, \text{ and }g\in T \}_{\circ}$. Therefore, $\{f\seqcomp g \mid f\in S_{\circ}\text{ and }g\in T_{\circ} \}_{\circ}\subseteq \{f\seqcomp g\mid f\in S, \text{ and }g\in T \}_{\circ}$.
\end{proof}

\begin{lemma}
\label{lem:sum_monotone_and_minimal}
    Let $S\subseteq  \maxfunc_{\star}(\mpsemCat)(m, n)$ and $T\subseteq  \maxfunc_{\star}(\mpsemCat)(k, l)$, and $S, T$ be finite sets. Then, $\{f\oplus g\mid f\in S, \text{ and }g\in T \}_{\circ} = \{f\oplus g \mid f\in S_{\circ}\text{ and }g\in T_{\circ} \}_{\circ}$.
\end{lemma}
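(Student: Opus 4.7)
The plan is to mirror the pattern already used for the analogous sequential composition statement (Lemma~\ref{lem:seqcomp_monotone_and_minimal}): prove the two inclusions separately, using monotonicity of $\oplus$ in $\maxfunc_{\star}(\mpsemCat)$ together with finiteness of $S$ and $T$. The only ingredients needed are (i) monotonicity of $\oplus$, which is the analogue of the statement established for the $\maxfunc_{\star}(\mpsemCat)$ setting, and (ii) the elementary fact that in any finite poset, every element dominates a minimal element.

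For the forward inclusion $\{f\oplus g\mid f\in S, g\in T\}_{\circ}\subseteq \{f\oplus g\mid f\in S_{\circ}, g\in T_{\circ}\}_{\circ}$, I would take a minimal element $f_1\oplus g_1$ of the left-hand set. If $f_1\notin S_{\circ}$, finiteness of $S$ gives some $f_2\in S_{\circ}$ with $f_2\leq f_1$; otherwise set $f_2\defeq f_1$. Define $g_2\in T_{\circ}$ analogously with $g_2\leq g_1$. Monotonicity of $\oplus$ yields $f_2\oplus g_2\leq f_1\oplus g_1$, and since $f_2\oplus g_2$ still lies in the larger left-hand set, minimality of $f_1\oplus g_1$ forces equality. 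Hence $f_1\oplus g_1 = f_2\oplus g_2$ belongs to $\{f\oplus g\mid f\in S_{\circ}, g\in T_{\circ}\}$, and it remains minimal there because it is already minimal in the bigger set.

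For the reverse inclusion, take $f_1\oplus g_1$ with $f_1\in S_{\circ}$, $g_1\in T_{\circ}$, that is minimal in $\{f\oplus g\mid f\in S_{\circ}, g\in T_{\circ}\}$. I would argue by contradiction: suppose there exist $f_2\in S$ and $g_2\in T$ with $f_2\oplus g_2 < f_1\oplus g_1$. Using finiteness, pick $f'_2\in S_{\circ}$ with $f'_2\leq f_2$ and $g'_2\in T_{\circ}$ with $g'_2\leq g_2$. Then by monotonicity of $\oplus$, $f'_2\oplus g'_2\leq f_2\oplus g_2 < f_1\oplus g_1$, contradicting the assumed minimality of $f_1\oplus g_1$ in the smaller set. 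Hence no such $f_2,g_2$ exist and $f_1\oplus g_1$ is minimal in the larger set.

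I do not expect any serious obstacle here: the argument is essentially syntactic replacement of $\seqcomp$ by $\oplus$ in Lemma~\ref{lem:seqcomp_monotone_and_minimal}. The only substantive input is monotonicity of $\oplus$ in $\maxfunc_{\star}(\mpsemCat)$, which I would cite as part of the already-claimed monotonicity of $\oplus$ and $\trsyb$ on that category and use as a black box.
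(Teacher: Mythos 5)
Your proof is correct and is essentially the paper's own approach: the paper states this $\oplus$ version without proof, and its intended argument is exactly the one it gives for the analogous sequential-composition lemma (descend to minimal elements using finiteness, then use monotonicity plus antisymmetry of the order on $\maxfunc_{\star}(\mpsemCat)$), which you reproduce with $\seqcomp$ replaced by $\oplus$. Your reverse inclusion merely spells out the contradiction that the paper compresses into ``since $\seqcomp$ is monotone,'' and the monotonicity of $\oplus$ you invoke is indeed available as a stated proposition in the paper.
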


\begin{lemma}
\label{lem:trace_monotone_and_minimal}
    Let $S\subseteq  \maxfunc_{\star}(\mpsemCat)(l+m, l+n)$, and $S$ be finite sets. Then, $\{\trace{l}{m}{n}{f}\mid f\in S\}_{\circ} = \{\trace{l}{m}{n}{f} \mid f\in S_{\circ} \}_{\circ}$.
\end{lemma}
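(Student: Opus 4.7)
The proof proceeds by mutual inclusion and closely mirrors the arguments given for \cref{lem:seqcomp_monotone_and_maximal} and the min-variants \cref{lem:sum_monotone_and_minimal}. The key ingredient is that the trace operator $\trsyb$ on $\maxfunc_{\star}(\mpsemCat)$ has just been shown to be monotone (in the proposition preceding these lemmas), so strict comparisons between arrows propagate (non-strictly) through $\trsyb$.

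For the inclusion $\subseteq$, I would take an arbitrary $\trace{l}{m}{n}{f_1} \in \{\trace{l}{m}{n}{f}\mid f\in S\}_{\circ}$. If $f_1 \in S_{\circ}$ we are done, so suppose $f_1 \notin S_{\circ}$. Since $S$ is finite, there exists $f_2 \in S_{\circ}$ with $f_2 < f_1$, and by monotonicity of $\trsyb$, $\trace{l}{m}{n}{f_2} \leq \trace{l}{m}{n}{f_1}$. Minimality of $\trace{l}{m}{n}{f_1}$ in the larger set forces equality $\trace{l}{m}{n}{f_1} = \trace{l}{m}{n}{f_2}$. It then remains to check that $\trace{l}{m}{n}{f_2}$ is minimal in the smaller set $\{\trace{l}{m}{n}{f} \mid f\in S_{\circ}\}$; but this follows immediately because that set is a subset of $\{\trace{l}{m}{n}{f}\mid f\in S\}$, in which $\trace{l}{m}{n}{f_2}$ is already minimal.

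For the inclusion $\supseteq$, I would take $\trace{l}{m}{n}{f_1} \in \{\trace{l}{m}{n}{f} \mid f\in S_{\circ}\}_{\circ}$ and show it lies in $\{\trace{l}{m}{n}{f}\mid f\in S\}_{\circ}$. Pick any $f_2 \in S$; by finiteness of $S$ there exists $f_3 \in S_{\circ}$ with $f_3 \leq f_2$, and monotonicity of $\trsyb$ gives $\trace{l}{m}{n}{f_3} \leq \trace{l}{m}{n}{f_2}$. If we had $\trace{l}{m}{n}{f_2} < \trace{l}{m}{n}{f_1}$, then transitivity would yield $\trace{l}{m}{n}{f_3} < \trace{l}{m}{n}{f_1}$ with $f_3 \in S_{\circ}$, contradicting the assumed minimality of $\trace{l}{m}{n}{f_1}$ in $\{\trace{l}{m}{n}{f} \mid f\in S_{\circ}\}$. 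Hence no arrow of the form $\trace{l}{m}{n}{f_2}$ with $f_2 \in S$ strictly dominates $\trace{l}{m}{n}{f_1}$ from below, which is exactly the desired minimality.

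The proof is essentially a routine monotonicity-plus-finiteness bookkeeping argument, so there is no serious obstacle; the one place where care is needed is to keep straight the direction of the order (we are taking $(\cdot)_{\circ}$, i.e.\ \emph{minima}, whereas $\maxfunc_{\star}$ is built with $(\cdot)^{\circ}$), and to note explicitly the two facts we reuse: finiteness of $S$ to get a dominating/dominated element in $S_{\circ}$, and monotonicity of $\trsyb$ to transport this comparison through the trace.
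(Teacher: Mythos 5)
Your argument is correct and is exactly the monotonicity-plus-finiteness argument the paper intends: it mirrors the proofs given for \cref{lem:seqcomp_monotone_and_maximal} and its minimal counterpart, using the already-established monotonicity of $\trsyb$ on $\maxfunc_{\star}(\mpsemCat)$ together with finiteness of $S$ to obtain a dominated element of $S_{\circ}$, and inheritance of minimality under passing to subsets. No gaps; the omitted remark that $\trace{l}{m}{n}{f_1}$ with $f_1\in S_{\circ}$ is minimal in the smaller set follows from the same subset observation you already make for $f_2$.
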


\begin{proposition}
    $\mrsemCat$ is a TSMC.
    \qed
\end{proposition}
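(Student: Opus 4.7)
The plan is to mimic the strategy already used for $\maxfunc_{\star}(\mpsemCat)$: show that every TSMC axiom in $\mrsemCat$ reduces, via the ``commutation with $(\cdot)_{\circ}$'' lemmas just proved, to a corresponding axiom in $\maxfunc_{\star}(\mpsemCat)$, which is already known to be a TSMC. First I would unfold the definition: since $\mrsemCat \defeq \minfunc_{\star}(\maxfunc_{\star}(\mpsemCat))$, an arrow $F\colon m\to n$ in $\mrsemCat$ is a finite set of elements of $\maxfunc_{\star}(\mpsemCat)(m,n)$, and the three operations $\seqcomp$, $\oplus$, $\trsyb$ are defined elementwise followed by an application of $(\cdot)_{\circ}$ (as shown explicitly for $\seqcomp$ in the definition of $\mrsemCat$, and analogously for $\oplus$ and $\trsyb$).

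Next I would verify the symmetric monoidal structure. For identities, constants (unit object $0$, swap $\sym$) and the associator/unitor isomorphisms, we simply take the singleton set $\{h\}$ for the corresponding constant $h$ in $\maxfunc_{\star}(\mpsemCat)$; $(\cdot)_{\circ}$ acts trivially on singletons. For the coherence equations (pentagon, triangle, braiding hexagon, bifunctoriality of $\oplus$, associativity/unitality of $\seqcomp$), I would apply the lemmas on $\{f\seqcomp g\mid\ldots\}_{\circ}$ and $\{f\oplus g\mid\ldots\}_{\circ}$ to replace each compound expression by the analogous expression in $\maxfunc_{\star}(\mpsemCat)$, where the axiom is already known to hold. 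This reduces each coherence equation to a pointwise equality in $\maxfunc_{\star}(\mpsemCat)$ that we then appeal to.

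For the trace operator, I would similarly verify each Joyal--Street--Verity axiom (naturality in $m$, naturality in $n$, dinaturality in $l$, vanishing, superposing, yanking). Using the trace lemma $\{\trace{l}{m}{n}{f}\mid f\in S\}_{\circ}=\{\trace{l}{m}{n}{f}\mid f\in S_{\circ}\}_{\circ}$, together with the sequential and parallel versions, I can push $(\cdot)_{\circ}$ past all occurrences of $\trsyb$, $\seqcomp$, $\oplus$ appearing in each axiom. The crucial ingredient here is the monotonicity of $\seqcomp$, $\oplus$, $\trsyb$ in $\maxfunc_{\star}(\mpsemCat)$ (already established), because each ``commutation with $(\cdot)_{\circ}$'' lemma implicitly relied on monotonicity (if $f_1\le f_2$ then the composite with $f_1$ is dominated by the composite with $f_2$, so minima are preserved under the operations). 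Once both sides of each axiom are rewritten as $\{\cdot\}_{\circ}$ applied to the \emph{same} set of elements of $\maxfunc_{\star}(\mpsemCat)(-,-)$, the equality follows from the corresponding axiom in $\maxfunc_{\star}(\mpsemCat)$.

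The main obstacle will be dinaturality in $l$, which mixes $\seqcomp$ and $\trsyb$ in a nontrivial way, so all three commutation lemmas (for $\seqcomp$, $\oplus$, $\trsyb$) must be applied in concert. The other delicate point is that $\minfunc$ lands in $\Sets$ rather than $\Ord$, so we cannot appeal to an $\Ord$-enriched change-of-base theorem uniformly; this is precisely why the pointwise commutation lemmas were proved separately. Granted these lemmas, the remaining verification is a routine but tedious diagram-chase reducing each equation to its counterpart in $\maxfunc_{\star}(\mpsemCat)$.
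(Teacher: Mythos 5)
Your proposal matches the paper's intended argument: the paper states this proposition with no written proof, relying exactly on the preceding commutation lemmas for $(\cdot)_{\circ}$ with respect to $\seqcomp$, $\oplus$, and $\trsyb$ (and the monotonicity of these operations in $\maxfunc_{\star}(\mpsemCat)$) to reduce each TSMC axiom of $\mrsemCat$ to the corresponding axiom of $\maxfunc_{\star}(\mpsemCat)$, just as you describe. Your additional observations (constants as singletons, and the fact that $\minfunc$ landing in $\Sets$ is why pointwise lemmas are needed rather than a uniform enriched change-of-base theorem) are consistent with the paper's setup, so the proposal is correct and essentially identical in approach.
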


Finally, we introduce the \emph{meager rightward winning-position functor}.
\begin{definition}
The \emph{rightward winning-position functor} $\mrwpFunctor$ is defined as follows: the mapping on objects is given by $\mrwpFunctor(m) \defeq m$, and for an arrow $\mpg{A}\in \roMPG(m, n)$, we define $\mrwpFunctor(\mpg{A})$ by 
\begin{align*}
    \mrwpFunctor(\mpg{A})\defeq \Big\{ \big\{  \mpwpFunctor\big(\sybpg{\mpg{A}}{\tau^{\eve}}{\tau^{\adam}}\big)\mid \tau_{\adam}\in  \stra{\adam}{\mpg{A}}\big\}^{\circ}\mid \tau_{\eve}\in \stra{\eve}{\mpg{A}}\Big\}_{\circ}.
\end{align*}
\end{definition}

\end{document}